\DeclareMathOperator*{\argmax}{arg\,max}
\newcommand{\metaunderline}[1]{#1}
\newcommand{\SUB}[2]{$\{#1, #2\}$\text{-SUB}}
\newcommand{\USW}{\texttt{USW}\xspace}
\newcommand{\MNW}{\texttt{MNW}\xspace}
\newcommand{\MMS}{\texttt{MMS}\xspace}
\newcommand{\R}{\mathbb{R}}
\newcommand{\Z}{\mathbb{Z}}
\newcommand{\Gain}{\texttt{Gain}}
\renewcommand{\cal}[1]{\mathcal{#1}}
\newcommand{\lex}{\texttt{lex}}
\newcommand{\lexmin}{\texttt{leximin}}
\newcommand{\pWel}{{p\texttt{-W}}}
\newtheorem{theorem}{Theorem}[section]
\newtheorem{lemma}[theorem]{Lemma}
\newtheorem{prop}[theorem]{Proposition}
\newtheorem*{theorem*}{Theorem}
\newcommand{\Uplambda}{\mathbin{\mathrm \Lambda}}
\theoremstyle{definition}
\theoremstyle{definition}
\newtheorem{definition}[theorem]{Definition}
\newtheorem{obs}[theorem]{Observation}
\newtheorem{example}[theorem]{Example}
\title{Dividing Good and Better Items Among Agents with Bivalued Submodular Valuations}
\author{Cyrus Cousins, Vignesh Viswanathan and Yair Zick \\
University of Massachusetts, Amherst \\
{\texttt{\{cbcousins, vviswanathan, yzick\}@umass.edu}}}
\date{}
\begin{document}

\maketitle

\begin{abstract}
We study the problem of fairly allocating a set of indivisible goods among agents with {\em bivalued submodular valuations} --- each good provides a marginal gain of either $a$ or $b$ ($a < b$) and goods have decreasing marginal gains. 
This is a natural generalization of two well-studied valuation classes --- bivalued additive valuations and binary submodular valuations. 
We present a simple sequential algorithmic framework, based on the recently introduced Yankee Swap mechanism, that can be adapted to compute a variety of solution concepts, including max Nash welfare (MNW), leximin and  $p$-mean welfare maximizing allocations when $a$ divides $b$. 
This result is complemented by an existing result on the computational intractability of MNW and leximin allocations when $a$ does not divide $b$. 
We show that MNW and leximin allocations guarantee each agent at least $\frac25$ and $\frac{a}{b+2a}$ of their maximin share, respectively, when $a$ divides $b$. 
We also show that neither the leximin nor the MNW allocation is  guaranteed to be envy free up to one good (EF1). 
This is surprising since for the simpler classes of bivalued additive valuations and binary submodular valuations, MNW allocations are known to be envy free up to \emph{any} good (EFX). 
\end{abstract}

\section{Introduction}\label{sec:intro} 

Fair allocation of indivisible items has gained significant attention in recent years. 
The problem is simple: we need to assign a set of indivisible \emph{items} to a set of \emph{agents}. 
Each agent has a subjective preference over the bundle of items they receive. 
Our objective is to find an allocation that satisfies certain \emph{fairness} and \emph{efficiency} (or more generally \emph{justice}) criteria. 
For example, some allocations maximize the product of agents' utilities, whereas others guarantee that no agent prefers another agent's bundle to their own. The fair allocation literature focuses on the existence and computation of allocations that satisfy a set of \emph{justice criteria} (see \cite{aziz2022fairdivsurvey} for a recent survey). 
For example, \citet{Plaut2017EFX} focus on computing envy free up to any good (EFX) allocations while \citet{Caragiannis2016MNW} focus on computing max Nash welfare allocations. 
Indeed, as these papers show, computing `fair' allocations without any constraint on agent valuations is an intractable problem for most justice criteria \citep{Plaut2017EFX, Caragiannis2016MNW, Kurokawa2018dichotomousallocation}.

This has led to a recent systematic attempt to ``push the computational envelope'' --- identify simpler classes of agent valuations for which there exist efficiently computable allocations satisfying multiple fairness and efficiency criteria. 
Several positive results are known when agents have \emph{binary submodular} valuations --- that is, their valuations exhibit diminishing returns to scale, and the added benefit of any item is either $0$ or $1$ \citep{Babaioff2021Dichotomous,Barman2021MRFMaxmin,Barman2021TruthfulAF,benabbou2021MRF,viswanathan2022yankee,viswanathan2022generalyankee}. 
Other works offer positive results when agents have \emph{bivalued} additive preferences, i.e., each item has a value of either $a$ or $b$, and agents' value for a bundle of items is the sum of their utility for individual items \citep{akrami2022mnw,amantidis2021mnw,ebadian2022bivaluedchores,garg2021bivaluedadditive}. 
These results are encouraging in the face of known intractability barriers. Moreover, they offer practical benefits: binary submodular valuations naturally arise in settings such as course allocation \citep{viswanathan2022yankee}, shift allocation \citep{Babaioff2021Dichotomous} and in public housing assignment \citep{benabbou2019group}. 
We take this line of work a step further and study a generalization of both bivalued additive valuations and binary submodular valuations. 

\subsection{Our Contributions}\label{sec:contrib}
When agents have binary submodular valuations,  \emph{leximin} allocations are known to satisfy several desirable criteria. 
 An allocation is leximin (or more precisely, leximin dominant) if it maximizes the welfare of the worst-off agent, then subject to that the welfare of the second worst-off agent and so on. 
 When agents have binary submodular valuations, a leximin allocation maximizes utilitarian and Nash social welfare, is envy-free up to any item (EFX), and offers each agent at least $\frac12$ of their maximin share \citep{Babaioff2021Dichotomous}. 
 Furthermore, there exists a simple sequential allocation mechanism that computes a leximin allocation \citep{viswanathan2022yankee}. 

We consider settings where agents have \emph{bivalued submodular} valuations. 
That is, the marginal contribution of any item is either $a$ or $b$, and marginal gains decrease as agents gain more items. 
We assume that $a$ divides $b$, or (by rescaling) that $a =1$ and $b$ is a positive integer. 
When agents have bivalued submodular valuations, leximin allocations no longer satisfy multiple fairness and efficiency guarantees (see \Cref{ex:simple-leximin-maxnash}). 
Thus, different justice criteria cannot be satisfied by a single allocation.
To address this, we present a general, yet surprisingly simple algorithm (called Bivalued Yankee Swap) that efficiently computes allocations for a broad number of justice criteria. 
The algorithm (\Cref{algo:bivalued-yankee-swap}) is based on the Yankee Swap protocol proposed by \citet{viswanathan2022yankee}: we start with all items unassigned. 
At every round we select an agent (based on a selection criterion $\phi$). 
If the selected agent can pick an unassigned item that offers them a high marginal benefit, they do so and we move on. Otherwise, we check whether they can \emph{steal} a high-value item from another agent. 
We allow an agent to steal a high-value item if the agent who had an item stolen from them can recover their utility by either taking an unassigned item or stealing an item from another agent. 
This results in a \emph{transfer path} where the initiating agent increases their utility by $b$, every other agent retains the same utility, and one item is removed from the pile of unassigned items. 
If no such path exists, the agent is no longer allowed to `play' for high-value items, and can either take low-value items or none at all. 

By simply modifying the agent selection criterion $\phi$, this protocol outputs allocations that satisfy a number of `acceptable' justice criteria. We think of justice criteria as ways of comparing allocations; for example, an allocation $A$ is better than an allocation $B$ according to the Nash-Welfare criterion if it either has fewer agents with zero utilities, or if the product of agent utilities under $A$ is greater than the product of agent utilities under $B$. 
A justice criterion $\Psi$ is acceptable if (informally), it satisfies a notion of Pareto dominance, and if it admits a selection criterion (also referred to as a \emph{gain function}) $\phi$ that decides which agent should receive an item in a manner consistent with $\Psi$ (see Section \ref{sec:bi-ys-sufficient} for details). 
Our main result, informally stated, is:
\begin{theorem*}[Informal]
Suppose that agents have bivalued submodular valuations. Given an acceptable justice criterion $\Psi$, there exists a simple, sequential allocation mechanism that computes an allocation maximizing $\Psi$. 
The resulting allocation lexicographically dominates any other $\Psi$-maximizing allocation. 
\end{theorem*}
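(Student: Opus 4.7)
The plan is to establish the informal theorem by verifying three properties of the Bivalued Yankee Swap algorithm: (a) it terminates in polynomially many rounds, (b) its output maximizes $\Psi$, and (c) among all $\Psi$-maximizing allocations its output is lexicographically greatest under the ordering induced by $\phi$. I would first set up invariants for the algorithm's state, which consists of a current partial allocation together with the set of agents still eligible to compete for high-value ($b$-marginal) gains. Each round strictly decreases at least one of two monotone potentials: either an unassigned item leaves the pool (when a transfer path succeeds and some agent gains $b$), or an agent leaves the pool of high-value competitors (when no such path exists and they fall back to $a$-value items or nothing). Both potentials are bounded by the input size, giving polynomial termination.

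For the $\Psi$-optimality claim, I would exploit the two-phase structure that the above monotonicity induces: a ``$b$-phase'' that greedily allocates $b$-marginal gains via transfer paths, followed by an ``$a$-phase'' that distributes the remaining items at $a$-marginal value. The central technical tool would be an exchange lemma: a successful transfer path initiated by agent $i$ changes the utility vector only by increasing $u_i$ by $b$ and consuming one unassigned item, leaving all other agents' utilities unchanged. Combined with the Pareto-monotonicity half of acceptability, this forces every $\Psi$-optimal allocation to deliver $b$-gains to a common canonical subset of agents. Because the gain function $\phi$ is, by definition of acceptability, consistent with $\Psi$, greedily selecting the $\phi$-maximizer at each round identifies this subset in the correct prioritized order, so the algorithm's output attains the $\Psi$-optimum.

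The lex-dominance claim then follows by induction on the round number. In each round the algorithm grants the current $\phi$-maximal agent the maximum utility gain achievable from the state; a swap argument shows that no alternative $\Psi$-optimal allocation can grant more to the same agent without either coinciding with the algorithm up to that point or violating acceptability. Summing over rounds, the consistency of $\phi$ with $\Psi$ converts these local greedy choices into a global lexicographic optimum over all $\Psi$-maximizers.

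The main obstacle I expect is the exchange lemma itself. In the binary submodular setting of \citet{viswanathan2022yankee} a transfer path corresponds cleanly to an augmenting path in the matroid of $1$-marginal allocations; in the bivalued case we must also track whether each agent's current marginal is $a$ or $b$, and show that a transfer path never causes some other agent's marginal to silently collapse from $b$ to $a$. Proving that the set of ``$b$-reachable'' allocations is closed under the appropriate swaps, and that it matches the family of $\Psi$-optimal allocations modulo the $a$-phase, is the delicate structural step on which both the optimality and the lex-dominance arguments rest.
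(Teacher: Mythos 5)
Your high-level skeleton --- two monotone potentials for termination, an exchange lemma as the engine, a comparison against an optimum for correctness --- matches the paper's plan, but there are two substantive gaps that would derail the argument if you tried to carry it out.

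First, the ``two-phase structure'' you build your optimality argument on is false. Bivalued Yankee Swap does \emph{not} allocate all the $c$-valued gains and then distribute $1$-valued items afterward; it interleaves them. At each round the algorithm compares $\Gain_c$ (best improvement from giving some agent in $U$ a $c$-marginal item via a transfer path) against $\Gain_1$ (best improvement from giving some agent outside $U$ a $1$-marginal item), and whichever is larger wins. In Example~\ref{ex:simple-leximin-maxnash} for \MNW, agent~2 takes a $c$-item, then agent~1 takes a $1$-item, then agent~2 takes another $c$-item, alternating. An argument that tries to analyze a ``$b$-phase'' in isolation and then bolt on an ``$a$-phase'' has no invariant to hold onto; the entire difficulty is managing the interleaving, which the paper does by tracking the decomposition $X = X^c \cup X^1$ throughout and showing provisionally allocated $1$-items ($X^1$) never block or corrupt later transfer paths on $X^c$ (Lemma~\ref{lem:xc-clean}, Proposition~\ref{prop:x-reduction}).

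Second, you correctly flag the exchange lemma as the delicate step --- that a transfer path must not ``silently collapse'' another agent's $b$-marginal to $a$ --- but you offer no way to resolve it, and without one the rest of the plan is vacuous. The paper's resolution is the reduction in Lemma~\ref{lem:beta-mrf}: the function $\beta_i(S)$ counting the number of $c$-valued items in $S$ is itself a \emph{binary submodular} (matroid-rank) function. This lets the algorithm run path augmentation in the exchange graph of the \emph{clean} allocation $X^c$ with respect to $\{\beta_i\}$, importing Lemmas~\ref{lem:path-augmentation} and \ref{lem:augmentation-sufficient} wholesale from the binary submodular literature and guaranteeing exactly the preservation property you worry about. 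This also requires showing a clean/supplementary decomposition always exists and can be maintained (Lemma~\ref{lem:decompose}), which is nontrivial since, unlike the additive case, the decomposition is not unique (Example~\ref{ex:docomposition}). Your phrase ``canonical subset of agents'' gestures at the right idea, but the paper formalizes it with a three-tier \emph{domination} order (first sorted utility of $X^c$, then raw utility of $X^c$, then raw utility of $X$) and proves optimality by contradiction against a dominating $\Psi$-maximizer (Lemmas~\ref{lem:y-not-psi-maximizing}--\ref{lem:yc-always-greater-than-xc}), not by round-by-round induction. Without the $\beta_i$ machinery, the clean/supplementary decomposition, and the domination notion, the argument does not close.
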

Several well known justice criteria are acceptable. For every acceptable justice criterion, our result immediately implies a simple algorithmic framework that computes an allocation maximizing that justice criterion: one needs to simply implement \Cref{algo:bivalued-yankee-swap} with the appropriate gain function $\phi$. 
The key challenge is identifying an appropriate gain function to be used. 
\Cref{table:results-summary} summarizes some of the key justice criteria for which one can apply bivalued Yankee Swap, and the gain function $\phi$ required. Informally, $\phi$ takes as input an allocation $X$, an agent $i$ and an added value $d$; it measures the `improvement' to the justice criterion if agent $i$ receives an additional utility of $d$ under the allocation $X$. 
\begin{table}
\centering
\begin{tabularx}{0.8\textwidth} {>{\hsize=.8\hsize}XX}
\toprule
\textbf{Justice Criterion}&\textbf{Gain Function $\phi(X,i,d)$}\\
\toprule
 Nash Welfare (\Cref{thm:max-nash-welfare}) & $\frac{v_i(X_i) + d}{v_i(X_i)}$ \\
 \midrule
 Leximin (\Cref{thm:leximin})& $-(b+a)v_i(X_i)+d$\\
 \midrule
 $p$-mean welfare (\Cref{thm:p-mean-welfare})& $\texttt{sign}(p) [(v_i(X_i) + d)^p - v_i(X_i)^p]$ \\
 \bottomrule
\end{tabularx}
\caption{A summary of justice criteria for which bivalued Yankee Swap (\Cref{algo:bivalued-yankee-swap}) computes an optimal solution. $v_i$ and $X_i$ denote the valuation function and the allocated bundle of agent $i$ respectively.}
\label{table:results-summary}
\end{table}

To complement our algorithmic results, we further analyze Nash welfare maximizing and leximin allocations. We show that neither are guaranteed to even be envy free up to one good (EF1). 
This result shows that bivalued submodular valuations signify a departure from both binary submodular and bivalued additive valuations, where the max Nash welfare allocation is guaranteed to be envy free up to any good (EFX). 
While envy-freeness is not guaranteed under bivalued submodular valuations, we do show that max Nash welfare and leximin allocations offer approximate \MMS (maximin share) guarantees to agents. 
Specifically, we show that max Nash welfare allocations and leximin allocations are $\frac25$-\MMS and $\frac{a}{b+2a}$-\MMS respectively.


\subsection{Our Techniques}
Our theoretical analysis heavily relies on the function enumerating the number of high valued goods in an agent's bundle. 
We show (Lemma \ref{lem:beta-mrf}) that the number of high-valued goods in an agent's bundle corresponds to a binary submodular function. This observation allows us to use path augmentation techniques from the binary submodular functions literature to manipulate the number of high valued goods that each agent has. 
However, this technical observation is insufficient. 
To use path augmentation, agent bundles must not contain any low valued goods. 
To ensure this, we {\em decompose} allocations into {\em clean} allocations (a term we borrow from \citet{benabbou2019group}) and {\em supplementary} allocations. 
The clean allocation consists of all the high valued goods, and the supplementary allocation consists of all the low valued goods. 
This decomposition is easily computed when agents have bivalued additive valuations \citep{akrami2022mnw}. 
However, under bivalued submodular valuations, these decompositions need not even be unique (see Example \ref{ex:docomposition}). 
Fortunately, at least one decomposition is guaranteed to exist (see Lemma \ref{lem:decompose}).

We use this decomposition and compute path transfers solely for the clean allocation. 
Showing these path transfers are sufficient requires some casework-heavy arguments, which carefully show that bivalued Yankee Swap assigns assigns goods optimally. 

Our \MMS guarantee results utilize non-trivial (but arguably more elegant) counting arguments. 
We compare the number of high valued and low valued goods that an agent $i$ receives in the max Nash welfare allocation with the number of high and low valued goods that agent $i$ receives in a leximin allocation where all agents have the same valuation function as $i$. 
Indeed, this itself is not straightforward since there can be multiple max Nash welfare allocations, each allocating a different number of high and low valued goods to agents. 
To circumvent this, we introduce the notion of weak dominance which allows us to carefully choose an allocation where all the high valued goods are evenly distributed.

The comparison of high and low valued goods follows from counting arguments, 
that bound the difference in the number of high and low valued goods in both allocations.
These bounds are then used to obtain an \MMS guarantee for agent $i$, exploiting the crucial observation that agent $i$ obtains at least their \MMS share in the leximin allocation (where all agents have the same valuation as $i$). 
We use a very similar technique to show \MMS guarantees for leximin allocations as well. 

\subsection{Related Work}\label{sec:rel-work}
Our work is closely related to works on fair allocation under matroid rank (binary submodular) valuations. 
The problem of fair allocation under matroid rank valuations is reasonably well studied and has seen a surprising number of positive results. 
\citet{benabbou2021MRF} show that utilitarian welfare maximizing envy free up to one good (EF1) allocations always exist and can be computed efficiently.
\citet{Barman2021MRFMaxmin} show that an \MMS allocation is guaranteed to exist and can be computed efficiently. \citet{Babaioff2021Dichotomous} showed that a Lorenz dominating allocation (which is both leximin and maximizes Nash welfare) is guaranteed to exist and can be computed efficiently. 
More recently, \citet{viswanathan2022generalyankee} present a general framework (called General Yankee Swap) that can be used to compute weighted notions of fairness such as weighted max Nash welfare and weighted leximin efficiently, in addition to several others.
Almost all of the papers in this field use path transfers in their algorithms \citep{viswanathan2022yankee, viswanathan2022generalyankee, Babaioff2021Dichotomous, Barman2021MRFMaxmin}; a technique which we exploit as well.
Our algorithm (called Bivalued Yankee Swap) is closely related to the General Yankee Swap framework \citep{viswanathan2022generalyankee}. 
However, due to the complexity of bivalued submodular valuations, the analysis in our setting is more involved.

Our paper also builds on results for fair allocation under bivalued additive valuations. 
\citet{amantidis2021mnw} show that the max Nash welfare allocation is envy free up to any good (EFX) when agents have bivalued additive valuations. \citet{garg2021bivaluedadditive} showed that an EFX and Pareto optimal allocation can be computed efficiently under bivalued preferences.
\citet{akrami2022mnw} present an algorithm to compute a max Nash welfare allocation efficiently when $a$ divides $b$. 
This work is arguably the closest to ours. 
While their algorithm is different, the essential technical ingredients are the same --- their algorithm uses transfer paths and decompositions as well. 
However, their analysis is restricted to bivalued additive valuations and the max Nash welfare allocation. 
Our results generalize their work both in terms of the class of valuations and the justice criteria considered.

In a recent follow up work, \citet{akrami2022halfintegers} present a polynomial time algorithm to compute a max Nash welfare allocation for the case where $a$ divides $2b$. This case turns out to be surprisingly more complicated than the case where $a$ divides $b$, requiring a different set of techniques to manipulate the high and low valued goods.

Bivalued additive valuations have also been studied in the realm of chores. \citet{garg2022bivaluedchores} and \citet{ebadian2022bivaluedchores} present efficient algorithms to compute an EF1 and Pareto optimal allocation when agents have bivalued preferences.

\section{Preliminaries}
We use $[t]$ to denote the set $\{1, 2, \dots, t\}$. For ease of readability, we replace $A \cup \{g\}$ and $A \setminus \{g\}$ with $A + g$ and $A - g$ respectively. 

We have a set of $n$ {\em agents} $N = [n]$ and $m$ {\em goods} $G = \{g_1, g_2, \dots, g_m\}$. Each agent $i \in N$ has a {\em valuation function} $v_i : 2^G \mapsto \R_+ \cup \{0\}$; $v_i(S)$ specifies agent $i$'s value for the bundle $S$. 
Given a valuation function $v$, we let $\Delta_v(S,g) = v(S+g) - v(S)$ denote the marginal utility of the item $g$ to the bundle $S$ under $v$. When clear from context, we write $\Delta_i(S,g)$ instead of $\Delta_{v_i}(S,g)$ to denote the marginal utility of giving the item $g$ to agent $i$ given that they have already been assigned the bundle $S$.

Given $a,b \in \R_+$ such that $a < b$, we say that $v_i$ is an $(a, b)$-{\em bivalued submodular valuation} if $v_i$ satisfies the following three properties: 
\begin{inparaenum}[(a)]
\item $v_i(\emptyset) = 0$,
\item $\Delta_i(S, g) \in \{a, b\}$ for all $S \subseteq G$ and $g \in G \setminus S$, and 
\item $\Delta_i(S, g) \ge \Delta_i(T, g)$ for all $S \subseteq T \subseteq G$ and $g \in G \setminus T$.
\end{inparaenum}
We use \SUB{a}{b} to denote $(a, b)$-bivalued submodular valuation functions.

A lot of our analysis uses the class of  \SUB 01 valuations. 
\SUB01 valuations are also known as {\em binary submodular valuations}, and have been extensively studied
(see Section \ref{sec:rel-work} for a discussion). 
When $a = 0$, the valuation function $v_i$ is essentially a scaled binary submodular valuation; specifically, $\frac1b v_i(\cdot)$ is a \SUB01 valuation. 
Existing results under \SUB01 valuations trivially extend to \SUB0b valuations as well, and generally offer stronger guarantees than the ones offered in this work. Thus, we focus our attention on the case where $b > a >0$.

For ease of readability, we scale valuations by $a$. 
Under this scaling, all agents have \SUB{1}{\frac{b}{a}} valuations. 
We further simplify notation and replace the value $\frac{b}{a}$ by $c$; under this notation, all agents have \SUB{1}{c} valuations where $c > 1$. 
Note that when $a$ divides $b$, $c$ is a {\em natural number}.

An {\em allocation} $X = (X_0, X_1, X_2, \dots, X_n)$ is a partition of $G$ into $n+1$ bundles. 
Each $X_i$ denotes the bundle allocated to agent $i$; 
$X_0$ denotes the unallocated goods. 
We refer to the value $v_i(X_i)$ as the \emph{utility} of agent $i$ under the allocation $X$.
For ease of analysis, we sometimes refer to $0$ as a dummy agent with valuation function $v_0(S) = c|S|$ and allocated bundle $X_0$. This is trivially a \SUB{1}{c} function. 
None of our justice criteria consider agent $0$.

A vector $\vec y \in \R^n_{\ge 0}$ is said to {\em lexicographically dominate} another vector $\vec z \in \R^n_{\ge 0}$ if there exists some $k \in [n]$ such that for all $j \in [k - 1]$, $y_j = z_j$ and $ y_k > z_k$. This is denoted by $\vec y \succ_{\lex} \vec z$.
We define the {\em utility vector} of an allocation $X$  (denoted by $\vec u^X$) as the vector $(v_1(X_1), v_2(X_2), \dots, v_n(X_n))$. An allocation $X$ is said to {\em lexicographically dominate} another allocation $Y$ if $\vec u^X \succ_{\lex} \vec u^Y$.
 
\subsection{Justice Criteria}
We consider three central justice criteria.

\noindent\textbf{Leximin:} 
An allocation $X$ is {\em leximin} if it maximizes the least utility in the allocation and subject to that, maximizes the second lowest utility and so on. This can be formalized using the {\em sorted utility vector}. 
The sorted utility vector of an allocation $X$ (denoted by $\vec s^X$) is defined as the utility vector $(v_1(X_1), v_2(X_2), \dots, v_n(X_n))$ sorted in ascending order. 
An allocation $X$ is leximin if, for no other allocation $Y$, we have $\vec s^Y \succ_{\lex} \vec s^X$.

\noindent\textbf{Max Nash Welfare (\MNW):} Let the set of agents who receive a positive utility under an allocation $X$ be denoted by $P_X$. An allocation $X$ maximizes Nash welfare if it first maximizes the number of agents who receive a positive utility $|P_X|$ and subject to that, maximizes the value $\prod_{i \in P_X} v_i(X_i)$ \citep{Caragiannis2016MNW}.

\noindent\textbf{$p$-Mean Welfare:} 
The $p$-mean welfare of an allocation $X$ is defined as $(\frac1n \sum_{i \in N} v_i(X_i)^p)^{1/p}$ for $p \le 1$. 
Since this value is undefined when $v_i(X_i) = 0$ for any $i \in N$ and $p < 0$, we modify the definition slightly when defining a max $p$-mean welfare allocation. 
We again denote $P_X$ as the set of agents who receive a positive utility under the allocation $X$. 
An allocation $X$ is said to be a max $p$-mean welfare allocation for any $p \le 1$ if the allocation maximizes the size of $P_X$ and subject to that, maximizes $\left(\frac1n \sum_{i \in P_X} v_i(X_i)^p\right)^{1/p}$.

The $p$-mean welfare functions have been extensively studied in economics \citep{moulin2004fair}, fair machine learning \citep{heidari2018fairness,cousins2021bounds,cousins2021axiomatic,cousins2022uncertainty}, and, more recently, fair allocation \citep{barman2020tight}. When $p$ approaches $-\infty$, the $p$-mean welfare corresponds to the leximin objective and when $p$ approaches $0$, the $p$-mean welfare corresponds to the max Nash welfare objective.



Lexicographic dominance, Nash welfare and $p$-mean welfare are three ways of comparing allocations. More generally, we can think of a {\em justice criterion} $\Psi$ as a way of comparing two allocations $X$ and $Y$. Notice that we are specifically comparing the utility vectors of allocations. Therefore, we say allocation $X$ is better than $Y$ according to $\Psi$ if $\vec u^X \succ_{\Psi} \vec u^Y$. To ensure that there indeed exists a $\Psi$ maximizing allocation, we require that $\succeq_{\Psi}$ be a total ordering over all allocations. An allocation $X$ is maximal with respect to $\Psi$ if it is not $\Psi$-dominated by any other allocation, i.e., for any other allocation $Y$, it is not the case that $\vec u^Y\succ_\Psi \vec u^X$. For ease of readability, we sometimes abuse notation and use $X \succ_{\Psi} Y$ to denote $\vec u^X \succ_{\Psi} \vec u^Y$.

\section{Understanding Bivalued Submodular Valuations}
We now present some useful properties of bivalued submodular valuations that we use in our algorithms. 
Informally, we show that the number of high valued goods in any agent's bundle corresponds to a binary submodular function. This allows us to use existing techniques from the fair allocation under binary submodular functions literature in our analysis.

More formally, for each agent $i \in N+0$, we define a valuation function $\beta_i:2^G \mapsto \R^+$ defined as follows: for any $S \subseteq G$, 
$\beta_i(S)$ is equal to the size of the largest subset $T$ of $S$ such that $v_i(T) = c|T|$; 
that is, $\beta_i(S) = \max\{|T|:T \subseteq S, v_i(T) = c|T|\}$. 
In other words, all the goods in $T$ provide a marginal gain of $c$ to $i$. $\beta_i(S)$ captures the number of goods in $S$ that provide a value of $c$ to $i$. 
Our first result is that $\beta_i\in\, \SUB01$. 
To prove this, we need the following observations. 
Our first observation shows that if an agent values all of its goods at $c$, this still holds if you remove any number of goods from their bundle. 
 
\begin{obs}\label{obs:clean-subset}
For any agent $i \in N+0$, let $T$ be a set such that $v_i(T) = c|T|$. Then, for any subset $S$ of $T$, $v_i(S) = c|S|$.
\end{obs}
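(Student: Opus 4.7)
The plan is to prove this by telescoping the value $v_i(T)$ along a carefully chosen ordering of $T$. The key observation is that since $v_i \in \SUB{1}{c}$, every marginal gain $\Delta_i(A, g)$ is either $1$ or $c$, and hence at most $c$. If $v_i(T) = c|T|$, then along \emph{any} ordering of $T$ the telescoping sum of $|T|$ marginals totals $c|T|$; since each marginal is bounded by $c$, this forces every single marginal in that ordering to be exactly $c$.

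Concretely, I would enumerate the elements of $T$ as $g_1, g_2, \dots, g_{|T|}$ so that the elements of $S$ appear first, say $S = \{g_1, \dots, g_{|S|}\}$ and $T \setminus S = \{g_{|S|+1}, \dots, g_{|T|}\}$. Writing
\[
v_i(T) = \sum_{k=1}^{|T|} \Delta_i(\{g_1, \dots, g_{k-1}\}, g_k),
\]
each summand lies in $\{1, c\}$ and the total equals $c|T|$, so every summand must equal $c$. Stopping the telescoping sum at the $|S|$-th step gives $v_i(S) = \sum_{k=1}^{|S|} \Delta_i(\{g_1, \dots, g_{k-1}\}, g_k) = c|S|$, which is exactly what we want.

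There is no real obstacle here: the argument is a one-line pigeonhole on the telescoping sum once the ordering is chosen to place $S$ as a prefix. The only thing to be careful about is ensuring that the bound $\Delta_i(\cdot, g) \le c$ holds on \emph{every} bundle, which is immediate from the definition of an \SUB{1}{c} valuation (the marginal is always in $\{1, c\}$), so no appeal to submodularity is actually needed for this particular observation—the bivalued property alone suffices.
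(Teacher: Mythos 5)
Your proof is correct and is essentially the same argument as the paper's: both are counting/pigeonhole arguments that use only the fact that each marginal lies in $\{1,c\}$, and both conclude by forcing equality between the universal upper bound $c|S|$ and a lower bound derived from $v_i(T)=c|T|$. The paper phrases it via removal losses bounded by $c$ while you phrase it via a telescoping sum over an ordering with $S$ as a prefix, but the content is identical, and your remark that submodularity is not needed is also reflected in the paper's argument.
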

\begin{proof}
Each good has a marginal gain of $1$ or $c>1$. 
Therefore, the upper bound on the value of each bundle $S$ is $c|S|$. 
Removing a good from a bundle can result in a loss in value of either $1$ or $c$. Therefore, by removing $|T \setminus S|$ goods, the loss in value is upper bounded by $c|T \setminus S|$.
This gives us a lower bound of $v_i(S) \ge c|T| - c|T \setminus S| = c|S|$. Since this matches the upper bound, equality must hold. 
\end{proof}

Our second observation relates $\beta_i$ and $v_i$.
\begin{obs}\label{obs:beta-v-relation}
For each agent $i \in N+0$ and each bundle $S$, $v_i(S) = |S| - \beta_i(S) + c\beta_i(S)$.
\end{obs}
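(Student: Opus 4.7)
The plan is to unpack the definition of $\beta_i(S)$ and exploit both the maximality of the witness set and submodularity of $v_i$. Let $T \subseteq S$ be a set achieving the maximum in the definition of $\beta_i(S)$, so $|T| = \beta_i(S)$ and $v_i(T) = c|T|$. The target identity rearranges to $v_i(S) = |S| + (c-1)\beta_i(S)$, which, given $v_i(T) = c|T|$, is equivalent to saying that every good in $S \setminus T$ contributes exactly $1$ to $v_i$ as we grow $T$ into $S$.

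First I would show that $\Delta_i(T, g) = 1$ for every $g \in S \setminus T$. If instead $\Delta_i(T,g) = c$, then $v_i(T+g) = v_i(T) + c = c(|T|+1) = c|T+g|$, and $T+g \subseteq S$, contradicting the maximality of $|T| = \beta_i(S)$. Since marginal gains lie in $\{1,c\}$ and $c > 1$, the only remaining option is $\Delta_i(T, g) = 1$.

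Next I would propagate this to all supersets of $T$ via submodularity. For any $T' $ with $T \subseteq T' \subseteq G - g$, submodularity gives $\Delta_i(T', g) \le \Delta_i(T, g) = 1$, and since $\Delta_i(T',g) \in \{1,c\}$ with $c>1$, we must have $\Delta_i(T',g) = 1$. Enumerate $S \setminus T = \{g_1, \dots, g_k\}$ in any order and let $T_0 = T$, $T_j = T_{j-1} + g_j$; each $T_{j-1}$ contains $T$, so $\Delta_i(T_{j-1}, g_j) = 1$. Telescoping,
\[
v_i(S) = v_i(T_k) = v_i(T) + \sum_{j=1}^{k} \Delta_i(T_{j-1}, g_j) = c|T| + (|S| - |T|) = |S| - \beta_i(S) + c \beta_i(S),
\]
as required.

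The only point that requires care is the first step, where the contradiction argument depends crucially on $T$ being a \emph{maximum} (not merely maximal) witness in the definition of $\beta_i(S)$; the rest is a direct application of submodularity together with the two-valued structure of marginals. I do not anticipate any subtlety beyond confirming that the marginal gains of the form $\Delta_i(T_{j-1}, g_j)$ are all well-defined (they are, since each $T_{j-1} \subseteq G - g_j$).
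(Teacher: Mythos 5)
Your proof is correct and takes essentially the same approach as the paper: both rely on the maximality of the witness set $T$ together with submodularity and the bivalued structure of the marginals. The only stylistic difference is that you argue directly that every marginal $\Delta_i(T', g_j)$ along the chain equals $1$, whereas the paper first establishes the $\geq$ lower bound via a telescoping sum and then rules out strict inequality by pushing a hypothetical $c$-valued marginal back to $T$ via submodularity — the same ingredients in a slightly different order.
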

\begin{proof}
By the definition of $\beta_i(S)$, there exists some set $T \subseteq S$ of size $\beta_i(S)$ such that $v_i(T) = c|T|$. 
Let $S\setminus T = \{g_1, g_2, \dots, g_k\}$ and $Z^j = \{g_1, g_2, \dots, g_j\}$, with $Z^0 = \emptyset$. We have
\begin{align*}
    v_i(S) = v_i(T) + \sum_{j = 1}^{k} \Delta_{v_i}(T \cup Z^{j-1}, g_j).
\end{align*}
Since $\Delta_{v_i}$ is lower bounded by $1$, we have
\begin{align*}
    v_i(S) = v_i(T) + \sum_{j = 1}^{k} \Delta_{v_i}(T \cup Z^{j-1}, g_j) \ge c|T| + |S \setminus T| = c\beta_i(S) + |S| - \beta_i(S).
\end{align*}
If the inequality is strict, then it must be the case that $\Delta_{v_i}(T \cup Z^{j-1}, g_j) = c$ for some $j$. By submodularity, we must have $c \ge \Delta_{v_i}(T, g_j) \ge \Delta_{v_i}(T \cup Z^{j-1}, g_j) = c$. This implies $v_i(T+g_j) = c|T| + c = c(|T|+1)$. Since $T+g_j \subseteq S$, our choice of $T$ and therefore our assumption on $\beta_i(S)$ is contradicted. Therefore, equality must hold throughout and $v_i(S) = |S| - \beta_i(S) + c\beta_i(S)$.
\end{proof}

We are now ready to prove our main lemma.
\begin{lemma}\label{lem:beta-mrf}
For each agent $i \in N + 0$ when $v_i$ is a \SUB{1}{c} valuation, $\beta_i$ is a binary submodular valuation. 
\end{lemma}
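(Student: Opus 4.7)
The plan is to derive $\beta_i$ from $v_i$ via Observation~\ref{obs:beta-v-relation}, and then check all three defining properties of binary submodular (matroid rank) valuations by translating them to already-known properties of $v_i$.

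First, Observation~\ref{obs:beta-v-relation} gives the clean identity
\[
\beta_i(S) \;=\; \frac{v_i(S) - |S|}{c - 1},
\]
valid for all $S \subseteq G$ (recall $c > 1$). This lets me reduce every claim about marginals of $\beta_i$ to a claim about marginals of $v_i$: for any $S \subseteq G$ and $g \in G \setminus S$,
\[
\Delta_{\beta_i}(S, g) \;=\; \frac{\Delta_{v_i}(S, g) - 1}{c - 1}.
\]

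From here, all three required properties fall out immediately. For the base condition, $\beta_i(\emptyset) = 0$ by construction (the largest subset $T \subseteq \emptyset$ with $v_i(T) = c|T|$ is $T = \emptyset$). For the binary marginals condition, note that by definition of \SUB{1}{c} we have $\Delta_{v_i}(S, g) \in \{1, c\}$, so the displayed identity forces $\Delta_{\beta_i}(S, g) \in \{0, 1\}$. For submodularity, the map $x \mapsto (x-1)/(c-1)$ is monotone increasing, so $\Delta_{v_i}(S, g) \ge \Delta_{v_i}(T, g)$ for $S \subseteq T$ transfers verbatim to $\Delta_{\beta_i}(S, g) \ge \Delta_{\beta_i}(T, g)$.

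There is no real obstacle: the only slightly delicate point is being sure that the formula for $\beta_i(S)$ really comes out integer-valued and that I am entitled to divide by $c - 1$. Both issues are taken care of by the fact that Observation~\ref{obs:beta-v-relation} already asserts $v_i(S) = |S| + (c-1)\beta_i(S)$ with $\beta_i(S)$ a nonneg\-ative integer by definition, and by our standing assumption $c > 1$. With these in hand the lemma is essentially a one-line calculation followed by the three verifications above.
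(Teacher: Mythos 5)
Your proof is correct, and it is genuinely simpler than the one in the paper. You observe that Observation~\ref{obs:beta-v-relation} pins down $\beta_i$ as an affine function of $v_i$: $\beta_i(S) = (v_i(S) - |S|)/(c-1)$, and hence $\Delta_{\beta_i}(S,g) = (\Delta_{v_i}(S,g) - 1)/(c-1)$. Since the map $x \mapsto (x-1)/(c-1)$ is strictly increasing for $c > 1$, all three defining properties of a binary submodular function (normalization, binary marginals, submodularity) transfer directly from the corresponding properties of $v_i$. The paper, by contrast, establishes the binary-marginals condition by a separate combinatorial argument: it first proves $\beta_i(S) \le \beta_i(S+g)$ and $\beta_i(S+g) \le \beta_i(S)+1$ as two auxiliary observations, reasoning directly about maximal subsets achieving $v_i(T) = c|T|$ and invoking Observation~\ref{obs:clean-subset}. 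It then proves submodularity by contradiction, plugging into Observation~\ref{obs:beta-v-relation} only at that final step. Your approach avoids the auxiliary observations entirely and treats all three properties uniformly; what the paper's route buys in exchange is that the bound $\beta_i(S+g) \le \beta_i(S)+1$ is derived from first principles about the set structure, which may be pedagogically useful but is not logically necessary given Observation~\ref{obs:beta-v-relation} has already been proved. There is no circularity concern: Observation~\ref{obs:beta-v-relation} is proved independently of Lemma~\ref{lem:beta-mrf}.
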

\begin{proof}
To show this, we show that \begin{inparaenum}[(a)]
\item $\beta_i(\emptyset) = 0$, 
\item $\Delta_{\beta_i}(S,g) \in \{0,1\}$ and 
\item $\Delta_{\beta_i}(S,g) \ge \Delta_{\beta_i}(T,g)$ for all $S\subseteq T \subseteq G$ and all $g \in G\setminus T$.
\end{inparaenum}
First, $0\le \beta_i(S) \le v_i(S)$, thus  $\beta_i(\emptyset)$ is trivially $0$.
Consider some $S\subseteq G$ and some $g\in G\setminus S$. 
We need to show that $\Delta_{\beta_i}(S, g) = \beta_i(S+g) - \beta_i(S) \in \{0, 1\}$. 
Since $\beta_i$ is integer valued, $\Delta_{\beta_i}(S, g)$ must be an integer as well. Further, let $U = \argmax_{T \subseteq S: v_i(T) = c|T|} |T|$. We have the following observation by the definition of $\beta_i$.
\begin{obs}\label{obs:beta-lower-bound}
$\beta_i(S+g) \ge |U| = \beta_i(S)$.
\end{obs}

If $\beta_i(S+g) > \beta_i(S) + 1$, let $U' = \argmax_{T \subseteq S + g: v_i(T) = c|T|} |T|$. By assumption, we have $|U'| > |U| + 1$. By the definition of $\beta_i$, $U' - g$ is a subset of $S$ and $v_i(U' - g) = c|U' - g|$ (Observation \ref{obs:clean-subset}). Further, since $|U'| > |U|+1$, we have $|U' - g| > |U|$. This contradicts our assumption on $U$ and $\beta_i(S)$. Therefore, 
\begin{obs}\label{obs:beta-upper-bound}
$\beta_i(S+g) \le \beta_i(S) + 1$.
\end{obs}
Combining Observations \ref{obs:beta-lower-bound} and \ref{obs:beta-upper-bound} with the fact that $\Delta_{\beta_i}(S, g)$ is an integer, we conclude that $\Delta_{\beta_i}(S, g) \in \{0, 1\}$.

Finally, we prove that $\beta_i$ is submodular. 
Let $S, T \subseteq G$ be two sets such that $S \subseteq T$. Let $g \in G$ be a good not in $T$. We need to show that $\Delta_{\beta_i}(S, g) \ge \Delta_{\beta_i}(T, g)$. Assume for contradiction that $\Delta_{\beta_i}(S, g) < \Delta_{\beta_i}(T, g)$. Since there are only two possible values of $\Delta_{\beta_i}$ (Property (b)), we must have that $\Delta_{\beta_i}(S, g) = 0$ and $\Delta_{\beta_i}(T, g) = 1$. 

Using Observation \ref{obs:beta-v-relation}, we have the following:
\begin{align*}
    &\Delta_{v_i}(S, g) = v_i(S+g) - v_i(S) = |S+g| -\beta_i(S+g) + c \beta_i(S+g) - (|S| -\beta_i(S) + c \beta_i(S)) = 1, \\
    &\Delta_{v_i}(T, g) = v_i(T+g) - v_i(T) = |T+g| -\beta_i(T+g) + c \beta_i(T+g) - (|T| -\beta_i(T) + c \beta_i(T)) = c.
\end{align*}
The above equations use the fact that $\beta_i(S+g) = \beta_i(S)$ and $\beta_i(T+g) = \beta_i(T)+1$. 
Since $1 < c$, we have $\Delta_{v_i}(S, g) < \Delta_{v_i}(T, g)$ and $v_i$ is not submodular, a contradiction.
\end{proof}

Using $\beta_i$, we can leverage the properties of binary submodular valuations used in the fair allocation literature. However, these properties require bundles to be {\em clean} as well, as given by the following definition.

\begin{definition}[Clean]
For any agent $i \in N +0$, a bundle $S$ is {\em clean} with respect to the binary submodualar valuation $\beta_i$ if $\beta_i(S) = |S|$. By definition, this is equivalent to saying $v_i(S) = c|S|$. An allocation $X$ is said to be clean if for all agents $i \in N + 0$, $\beta_i(X_i) = |X_i|$. By our definition of $v_0$, any bundle $S$ is clean with respect to the dummy agent $0$.
\end{definition}

This is similar to the notion of non-wastefulness used in \citet{akrami2022mnw} and cleanness used in \citet{benabbou2021MRF}\footnote{The definition of clean allocations from \citet{benabbou2021MRF} is slightly different but equivalent when assuming all the agents have binary submodular valuation functions.}. 

When agents have binary submodular valuations, given a clean allocation $X$, we define the {\em exchange graph} $\cal G(X)$ as a directed graph over the set of goods. 
An edge exists from $g$ to $g'$ in the exchange graph if $g \in X_j$ and $\beta_j(X_j - g+g') = \beta_j(X_j)$ for some $j \in N+0$. 
Since our problem instances have {\em bivalued} submodular valuations, whenever we refer to the exchange graph of an allocation, we refer to it with respect to the {\em binary} submodular valuations $\{\beta_i\}_{i \in N}$. 

Let $P = (g_1, g_2, \dots, g_t)$ be a path in the exchange graph for a clean allocation $X$. 
We define a transfer of goods along the path $P$ in the allocation $X$ as the operation where $g_t$ is given to the agent who has $g_{t-1}$, $g_{t-1}$ is given to the agent who has $g_{t-2}$ and so on till finally $g_1$ is discarded. 
We call this transfer {\em path augmentation}; the bundle $X_i$ after path augmentation with the path $P$ is denoted by $X_i \Uplambda P$ and defined as $X_i \Uplambda P = (X_i - g_t) \oplus \{g_j, g_{j+1} : g_j \in X_i\}$, where $\oplus$ denotes the symmetric set difference operation. 

For any clean allocation $X$ and agent $i$, we define $F_i(X) = \{g \in G: \Delta_{\beta_i}(X_i, g) = 1\}$ as the set of goods which give agent $i$ a marginal gain of $1$ under the valuation $\beta_i$, i.e., the set of all items that give agent $i$ a marginal gain of $c$ under $v_i$. 
For any agent $i$, let $P = (g_1, \dots, g_t)$ be the shortest path from $F_i(X)$ to $X_j$ for some $j \ne i$. 
Then path augmentation with the path $P$ and giving $g_1$ to $i$ results in an allocation where $i$'s value for their bundle goes up by $c$, $j$'s value for their bundle goes down by $c$ and all the other agents do not see any change in value. 
This is formalized below and exists in \citet[Lemma 1]{Barman2021MRFMaxmin} and \citet[Lemma 3.7]{viswanathan2022yankee}.

\begin{lemma}[\citet{Barman2021MRFMaxmin}, \citet{viswanathan2022yankee}]\label{lem:path-augmentation}
    Let $X$ be a clean allocation with respect to the binary submodular valuations $\{\beta_i\}_{i \in N + 0}$. Let $P = (g_1, \dots, g_t)$ be the shortest path in the exchange graph $\cal G(X)$ from $F_i(X)$ to $X_j$ for some $i \in N+0$ and $j \in N + 0 - i$. Then, we have for all $k \in N - i - j$, $\beta_k(X_k \Uplambda P) = \beta_k(X_k)$, $\beta_i(X_i \Uplambda P + g_1) = \beta_i(X_i) + 1$ and $\beta_j(X_j \Uplambda P) = \beta_j(X_j) - 1$. Furthermore, the new allocation $X \Uplambda P$ is clean.
\end{lemma}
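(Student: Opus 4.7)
The plan is to verify the three claimed changes in $\beta$-values separately (one for agent $i$, one for agent $j$, one for each intermediate agent) and then derive cleanness of $X\Uplambda P$ as a corollary. The augmentation rule rewrites the allocation as follows: agent $i$ gains $g_1$; for every edge $(g_\ell, g_{\ell+1})$ on $P$, the unique agent $k_\ell \in N+0$ holding $g_\ell$ in $X$ swaps $g_\ell$ out for $g_{\ell+1}$; and agent $j$ (who holds $g_t$) simply loses $g_t$. The three cases split according to whether an agent is $i$, is $j$, or is one of the intermediate agents $k_\ell$.

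For agent $j$, since $X$ is clean we have $\beta_j(X_j) = |X_j|$, and by Observation \ref{obs:clean-subset} (applied to $\beta_j$, which is binary submodular by Lemma \ref{lem:beta-mrf}) every subset of $X_j$ remains clean, so $\beta_j(X_j - g_t) = |X_j| - 1$. Any additional edges $(g_\ell,g_{\ell+1})$ with $g_\ell \in X_j$ for $\ell < t$ contribute rank-preserving swaps by definition of the exchange graph. For agent $i$, the key fact is $g_1 \in F_i(X)$, i.e., $\Delta_{\beta_i}(X_i, g_1) = 1$; the other modifications to $X_i$ (swaps induced by path edges whose tail lies in $X_i$) are also rank-preserving by the edge definition. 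Combining, $\beta_i$ increases by exactly $1$ and $\beta_j$ decreases by exactly $1$.

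The main obstacle is the intermediate agents, particularly when the same agent $k$ holds multiple path-goods so that $k$ undergoes a composite multi-swap. A single swap is rank-preserving by definition, but composites can in principle fail. Here I would invoke the standard matroid shortest-path exchange property (the argument used in \citet{Barman2021MRFMaxmin} and \citet{viswanathan2022yankee}): if $P$ is a shortest path in $\cal G(X)$, then no edge $(g_\ell, g_{\ell'})$ with $\ell' > \ell+1$ can exist, since that would yield a strictly shorter path. Using this minimality one shows by induction on $\ell$ that in each intermediate agent $k$'s bundle, the composite swap $\{g_\ell : g_\ell \in X_k\} \mapsto \{g_{\ell+1} : g_\ell \in X_k\}$ is realised by a sequence of rank-preserving single-element exchanges, hence $\beta_k(X_k \Uplambda P) = \beta_k(X_k)$.

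Cleanness of $X \Uplambda P$ then follows mechanically: for every agent the bundle's cardinality changes by exactly the same amount as its $\beta$-value ($+1$ for $i$, $-1$ for $j$, $0$ for every intermediate $k$, and trivially $0$ for the dummy agent whose $\beta_0$ always equals cardinality). Since $X$ was clean to begin with, each new bundle $X_k'$ satisfies $\beta_k(X_k') = |X_k'|$, so the augmented allocation is clean. The only genuinely delicate step in the whole argument is the shortest-path exchange property for composite swaps; the rest is bookkeeping against the edge definitions and Observation \ref{obs:clean-subset}.
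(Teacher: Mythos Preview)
The paper does not supply its own proof of this lemma; it is quoted verbatim from \citet[Lemma~1]{Barman2021MRFMaxmin} and \citet[Lemma~3.7]{viswanathan2022yankee}, so there is no in-paper argument to compare against. Your sketch is a faithful outline of the standard matroid-exchange proof used in those references: the single-swap cases for $i$ and $j$ follow directly from the definitions of $F_i(X)$ and the exchange-graph edges, and the delicate point---that a \emph{composite} swap at an intermediate agent holding several path vertices still preserves rank---is exactly where the shortest-path ``no shortcut'' property is needed.

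Two minor remarks. First, your sentence about ``additional edges $(g_\ell,g_{\ell+1})$ with $g_\ell\in X_j$ for $\ell<t$'' is vacuous: since $P$ is a shortest path to $X_j$, no earlier vertex can already lie in $X_j$. Second, the same composite-swap subtlety you flag for intermediate agents can in principle arise for agent $i$ as well (if some $g_\ell\in X_i$ for $\ell>1$); the shortest-path property again rules out shortcuts there, but your phrasing ``combining, $\beta_i$ increases by exactly~$1$'' skips over this in the same way you later criticise for intermediate agents. None of this is a gap---the cited references handle all cases uniformly via the no-shortcut lemma---but it is worth being aware that the $i$-case is not purely bookkeeping either.
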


We now establish sufficient conditions for a path to exist. We say there is a path from some agent $i$ to some agent $j$ in an allocation $X$ if there is a path from $F_i(X)$ to $X_j$ in the exchange graph $\cal G(X)$. The following lemma appears in \citet[Theorem 3.8]{viswanathan2022yankee}.

\begin{lemma}[\citet{viswanathan2022yankee}]\label{lem:augmentation-sufficient}
Let $X$ and $Y$ be two clean allocations with respect to the binary submodular valuations $\{\beta_i\}_{i \in N + 0}$. For any agent $i$ such that $|X_i| < |Y_i|$, there is a path from $F_i(X)$ to some good in $X_k$ for some $k \in N+0$ such that $|X_k| > |Y_k|$.
\end{lemma}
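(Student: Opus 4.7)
The plan is to argue the contrapositive via a matroid-theoretic counting argument over the reachable and unreachable parts of $G$. Suppose for contradiction that no such path exists. Let $S$ denote the set of goods reachable from $F_i(X)$ in $\cal G(X)$, let $T = G \setminus S$, and let $A = \{j \in N + 0 \setminus \{i\} : X_j \cap S \neq \emptyset\}$. By the contradiction hypothesis, every $k \in N + 0$ with $|X_k| > |Y_k|$ satisfies $X_k \cap S = \emptyset$, hence $X_k \subseteq T$ and $k \notin A$ (with $k \neq i$ automatic since $|X_i| < |Y_i|$).

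The heart of the argument is the following claim: for every $j \in A \cup \{i\}$, $\beta_j(T) = |X_j \cap T|$. Since $X_j \cap T$ is clean (as a subset of the clean $X_j$, via Observation \ref{obs:clean-subset}), $\beta_j(X_j \cap T) = |X_j \cap T|$ already, so it suffices to show that $X_j \cap T$ is a maximal independent set of $T$ in $\beta_j$; equivalently, for every $g' \in T \setminus X_j$, the set $(X_j \cap T) + g'$ is dependent in $\beta_j$. Fix such a $g'$. If $X_j + g'$ is independent in $\beta_j$, then for $j = i$ this says $g' \in F_i(X) \subseteq S$, contradicting $g' \in T$; for $j \in A$, every $g \in X_j$ has an edge $g \to g'$ in $\cal G(X)$ (as $X_j - g + g'$ is independent, being a subset of $X_j + g'$), so taking $g \in X_j \cap S$ (nonempty since $j \in A$), closure of $S$ under edges forces $g' \in S$, again a contradiction. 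Hence $X_j + g'$ is dependent, yielding a unique circuit $C \subseteq X_j + g'$ with $g' \in C$. For any $g \in C \cap X_j$, $X_j - g + g'$ is independent, so $g \to g'$ is an edge; if $C \cap X_j \cap S$ were nonempty, closure would again give $g' \in S$. Thus $C \cap X_j \subseteq X_j \cap T$, which forces $C \subseteq (X_j \cap T) + g'$, so $(X_j \cap T) + g'$ contains a circuit, hence is dependent.

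The claim implies $|Y_j \cap T| \leq \beta_j(T) = |X_j \cap T|$ for every $j \in A \cup \{i\}$, since $Y_j \cap T$ is a subset of the clean $Y_j$ and therefore independent in $\beta_j$. Now $|T| = \sum_{j \in N+0} |X_j \cap T| = \sum_{j \in N+0} |Y_j \cap T|$, since both $\{X_j\}$ and $\{Y_j\}$ partition $G$; furthermore $|X_j \cap T| = |X_j|$ for each $j \notin A \cup \{i\}$ because such $j$ satisfy $X_j \cap S = \emptyset$. Rearranging the two expressions for $|T|$ and using the claim yields $\sum_{j \notin A \cup \{i\}} |X_j| \leq \sum_{j \notin A \cup \{i\}} |Y_j \cap T| \leq \sum_{j \notin A \cup \{i\}} |Y_j|$. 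Combined with $|X_j| \leq |Y_j|$ for $j \in A$ (the contradiction hypothesis) and $|X_i| < |Y_i|$, summing over all $j$ gives $\sum_j |X_j| < \sum_j |Y_j|$, contradicting the fact that both sums equal $m$.

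The main obstacle I anticipate is proving the key claim cleanly: one must split on whether $X_j + g'$ is independent and handle $j = i$ (where $X_i \cap S$ may be empty so the closure route is unavailable, and one instead uses $F_i(X) \subseteq S$ directly) uniformly with $j \in A$. Once the claim is established, the counting step is routine bookkeeping with partitions.
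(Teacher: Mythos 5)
The paper does not prove this lemma; it is imported verbatim from \citet[Theorem 3.8]{viswanathan2022yankee}, so there is no in-paper proof to compare against. Your self-contained proof is correct and is the standard matroid-theoretic argument for this kind of augmenting-path existence result: let $S$ be the set of goods reachable from $F_i(X)$ in the exchange graph, show that for every agent $j$ whose bundle meets $S$ (and for $i$ itself) the set $X_j \cap T$ is a maximal independent subset of $T = G\setminus S$ under $\beta_j$, conclude $|Y_j\cap T|\le|X_j\cap T|$ for those $j$, and close with a counting argument over the two partitions of $G$ to derive $\sum_j|X_j| < \sum_j|Y_j| = m$, a contradiction. The two cases in the maximality claim — $X_j+g'$ independent versus dependent — are handled correctly, with the independent case resolved via $F_i(X)\subseteq S$ when $j=i$ and via closure of $S$ under exchange-graph edges when $j\in A$, and the dependent case resolved via the fundamental circuit and the fact that dropping any non-$g'$ element of it from $X_j+g'$ yields an independent set of the same size, hence an exchange-graph edge. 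The only minor remark: it is worth stating explicitly that $\beta_j$ is a matroid rank function (which the paper establishes via Lemma \ref{lem:beta-mrf}), since the fundamental-circuit and rank-equals-size-of-maximal-independent-subset facts you invoke are matroid facts, not generic set-function facts.
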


Unfortunately, unlike binary submodular valuations \citep{benabbou2021MRF}, we cannot make any allocation clean without causing a loss in utility to some agents.
We can however, efficiently {\em decompose} any allocation into a clean allocation and a {\em supplementary} allocation. 
For any allocation $X$, the clean part is denoted using $X^c$ (since each good provides a value of $c$ in a clean allocation) and the supplementary allocation is denoted using $X^1$. 
The supplementary allocation is a tuple of $n$ disjoint bundles of $G$, that is $X^1 = (X^1_1, X^1_2, \dots, X^1_n)$. 

The supplementary allocation is not technically an allocation since it is not an $n+1$ partition of $G$; we call it an allocation nevertheless to improve readability. 

\begin{example}\label{ex:docomposition}
Consider an example with two agents $\{1, 2\}$ and four goods $\{g_1, \dots, g_4\}$. Agent valuations are given as follows
\begin{align*}
    v_1(S) = c|S| && v_2(S) = c\min\{|S|, 1\} + \max\{|S|-1, 0\}
\end{align*}
Consider an allocation $X$ where $X_1 = \{g_1, g_2\}$ and $X_2 = \{g_3, g_4\}$. Note that both goods in $X_1$ give agent $1$ a value of $c$ but only one good in $X_2$ gives agent $2$ a value of $c$. A natural decomposition in this case would be
\begin{align*}
    X^c_0 = \{g_4\} && X^c_1 = \{g_1, g_2\} && X^c_2 = \{g_3\} \\
    && X^1_1 = \emptyset && X^1_2 = \{g_4\}
\end{align*}
$X^c$ is the clean part and $X^1$ is the supplementary part. Note that (unlike bivalued additive valuations), decompositions need not be unique. Swapping $g_4$ and $g_3$ in the above allocation would result in another decomposition.
\end{example}
Formally, $X^c$ and $X^1$ is a decomposition of $X$ if for all agents $i \in N$, we have \begin{inparaenum}[(a)]
	\item $X^1_i, X^c_i \subseteq X_i$,
	\item $X^c_i \cap X^1_i = \emptyset$, 
	\item $X^c_i \cup X^1_i = X_i$, and 
	\item $|X^c_i| = \beta_i(X_i)$.
\end{inparaenum}
Proving that a decomposition always exists requires the following simple observation:

\begin{obs}\label{obs:beta-exchange}
Let $S, T$ be two bundles of goods which are clean with respect to some agent $i \in N+0$. If $|S| < |T|$, there exists a good $g \in T \setminus S$ such that $\Delta_{v_i}(S, g) = c$. 
\end{obs}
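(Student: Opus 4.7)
The plan is to argue by contradiction. Since the marginal gains of $v_i$ lie in $\{1, c\}$, failure of the claim is equivalent to the hypothesis that $\Delta_{v_i}(S, g) = 1$ for every $g \in T \setminus S$. Under this assumption, I would evaluate $v_i(S \cup T)$ in two different ways and derive a contradiction with $c > 1$.

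First, I build $S \cup T$ by adjoining the elements of $T \setminus S$ to $S$ one at a time. Submodularity of $v_i$ bounds each successive marginal above by the corresponding $\Delta_{v_i}(S, \cdot) = 1$, and because the marginals of a \SUB{1}{c} valuation are at least $1$, each successive marginal equals exactly $1$. Combined with cleanness of $S$ (which gives $v_i(S) = c\lvert S \rvert$), this yields the exact value $v_i(S \cup T) = c\lvert S \rvert + \lvert T \setminus S \rvert$. Second, I build $S \cup T$ by adjoining the elements of $S \setminus T$ to $T$ one at a time; each marginal is at least $1$, and cleanness of $T$ gives $v_i(T) = c\lvert T \rvert$, so $v_i(S \cup T) \ge c\lvert T \rvert + \lvert S \setminus T \rvert$.

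Combining the equality and inequality and using the identity $\lvert T \setminus S \rvert - \lvert S \setminus T \rvert = \lvert T \rvert - \lvert S \rvert$ simplifies to $(c - 1)(\lvert S \rvert - \lvert T \rvert) \ge 0$. Since $c > 1$ and $\lvert S \rvert < \lvert T \rvert$, this is impossible, so some $g \in T \setminus S$ must satisfy $\Delta_{v_i}(S, g) = c$.

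I do not expect a serious obstacle. The one mild subtlety is that bounding $v_i(S \cup T)$ starting from $S$ alone is not enough: the resulting inequality $c(\lvert T \rvert - \lvert S \rvert) \le \lvert T \setminus S \rvert$ can be consistent with $c > 1$ when $\lvert S \cap T \rvert$ is small. Symmetrizing the computation by also exploiting cleanness of $T$ pins $v_i(S \cup T)$ from below in terms of $\lvert S \setminus T \rvert$ and produces exactly the cancellation needed to force $(c - 1)(\lvert T \rvert - \lvert S \rvert) \le 0$, which is the crux of the argument.
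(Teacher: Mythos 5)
Your proof is correct, and it takes a genuinely different route from the paper's. The paper does not prove the exchange property from scratch: it simply cites the known augmentation property of binary submodular (matroid rank) functions from \citet{benabbou2021MRF} --- if $\beta_i(S) = |S|$, $\beta_i(T) = |T|$, and $|S| < |T|$, then some $g \in T\setminus S$ has $\Delta_{\beta_i}(S,g)=1$ --- and then translates back to $v_i$ via Observation~\ref{obs:beta-v-relation}. You instead give a self-contained first-principles argument: suppose every $g \in T\setminus S$ has $\Delta_{v_i}(S,g)=1$; building $S\cup T$ from $S$ and squeezing each incremental marginal between the submodularity upper bound ($\le 1$) and the bivalued lower bound ($\ge 1$) gives the exact value $v_i(S\cup T) = c|S| + |T\setminus S|$, while building $S\cup T$ from $T$ gives the lower bound $v_i(S\cup T)\ge c|T| + |S\setminus T|$; subtracting and using $|T\setminus S| - |S\setminus T| = |T|-|S|$ yields $(c-1)(|S|-|T|)\ge 0$, contradicting $c>1$ and $|S|<|T|$. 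This is in effect a direct proof of the matroid exchange property specialized to the bivalued setting. What you buy is self-containment (no external citation needed); what the paper's route buys is brevity and reuse of machinery already established for $\beta_i$. Both are sound.
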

\begin{proof}
It is well known that if $\beta_i(S) = |S|$, $\beta_i(T) = |T|$ and $\beta_i$ is a binary submodular valuation, there exists a good $g$ in $T \setminus S$ such that $\Delta_{\beta_i}(S, g) = 1$ \citep{benabbou2021MRF}.

Combining this result with Observation \ref{obs:beta-v-relation}, we have that $\Delta_{v_i}(S, g)$ equals
\begin{align*}
	v_i(S+g) - v_i(S) &= |S+g| -\beta_i(S+g) + c \beta_i(S+g) - (|S| -\beta_i(S) + c \beta_i(S)) = 1 + \Delta_{\beta_i}(S, g) (c-1),
\end{align*}
which equals $c$.
\end{proof}
Using \Cref{obs:beta-exchange}, we show that a decomposition always exists.
\begin{lemma}\label{lem:decompose}
For any allocation $X$, there exists a decomposition into a clean allocation $X^c$ and a supplementary allocation $X^1$ such that for all $i \in N$, we have 
\begin{inparaenum}[(a)]
	\item $X^1_i, X^c_i \subseteq X_i$,
	\item $X^c_i \cap X^1_i = \emptyset$, 
	\item $X^c_i \cup X^1_i = X_i$, and 
	\item $|X^c_i| = \beta_i(X_i)$.
\end{inparaenum}
\end{lemma}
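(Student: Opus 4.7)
The plan is to observe that the statement follows almost immediately from the definition of $\beta_i$. Recall that
\[
\beta_i(X_i) \;=\; \max\bigl\{\,|T| : T \subseteq X_i,\ v_i(T) = c|T|\,\bigr\},
\]
and since $X_i$ is finite, this maximum is attained. So for every agent $i \in N$ I get, for free, a witness set $T_i \subseteq X_i$ with $|T_i| = \beta_i(X_i)$ and $v_i(T_i) = c|T_i|$. I would then simply set $X^c_i := T_i$, $X^1_i := X_i \setminus T_i$ for each $i \in N$, and complete the clean partition by defining $X^c_0 := G \setminus \bigcup_{i \in N} X^c_i$ (which absorbs both the originally unallocated items $X_0$ and all the supplementary pieces $X^1_i$).

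Next, I would verify properties (a)--(d) directly from the construction: (a) $X^c_i, X^1_i \subseteq X_i$ because $T_i \subseteq X_i$; (b) $X^c_i \cap X^1_i = \emptyset$ because $X^1_i$ is the complement of $T_i$ inside $X_i$; (c) $X^c_i \cup X^1_i = X_i$ for the same reason; (d) $|X^c_i| = |T_i| = \beta_i(X_i)$. No calculation is required.

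The one remaining obligation is to check that $X^c$ is a \emph{clean} allocation, i.e.\ that $\beta_i(X^c_i) = |X^c_i|$ for every $i \in N+0$. For $i \in N$ this is immediate: by construction $v_i(X^c_i) = c|X^c_i|$, and by the definition of $\beta_i$ this forces $\beta_i(X^c_i) = |X^c_i|$. For the dummy agent $0$, $v_0(S) = c|S|$ holds for every $S \subseteq G$, so every bundle---including $X^c_0$---is clean with respect to $v_0$ by definition.

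I do not expect any serious obstacle; the real content has already been front-loaded into the definition of $\beta_i$ as a max over clean subsets, and \Cref{obs:clean-subset} guarantees that each witness $T_i$ remains clean even after restriction. The only subtle point is making sure the reader sees that $X^c$ is a genuine $(n+1)$-partition (handled by defining $X^c_0$ as the residual) and that cleanness extends to agent $0$ (handled by the very definition of $v_0$). Observation \ref{obs:beta-exchange} is not needed for this lemma itself, although one could alternatively prove existence constructively by greedily building $X^c_i$ one item at a time using the exchange property.
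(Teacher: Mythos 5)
Your proposal is correct, and it is genuinely simpler than the paper's own argument. The paper proves this lemma constructively: it runs a greedy procedure (add goods from $X_i$ to $X^c_i$ one at a time as long as they give marginal value $c$), and then invokes Observation~\ref{obs:beta-exchange} to show the greedy set achieves the maximum $\beta_i(X_i)$ --- if it stopped short, the exchange property would supply another addable good, a contradiction. You instead bypass the greedy argument entirely: since $\beta_i(X_i)$ is defined as a maximum over a finite nonempty collection, an argmax $T_i$ exists outright, and setting $X^c_i = T_i$ trivially satisfies (a)--(d) and cleanness. Observation~\ref{obs:beta-exchange} is indeed not needed for your route, exactly as you note. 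The trade-off is that the paper's greedy version doubles as an efficient algorithm, which matters because the surrounding text ("we can however, efficiently decompose any allocation") leans on the constructive reading; your direct-existence proof establishes the lemma as literally stated but does not by itself say how to compute the witness. For a pure existence lemma, your approach is cleaner; for the paper's downstream use, the greedy version carries a bit more.

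One small polish: you should note that the $T_i$'s are pairwise disjoint because the original $X_i$'s are (so $X^c$ really is a partition once you set $X^c_0$ to the residual), and that $\bigcup_{i\in N}X^1_i \subseteq X^c_0$ automatically, matching the convention that supplementary goods sit in $X^c_0$. You implicitly rely on both but do not say them.
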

\begin{proof}
We construct the decomposition as follows: initialize $X^c$ with an empty allocation. Then, for each agent $i \in N$, add goods one by one from $X_i$ to $X^c_i$ if the good provides a marginal value of $c$. Stop when no good in $X_i \setminus X^c_i$ provides a marginal value of $c$ to any agent $i \in N$. For each agent $i \in N$, set $X^1_i = X_i \setminus X^c_i$. Finally, set $X^c_0 = G \setminus \bigcup_{i \in N} X^c_i$.

Note that this construction ensures that $X^c$ is clean and trivially satisfies (a), (b) and (c). It remains to show that $|X^c_i| = \beta_i(X_i)$ for each $i \in N$.

Recall that $\beta_i(X_i)$ is defined as the size of the largest subset $T$ of $X_i$ (need not be unique) such that each good in $T$ is valued at $c$. Via the construction of $X^c_i$, we have $|X^c_i| \le \beta_i(X_i) = |T|$. If the inequality is strict, using Observation \ref{obs:beta-exchange}, we get that there exists a good $g$ in $T \setminus X^c_i$ such that $\Delta_{v_i}(X^c_i, g) = c$ contradicting our construction of $X^c$. Therefore, equality holds and (d) is true. 
\end{proof}

Note that in the decomposition above, $X^c_0$ contains all the goods in $X^1$. This is done to ensure $X^c$ is an $n+1$ partition of $G$. We do not need to do this for $X^1$. 

We define the union of a clean allocation and a supplementary allocation $X^c \cup X^1$ as follows: for each agent $i \in N$, $X_i = X^c_i \cup X^1_i$ and $X_0 = G \setminus \bigcup_{i \in N} X_i$. This definition holds for any pair of clean and supplementary allocations and not just decompositions via Lemma \ref{lem:decompose}. It is easy to see that if an allocation $X$ was decomposed into $X^c$ and $X^1$ (satisfying the properties of Lemma \ref{lem:decompose}), then $X = X^c \cup X^1$. 
We will refer to an allocation $X$ as $X = X^c \cup X^1$ to denote a decomposition of the allocation (via Lemma \ref{lem:decompose}) into a clean allocation $X^c$ and a supplementary allocation $X^1$. As we saw in Example \ref{ex:docomposition}, decompositions need not be unique. When we use $X = X^c \cup X^1$, we will refer to any one of the possible decompositions. The exact one will not matter.

Finally, we present a useful metric to compare allocations using their decompositions. 
We refer to this metric as {\em domination}. 
We first compare the sorted utility vectors of $X^c$ and $Y^c$. If the sorted utility vector $X^c$ lexicographically dominates that of $Y^c$, then $X$ dominates $Y$; if the two allocations $X^c$ and $Y^c$ have the same sorted utility vectors, we compare their utility vectors. 
If $X^c$ is lexicographically greater than $Y^c$, then $X$ dominates $Y$. If $X^c$ and $Y^c$ have the same utility vectors, we compare $X$ and $Y$ lexicographically.  
This definition is formalized below. 
Recall that $\vec u^X$ and $\vec s^X$ denote the utility vector and sorted utility vector of the allocation $X$ respectively.
\begin{definition}[Domination]
    We say an allocation $X = X^c \cup X^1$ {\em dominates} an allocation $Y = Y^c \cup Y^1$ if any of the following three conditions hold:
    \begin{enumerate}[(a)]
        \item $\vec s^{X^c} \succ_{\lex} \vec s^{Y^c}$,
        \item $\vec s^{X^c} = \vec s^{Y^c}$ and $\vec u^{X^c} \succ_{\lex} \vec u^{Y^c}$, or
        \item $\vec u^{X^c} = \vec u^{Y^c}$ and $\vec u^X \succ_{\lex} \vec u^Y$.
    \end{enumerate}
    An allocation $X$ is a {\em dominating} $\Psi$ maximizing allocation if no other $\Psi$ maximizing allocation $Y$ dominates $X$.
\end{definition}

\section{Bivalued Yankee Swap}
We now present {\em Bivalued Yankee Swap} --- a flexible framework for fair allocation with bivalued $(1,c)$ submodular valuations. The results in this section assume that $c$ is a natural number; in other words, we are interested in $(a,b)$ bivalued submodular valuations where $a$ divides $b$.

In the original Yankee Swap algorithm \citep{viswanathan2022yankee}, all goods start off initially unallocated. 
The algorithm proceeds in rounds; at each round, we select an agent based on some criteria. 
This agent can either take an unallocated good, or initiate a transfer path by stealing a good from another agent, who then steals a good from another agent and so on until some agent steals a good from the pool of unallocated goods. 
\citet{viswanathan2022yankee} show that these transfer paths can be efficiently computed and are equivalent to paths on the exchange graph. 
If there is no transfer path from the agent to an unassigned item, the agent is removed from the game. We continue until either all items have been assigned.

The Bivalued Yankee Swap algorithm is a modified version of this approach. 
We start by letting agents run Yankee Swap, but require that whenever an agent receives an item (by either taking an unassigned item or by stealing an item from another agent), that item must offer them a marginal gain of $c$. 
In addition, we require that every agent who had an item stolen from them fully recovers their utility, i.e., an agent who lost an item of marginal value $c$, must receive an item of marginal value $c$ in exchange; an agent who lost an item of marginal value $1$ must receive an item of marginal value $1$ in exchange.
Thus, whenever an agent initiates a transfer path, that path results in them receiving an additional utility of $c$, while all other agents' utilities remain the same. 
More formally, every agent starts in the set $U$. 
If the agent is able to find a path in the exchange graph to an unallocated good, we augment the allocation using this path. 
If the agent is unable find a path, we remove them from $U$. 
Once the agent is removed from $U$, no transfer path exists that can give the agent a value of $c$. 
Therefore, if an agent outside of $U$ is chosen, we {\em provisionally} give the agent an arbitrary unassigned good, offering them a marginal gain of $1$. 
Since all unassigned goods offer a marginal gain of $1$ for every agent not in $U$, the choice of which good to allocate to guarantee a value of $1$ does not matter. 
Therefore, 
we treat this provisionally allocated good as an unallocated good as well; thereby allowing transfer paths to steal this good. 
If this provisionally allocated good gets stolen, we replace it with another low-value good. 
The algorithm stops when there are no unallocated goods left. 
These steps are described in Algorithm \ref{algo:bivalued-yankee-swap}.

\begin{algorithm}[t]
    \LinesNumbered
    \SetAlgoVlined
    \caption{Bivalued Yankee Swap}
    \label{algo:bivalued-yankee-swap}
    \DontPrintSemicolon
    \SetKwInOut{Input}{Input}
    \SetKwInOut{Output}{Output}
    \Input{A set of goods $G$, a set of agents $N$ with $\{1, c\}$-SUB valuations $\{v_h\}_{h \in N}$ and a gain function $\phi$.}
    \Output{A dominating $\Psi$ maximizing allocation.}
    $X^c = (X^c_0, X^c_1, \dots, X^c_n) \gets (G, \emptyset, \dots, \emptyset)$ \tcc*{$X^c$ has no goods allocated.}
    \tcp{Invariant: $X^c_0$ stores the provisionally allocated goods as well as the unallocated goods.}
    $X^1 = (X^1_1, \dots, X^1_n) \gets (\emptyset, \dots, \emptyset)$ \tcc*{Stores the provisional allocation.}
    $U\gets N$ \tcc*{Set of agents still in play for $c$ valued goods.}
    \While{$\sum_{h \in N}|X^1_h| < |X^c_0|$}{
    \tcc*{While unallocated goods exist.}
        $\Gain_{c} = 
            \begin{cases}
                \max_{k \in U} \phi(X^c \cup X^1, k, c) & U \ne \emptyset \\
                -\infty & U = \emptyset
            \end{cases}
        $\;
        $\Gain_{1} = 
            \begin{cases}
                \max_{k \in N\setminus U} \phi(X^c \cup X^1, k, 1) & N \setminus U \ne \emptyset \\
                -\infty & N \setminus U = \emptyset
            \end{cases}
        $\;
        \uIf{\normalfont $\Gain_c \ge \Gain_1$}{ 
        \tcp{Try to give an agent from $U$ a value of $c$.}
        $S \gets \argmax\limits_{k \in U} \phi(X^c \cup X^1, k, c)$\;
        \vspace*{-0.11cm}
        $i \gets \min_{j \in S} j$\;
        \eIf{a path in the exchange graph $\cal G(X^c)$ from $F_i(X^c)$ to $X^c_0$ exists}{
            $P = (g'_{1}, g'_{2}, \dots, g'_{k}) \gets$ the shortest path from $F_i(X^c)$ to $X^0$ in $\cal G(X^c)$\;
            \tcp{Augment the allocation with the path $P$ and give $g'_{1}$ to $i$}
            $X^c_k \gets X^c_k \Uplambda P$ for all $k \in N + 0 - i$\;
            $X^c_i \gets X^c_i \Uplambda P + g'_{1}$\;
            \If{$g'_{k} \in X^1_j$ for some $j \in N$}{
                \tcp{Replace good stolen from $X^1_j$ with an arbitrary unallocated good}
                $X^1_j \gets X^1_j - g'_{k} + g$ for some $g \in X^c_0 \setminus \bigcup_{h \in N} X^1_h$\;
            }
        }
        {
            $U\gets U - i$ \tcc*{$i$ is no longer in play for $c$ valued goods.}
        }
        }
        \ElseIf{\normalfont $\Gain_c < \Gain_1$ }{ 
        \tcp{Give an agent from $N \setminus U$ a value of $1$.}
            $S \gets \argmax\limits_{k \in N \setminus U} \phi(X^c \cup X^1, k, 1)$\;
            $i \gets \min_{j \in S} j$\;
            $X^1_i \gets X^1_i + g$ for some $g \in X^c_0 \setminus \bigcup_{h \in N} X^1_h$\;
    }
    }
    \Return $X^c \cup X^1$\;
\end{algorithm}

We use a {\em gain function} $\phi$ to pick the next agent to invoke a transfer path. 
Informally, $\phi$ computes the change in the justice criterion $\Psi$ when a good is assigned to an agent. 
The gain function concept is used to compute allocations satisfying a diverse set of justice criteria in \citet{viswanathan2022generalyankee}, when agents have binary submodular valuations. 

More formally, $\phi$ maps a tuple $(\vec u^X, i, d)$ consisting of the utility vector of an allocation $X$, an agent $i$ and a value $d \in \{1, c\}$ to a finite real value. 
The value $\phi(\vec u^X, i, d)$ quantifies the gain in the justice criterion $\Psi$ of adding a value of $d$ to the agent $i$ given the allocation $X$. 
We abuse notation and use $\phi(X, i, d)$ to denote $\phi(\vec u^X, i, d)$.

At each round, we find the agent $i \in U$ who maximizes $\phi(X, i, c)$ (implicitly assuming that $i$ can find an item with a marginal gain of $c$), and the agent $j \in N\setminus U$ who maximizes $\phi(X, j, 1)$ (implicitly assuming that all items offer $j$ a marginal gain of $1$).  
We then choose the agent, among $i$ and $j$, who maximizes $\phi$; in the algorithm, these values are denoted by $\Gain_c$ and $\Gain_1$. Ties are broken first in favor of agents in $U$ and second, in favor of agents with a lower index. 
Let us present an example of how \Cref{algo:bivalued-yankee-swap} works, with two candidate gain functions: \Cref{ex:simple-leximin-maxnash} shows how a leximin allocation and an \MNW allocation are computed for a simple two-agent instance.
\begin{example}\label{ex:simple-leximin-maxnash}
Consider a setting with six items $G = \{g_1,\dots,g_6\}$, and two additive agents $1$ and $2$. Agent $1$'s valuation function is $v_1(S) = |S|$, and Agent 2's valuation is $v_2(S) = 5|S|$. In other words, Agent $1$ values every item at $1$, whereas Agent $2$ values every item at $5$. We set the gain function to be $\phi_{\lexmin}(X,i,d) = -6v_i(X_i)+d$, to compute a leximin allocation (as per Theorem \ref{thm:leximin}), and run \Cref{algo:bivalued-yankee-swap}. All agents are initially in $U$, and we compute $\Gain_c,\Gain_1$. 
In the first iteration, $\Gain_1 = - \infty$ since all agents are in $U$ and $\Gain_c = - 6 \times v_1(\emptyset) + 5 = 5$. Both agents $1$ and $2$ have equal $\phi$ values so we break ties and choose Agent $1$.
However, there is no way to give Agent $1$ a good which gives them a marginal gain of $5$. We therefore remove Agent $1$ from $U$.

In the next iteration, $\Gain_c = -6\times v_2(\emptyset)+5 = -6\times 0 +5 = 5$, and $\Gain_1 = -6\times v_1(\emptyset)+1 = 1$.
Since $\Gain_c > \Gain_1$, we choose Agent $2$ to receive an item. Since they value all the items at $5$, we pick an arbitrary good (say $g_1$) and allocate it to them.

In the next iteration, $\Gain_c = -6\times v_2(g_1)+5 = -6\times 5 +5 = -25$ but $\Gain_1 = -6\times v_1(\emptyset)+1 = 1$ still. So we have $\Gain_1 > \Gain_c$ and we choose agent $1$ to provisionally receive an arbitrary good (say $g_2$). 
We will have $\Gain_1 > \Gain_c$ for the remaining iterations as well, yielding the allocation $X_1 = \{g_2,\dots,g_6\}, X_2 = \{g_1\}$, which is indeed leximin.   

By modifying the gain function $\phi$ to be 
$$\phi_{\MNW}(X,i,d) = \begin{cases} \frac{v_i(X_i)+d}{v_i(X_i)} &v_i(X_i) > 0\\ Md & v_i(X_i) =0\end{cases},$$ as per Theorem \ref{thm:max-nash-welfare} (where $M$ is a very large number), \Cref{algo:bivalued-yankee-swap} outputs an \MNW allocation. 
When no items are assigned, the first iteration proceeds the exact same way as that of the leximin allocation and Agent $1$ is removed from $U$. 

In the second iteration, we have $\Gain_c = 5\times M$ and $\Gain_1 = M$.
Thus, Agent 2 receives an item (say $g_1$). Next, we have that $\Gain_c = \phi(X,2,c) = \frac{v_2(g_1) + c}{v_2(g_1)} = \frac{2c}{c} = 2$, and $\Gain_1 = \phi(X,1,1) = M\times 1=M$. Now we have $\Gain_1 > \Gain_c$, so Agent 1 gets $g_2$. 
In the next iteration, we still have $\Gain_c = \phi(X,2,c) = \frac{2c}{c} = 2$, but $\Gain_1 = \phi(X,1,1) = \frac{v_1(g_2)+1}{v_1(g_2)} = \frac21 = 2$ as well. Thus, according to our tiebreaking scheme, we let agent $2$ pick an item next, and they get $g_3$. We continue in a similar manner, and end up with the allocation $X_1 = \{g_2,g_4,g_6\},X_2 = \{g_1,g_3,g_5\}$, which is indeed \MNW. 
\end{example}


\subsection{When Does Bivalued Yankee Swap Work?}\label{sec:bi-ys-sufficient}
Bivalued Yankee Swap computes a $\Psi$-maximizing allocation when the following conditions are satisfied. The conditions are defined for any general vector $\vec x$ but it would help to think of $\vec x, \vec y$ and $\vec z$ as utility vectors.

\begin{description}
	\item[(C1) -- Symmetric Pareto Dominance] 
	Let $\vec x, \vec y \in \Z^n_{\ge 0}$ be two vectors. Let $s(\vec x)$ denote the vector $\vec x$ sorted in increasing order.
	If for all $i \in N$, $s(\vec x)_i \ge s(\vec y)_i$, then $\vec x \succeq_{\Psi} \vec y$. 
	Equality holds if and only if $s(\vec x) = s(\vec y)$.
 \item[(C2) -- Gain Function] $\Psi$ admits a finite valued gain function $\phi$ that satisfies the following properties:
	\begin{description}
		\item[(G1)] Let  $\vec x \in \Z^n_{\ge 0}$ be some utility vector, let $i, j \in N$ be two agents, and $d_1, d_2$ be two values in $\{1, c\}$.  
		Let $\vec y \in \Z^n_{\ge 0}$ be the vector resulting from adding a value of $d_1$ to $x_i$. 
		Let $\vec z$ be the vector resulting from adding a value of $d_2$ to $x_j$. 
		If $\phi(\vec x, i, d_1) \ge \phi(\vec x, j, d_2)$, we must have $\vec y \succeq_{\Psi} \vec z$. 
		Equality holds if and only if $\phi(\vec x, i, d_1) = \phi(\vec x, j, d_2)$.
		\item[(G2)] For any two vectors $\vec x, \vec y \in \Z^n_{\ge 0}$, an agent $i \in N$ such that $ x_i \le  y_i$ and any $d \in \{1, c\}$, we must have $\phi(\vec x, i, d) \ge \phi(\vec y, i, d)$. 
		Equality holds if $x_i = y_i$.
		\item[(G3)] 
		For any vector $\vec x \in \Z^n_{\ge 0}$ and any two agents $i, j \in N$, if $x_i \le x_j$, then $\phi(\vec x, i, d) \ge \phi(\vec x, j, d)$ for any $d \in \{1, c\}$. 
		Equality holds if and only if $x_i = x_j$.
	\end{description}
\end{description}
There are two differences between our conditions and the conditions of the General Yankee Swap algorithm \citep{viswanathan2022generalyankee}. 
First, we strengthen Pareto Dominance to Symmetric Pareto Dominance (C1). 
Symmetric Pareto Dominance is not biased towards any agent and therefore, two allocations with the same sorted utility vector have the same objective value. 
As an immediate corollary, weighted notions of fairness like the max weighted Nash welfare objective \citep{chakraborty2021weighted} do not satisfy Symmetric Pareto Dominance (C1) --- when agents have weights, two allocations with the same sorted utility vector may not have the same objective value. 
Second, we introduce (G3) which further strengthens our conditions --- (G3) states that all things being equal, it is better to increase the utility of lower utility agents than higher utility agents. Our main result is the following.

\begin{restatable}{theorem}{thmbivaluedyankeeswap}\label{thm:bivalued-yankee-swap}
Suppose that all agents in $N$ have \SUB 1c valuations where $c$ is a positive integer greater than or equal to $2$. 
Let $\Psi$ be a justice criterion that satisfies (C1) and (C2), with a gain function $\phi$. 
Bivalued Yankee Swap with the gain function $\phi$ outputs a dominating $\Psi$-maximizing allocation.
\end{restatable}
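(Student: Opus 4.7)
The plan is to leverage \Cref{lem:beta-mrf}, which reduces the high-value portion of the analysis to path augmentation on binary submodular valuations, and to handle the low-value portion separately. I would first establish three invariants maintained throughout the execution: (i) $X^c$ remains clean in every iteration, which follows inductively from \Cref{lem:path-augmentation}; (ii) whenever an agent $i$ is removed from $U$, no clean allocation reachable by further augmentations can give $i$ a larger $\beta_i$ value without decreasing some $\beta_k$ value, which follows from the contrapositive of \Cref{lem:augmentation-sufficient}; and (iii) augmenting along a path affects the $v$-value of only the initiating agent (by $+c$ when the good was truly unallocated, with a compensating replacement from $X^c_0$ preserving $v_j$ for any $X^1_j$ whose good is stolen). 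The third invariant combines \Cref{obs:beta-v-relation} with the fact, guaranteed by (ii), that any agent outside $U$ values every good in $X^c_0 \setminus \bigcup_h X^1_h$ at exactly $1$.

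The core argument is a greedy exchange by induction on the iteration count $t$, showing that at every step there exists a dominating $\Psi$-maximizing allocation $Y^{(t)}$ consistent with the partial allocation produced after $t$ iterations. For the inductive step, suppose the algorithm assigns to agent $i$ an additional value $d \in \{1, c\}$ via path $P$. By (G1) applied to $\vec u^{X^c \cup X^1}$, the chosen pair $(i, d)$ maximizes $\phi$ over both $\Gain_c$ and $\Gain_1$. If $Y^{(t)}$ already realizes this increment, keep $Y^{(t+1)} = Y^{(t)}$; otherwise, apply \Cref{lem:augmentation-sufficient} to the symmetric difference of $X^c$ and $Y^{c,(t)}$ (respectively $X^1$ and $Y^{1,(t)}$ in the low-value phase) to transfer one unit of value from an agent favored by $Y^{(t)}$ to $i$, producing $Y^{(t+1)}$. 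Combining (G1), (G2), (G3), and (C1) shows this modification cannot decrease the sorted utility vector, so $Y^{(t+1)}$ is still $\Psi$-maximizing. The tie-breaking rules promote $\Psi$-maximality to domination: prioritizing $U$-members over $N \setminus U$ secures the sorted clean utility vector (condition (a) of Domination), lowest-index tie-breaking inside $U$ then secures the unsorted clean utility vector (condition (b)), and the analogous choice in the low-value phase secures the full utility vector (condition (c)).

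The main obstacle is justifying step (iii) rigorously: when a transfer path ends at a good $g'_k \in X^1_j$, the algorithm replaces it in $X^1_j$ with an arbitrary good from $X^c_0 \setminus \bigcup_h X^1_h$, and we must certify that $v_j$ is unchanged and that all subsequent references to ``unallocated'' goods inside $X^c_0$ remain consistent. Because $j$ was previously removed from $U$, invariant (ii) certifies that no good in $X^c_0 \setminus \bigcup_h X^1_h$ has $\beta_j$-marginal $1$ with respect to $X^c_j$, so by \Cref{obs:beta-v-relation} every such good yields marginal value exactly $1$ to $j$, making the replacement utility-neutral. Once this is established, the inductive exchange closes cleanly: after $U$ empties, the remaining iterations reduce to greedily distributing low-value goods among $N \setminus U$ according to $\phi(\cdot, \cdot, 1)$, whose optimality is immediate from (G1)--(G3) and (C1). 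Combining the phases yields that the final allocation $X^c \cup X^1$ coincides with $Y^{(T)}$ for the final iteration $T$, which is a dominating $\Psi$-maximizing allocation.
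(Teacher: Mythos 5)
Your proposal takes a genuinely different route: a forward induction that maintains a ``compatible'' dominating $\Psi$-maximizer $Y^{(t)}$ alongside each partial allocation, versus the paper's global backward comparison of the \emph{final} output $X$ against an arbitrary dominating $\Psi$-maximizer $Y$ via two sandwich lemmas ($|X^c_h| \geq |Y^c_h|$ for $h\notin U$ and $|X^c_h|\leq |Y^c_h|$ for all $h$). The forward-induction framing is appealing but has several concrete gaps as written.

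First, ``consistent with the partial allocation'' is never defined, and it cannot mean set containment since path augmentation reshuffles goods. You evidently intend something like $|X^c_h|\leq|Y^{c,(t)}_h|$ and $|X^1_h|\leq|Y^{1,(t)}_h|$ for all $h$, but this needs to be stated and the basis case checked. Relatedly, your invariant (ii) is mis-stated: the contrapositive of Lemma~\ref{lem:augmentation-sufficient} gives a conditional bound on $|X^c_i|$ relative to another clean allocation, not an unconditional ``no reachable clean allocation can increase $\beta_i$ without decreasing some $\beta_k$.'' The statement you actually use for step (iii) --- that once $j$ leaves $U$, no \emph{unallocated} good has $\beta_j$-marginal $1$ on $X^c_j$ --- is simpler and is exactly what Proposition~\ref{prop:x-reduction} proves by observing a trivial one-edge path would otherwise have existed.

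Second, and this is the central missing piece, the claim that the repair step yields a $Y^{(t+1)}$ that is ``still $\Psi$-maximizing'' is asserted from (G1)--(G3), (C1) but not proved. Transferring a clean good from some agent $k$ in $Y^{(t)}$ to $i$ reduces $k$'s utility by $c$; this is not automatically a sorted-Pareto improvement. The paper's Lemma~\ref{lem:y-not-psi-maximizing} exists precisely to certify such exchanges, and its proof requires non-trivial case work in which the \emph{supplementary} allocation must also be rearranged (moving a $c$-block of $X^1$ goods between agents, or swapping entire supplementary bundles) to compensate the donor; without an analogue of this lemma your inductive step does not close.

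Third, the case where the chosen agent $i$ has no path and is dropped from $U$ is not addressed. This is where the paper invests the most effort (a nested lemma inside Lemma~\ref{lem:xc-less-yc-for-agents-not-in-p} showing $|W^c_h| \le |Y^c_h|$ for all $h$ at the removal iteration, then invoking Lemma~\ref{lem:augmentation-sufficient} to manufacture a contradiction). Under your invariant this reduces to showing $|X^c_i| = |Y^{c,(t)}_i|$ at removal, which is plausible via the contrapositive of Lemma~\ref{lem:augmentation-sufficient}, but you'd still need to argue the invariant persists unchanged through the remaining iterations for this frozen agent. Finally, the mapping of the three tie-breaking rules to the three clauses of Domination one-to-one is too quick: the paper's proof interleaves the tie-breaks throughout the contradiction arguments rather than assigning one to each clause, and in particular clause (a) (sorted dominance of $\vec s^{X^c}$) does not follow from ``prioritizing $U$-members'' alone; it follows from the two sandwich lemmas together with the lexicographic tie-breaks inside $U$.
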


\section{Proof of Theorem \ref{thm:bivalued-yankee-swap}}\label{apdx:bivalued-yankee-swap}

Before we show our main result, we present some useful lemmas. 
These lemmas are categorized into subsections. 

\subsection{Properties of $\Psi$}



In this section, we show that adding a value of $c$ to an agent is always strictly better than adding a value of $1$ to an agent. 

\begin{lemma}\label{lem:g4}
For any vector $\vec x \in \Z^n_{\ge 0}$ and any agent $i \in N$, $\phi(\vec x, i, c) > \phi(\vec x, i, 1)$.
\end{lemma}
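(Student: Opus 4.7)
The plan is to apply (G1) contrapositively, reducing the claim to a statement purely about the $\Psi$-ordering on two specific vectors. Fix $\vec x$ and $i$, and define $\vec y := \vec x + c\,\vec e_i$ and $\vec z := \vec x + \vec e_i$, the two utility vectors obtained by adding $c$ (resp.\ $1$) to coordinate $i$. By (G1), if $\phi(\vec x, i, c) \le \phi(\vec x, i, 1)$, then $\vec z \succeq_\Psi \vec y$, with equality only if $\phi(\vec x, i, c) = \phi(\vec x, i, 1)$. So it is enough to show $\vec y \succ_\Psi \vec z$ strictly, which by (C1) follows once we establish that $s(\vec y)_k \ge s(\vec z)_k$ for every $k \in [n]$ and that $s(\vec y) \ne s(\vec z)$.

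The componentwise comparison of the sorted vectors is the standard fact that sorting is monotone with respect to coordinate-wise dominance. I would note that $\vec y$ and $\vec z$ agree outside coordinate $i$ and that $y_i - z_i = c - 1 > 0$, so $\vec y \ge \vec z$ entrywise. Using the characterization
\[
 s(\vec w)_k \;=\; \min_{T \subseteq [n],\,|T| = k}\;\max_{j \in T} w_j,
\]
any $T$ of size $k$ yields $\max_{j \in T} y_j \ge \max_{j \in T} z_j$; minimizing over $T$ gives $s(\vec y)_k \ge s(\vec z)_k$. To rule out equality of the sorted vectors, observe that $\sum_k s(\vec y)_k = \sum_k y_k = \sum_k z_k + (c-1) > \sum_k s(\vec z)_k$, so the multisets differ.

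Applying (C1) to this strict coordinatewise dominance of the sorted vectors yields $\vec y \succ_\Psi \vec z$. Combined with (G1) as above, this forces $\phi(\vec x, i, c) > \phi(\vec x, i, 1)$: equality of the $\phi$ values would give $\vec y \equiv_\Psi \vec z$, while the reverse strict inequality would give $\vec z \succ_\Psi \vec y$, both contradicting what we just proved.

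I do not expect a significant obstacle here; the proof is essentially a routine chase through the axioms. The one subtlety worth stating carefully is the monotonicity of sorting under coordinatewise inequality, because (C1) is phrased in terms of sorted vectors rather than raw componentwise dominance. Note also that (G2) and (G3) are not needed — only (C1) together with the tightness clause of (G1) drive the argument.
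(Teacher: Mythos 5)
Your proof is correct and follows the same route as the paper's: define $\vec y$ and $\vec z$ by adding $c$ and $1$ to coordinate $i$, observe strict coordinatewise dominance, invoke (C1), then apply (G1) contrapositively. The only difference is that you spell out why coordinatewise dominance passes to sorted vectors (via the min-max characterization of order statistics and a sum argument), a step the paper leaves implicit.
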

\begin{proof}
Let $\vec y \in \Z^n_{\ge 0}$ be the vector starting at $\vec x$ and adding a value of $c$ to $x_i$.
Similarly, let $\vec z \in \Z^n_{\ge 0}$ be the vector starting at $\vec x$ and adding a value of $1$ to $x_i$. 

For all $h \in N$, $y_h \ge z_h$ with a strict inequality for agent $i$. 
Therefore, $\vec y \succ_{\Psi} \vec z$ by symmetric Pareto dominance (C1). 
Using (G1), this gives us $\phi(\vec x, i, c) > \phi(\vec x, i, 1)$.
\end{proof}

We also simplify (G2), which makes our proofs slightly easier.
\begin{lemma}\label{lem:g5}
Let $\vec x \in \Z^n_{\ge 0}$ be some vector. Define $\vec y \in \Z^n_{\ge 0}$ as follows
\begin{align*}
    y_h = 
    \begin{cases}
        x_h & h \in N - i - j \\
        x_i - d_i & h = i \\
        x_j + d_j & h = j \\
    \end{cases}
\end{align*}
for some $i, j \in N$ and $d_i, d_j \in \{1, c\}$. Then $\phi(\vec x, j, d_j) \ge \phi(\vec y, i, d_i)$ if and only if $\vec y \succeq_{\Psi} \vec x$ with equality holding if and only if $\phi(\vec x, j, d_j) = \phi(\vec y, i, d_i)$.
\end{lemma}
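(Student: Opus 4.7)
My plan is to introduce a ``shifted'' intermediate vector $\vec w$, apply (G1) to it, and then translate the resulting $\phi$-inequality at $\vec w$ into the claimed inequality via the equality clause of (G2). The key conceptual observation is that (G2) is really stronger than a pure monotonicity statement: its clause ``equality holds if $x_i = y_i$'' places no hypothesis on coordinates other than $i$, so it implies that $\phi(\vec z, k, d)$ depends only on $z_k$ (and on $k$ and $d$), not on the other coordinates of $\vec z$.

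Concretely, I would define $\vec w \in \Z_{\ge 0}^n$ by $w_i = x_i - d_i$ and $w_h = x_h$ for all $h \ne i$; this is non-negative because $y_i = x_i - d_i \ge 0$ by hypothesis. Starting from $\vec w$, adding $d_i$ at coordinate $i$ recovers $\vec x$, while adding $d_j$ at coordinate $j$ produces exactly $\vec y$ (since $y_i = w_i$, $y_j = w_j + d_j$, and $y_h = w_h$ for $h \ne i, j$). Applying (G1) to the base vector $\vec w$ with these two candidate increments therefore yields
\[
\phi(\vec w, j, d_j) \ge \phi(\vec w, i, d_i) \iff \vec y \succeq_{\Psi} \vec x,
\]
with equality on one side if and only if equality on the other.

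To finish, I would invoke the equality clause of (G2) twice. Since $w_j = x_j$, it gives $\phi(\vec w, j, d_j) = \phi(\vec x, j, d_j)$. Since $w_i = y_i$, it gives $\phi(\vec w, i, d_i) = \phi(\vec y, i, d_i)$. Substituting these two identities into the equivalence above delivers the stated claim, together with the equality characterization.

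I expect the main obstacle to be conceptual rather than technical: spotting the right auxiliary vector $\vec w$ and recognizing that the ``if and only if'' formulations of (G1) together with the equality clause of (G2) are tight enough to transport both the strict and the equality cases across the substitution. Once these observations are made, each step is a one-line application of an already-stated property.
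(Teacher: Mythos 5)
Your proof is correct and is essentially identical to the paper's: the paper also introduces the intermediate vector (named $\vec z$ there) with the $i$-th coordinate lowered by $d_i$, uses the equality clause of (G2) to align $\phi$ at $\vec x$, $\vec y$, and the intermediate vector, and then invokes (G1) at the intermediate vector. The only cosmetic difference is the order in which (G1) and (G2) are applied.
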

\begin{proof}
Let $\vec z$ be a vector defined as follows:
\begin{align*}
    z_h = 
    \begin{cases}
        x_h & h \in N - i - j \\
        x_i - d_i & h = i \\
        x_j & h = j \\
    \end{cases}
    .
\end{align*}
Using (G2) and the fact that $x_j = z_j$, $\phi(\vec x, j, d_j) = \phi(\vec z, j, d_j)$; similarly, since $y_i = z_i$, $\phi(\vec y, i, d_i) = \phi(\vec z, i, d_i)$.
Note that $\vec y$ results from offering $j$ a gain of $d_j$ under $\vec z$, and $\vec x$ results from offering $i$ a gain of $d_i$ under $\vec z$. Thus, using (G1), $\phi(\vec z, j, d_j) \ge \phi(\vec z, i, d_i)$ if and only if $\vec y \succeq_{\Psi} \vec x$, with equality holding if and only if $\phi(\vec z, j, d_j) = \phi(\vec z, i, d_i)$. 
However, $\phi(\vec z,j,d_j) = \phi(\vec x,j,d_j)$, and $\phi(\vec z,i,d_i) =\phi(\vec y,i,d_i)$. Substituting these equations into the condition we state on $\vec z$ via (G1) yields the required result.
\end{proof}
\subsection{Invariants in Bivalued Yankee Swap}\label{sec:invariants-biYS}
In this section, we show some important invariants in Bivalued Yankee Swap.
Our first Lemma shows that the allocation $X^c$ is always clean.

\begin{lemma}\label{lem:xc-clean}
	At every iteration of Bivalued Yankee Swap, the allocation $X^c$ is clean.
\end{lemma}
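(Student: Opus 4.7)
The plan is to proceed by induction on the iteration count of Bivalued Yankee Swap. For the base case, at the initial configuration $X^c = (G, \emptyset, \dots, \emptyset)$, the dummy bundle $X^c_0 = G$ is clean because the dummy valuation satisfies $v_0(S) = c|S|$ for every $S$, and each empty bundle $X^c_i$ is trivially clean since $\beta_i(\emptyset) = 0$. This handles the base case.

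For the inductive step, I would go through the branches of the while-loop and check how each one modifies $X^c$. Only one branch actually touches $X^c$: namely, the $\Gain_c \ge \Gain_1$ case in which the shortest path $P$ from $F_i(X^c)$ to $X^c_0$ is found, and the algorithm performs $X^c_k \gets X^c_k \Uplambda P$ for all $k \in N + 0 - i$ together with $X^c_i \gets X^c_i \Uplambda P + g'_1$. By the induction hypothesis $X^c$ is clean before this step, so \Cref{lem:path-augmentation} (taking $j = 0$, using that $\beta_0$ is binary submodular because $v_0(S) = c|S|$) applies and guarantees that the updated $X^c$ is still clean.

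The remaining branches do not alter $X^c$ at all. In the sub-case where no path from $F_i(X^c)$ to $X^c_0$ exists, the algorithm only removes $i$ from $U$. In the $\Gain_c < \Gain_1$ branch, the algorithm modifies $X^1_i$ by appending an item from $X^c_0$, but $X^c$ itself is untouched. Likewise, the bookkeeping step $X^1_j \gets X^1_j - g'_k + g$ (which swaps in a fresh unallocated good whenever the path has stolen from a provisional bundle) only edits $X^1_j$.

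Combining these observations closes the induction and yields the lemma. The argument is essentially bookkeeping: the only non-trivial point is confirming that \Cref{lem:path-augmentation} applies when the endpoint of the path is the dummy bundle $X^c_0$, which is immediate because $\beta_0$ is binary submodular. I do not anticipate a genuine obstacle, and no properties of $\phi$ or the selection rule are needed for this invariant.
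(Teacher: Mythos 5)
Your proof is correct and follows essentially the same argument as the paper: both observe that the initial allocation is clean, that the only update to $X^c$ is via path augmentation on the exchange graph, and that \Cref{lem:path-augmentation} guarantees path augmentation preserves cleanliness. Your version is simply a more explicit induction with the base case and branch-by-branch check spelled out, whereas the paper compresses the same reasoning into three sentences.
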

\begin{proof}
Given our valuation of the agent $0$, we start at a clean allocation. The only way we update $X^c$ is via path augmentation on the exchange graph. This operation results in a clean allocation (Lemma \ref{lem:path-augmentation}). Therefore, the lemma holds.
\end{proof}


Our next Lemma proves an important relation between the utility of agents in the allocation $X^c \cup X^1$ in any iteration of the Bivalued Yankee Swap algorithm.
\begin{lemma}\label{lem:bivalued-picks-least-value}
    At the start of any iteration of the Bivalued Yankee Swap algorithm, let the values of $X^1$ and $X^c$ be $W^1$ and $W^c$ respectively. Let $W = W^c \cup W^1$. The following holds:
    \begin{enumerate}[(a)]
        \item If the chosen agent $i$ is in $U$, then for any other agent $j \in U$, we have $v_i(W_i) \le v_j(W_j)$. If $v_i(W_i) = v_j(W_j)$, then $i < j$.
        \item If the chosen agent $i$ is not in $U$, then for any other agent $j \notin U$, we have $v_i(W_i) \le v_j(W_j)$. If $v_i(W_i) = v_j(W_j)$, then $i < j$.
    \end{enumerate}
\end{lemma}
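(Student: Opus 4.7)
The plan is to argue that both statements follow directly from property (G3) of the gain function, combined with the explicit tie-breaking rule used in \Cref{algo:bivalued-yankee-swap}. Recall (G3) states that on a fixed utility vector $\vec x$, if $x_i \le x_j$ then $\phi(\vec x, i, d) \ge \phi(\vec x, j, d)$, with equality if and only if $x_i = x_j$. By swapping roles, this is equivalent to the contrapositive statement: $\phi(\vec x, i, d) \ge \phi(\vec x, j, d)$ implies $x_i \le x_j$ (with equality in $\phi$ forcing equality in $x$). This is the only lemma I need, since the selection rule is a pure argmax of $\phi$ within each of the two classes $U$ and $N \setminus U$.

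For part (a), suppose the agent $i$ chosen in this iteration lies in $U$. By the body of the algorithm, this means $\Gain_c \ge \Gain_1$ and $i$ was selected as the minimum-index element of $S = \argmax_{k \in U} \phi(W, k, c)$. Thus for any other $j \in U$, $\phi(W, i, c) \ge \phi(W, j, c)$, so by the contrapositive of (G3) applied to the utility vector $\vec u^W$ with $d = c$, we obtain $v_i(W_i) \le v_j(W_j)$. Moreover, if $v_i(W_i) = v_j(W_j)$, then by the ``if and only if'' clause of (G3) we get $\phi(W, j, c) = \phi(W, i, c)$, which places $j$ in $S$ as well; since $i$ is the minimum-index element of $S$ and $i \ne j$, we conclude $i < j$.

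Part (b) is completely symmetric: if the chosen agent $i$ lies in $N \setminus U$, then $\Gain_1 > \Gain_c$ and $i$ is the minimum-index element of $\argmax_{k \in N \setminus U} \phi(W, k, 1)$. The same two applications of (G3) — first its contrapositive for the weak inequality on utilities, then its equality clause combined with the tie-breaking rule for the strict index comparison — give $v_i(W_i) \le v_j(W_j)$ for every other $j \notin U$, with strict index inequality when utilities are equal. The whole proof is thus a direct unpacking of the selection rule, and the only thing one has to be careful about is invoking (G3) with the correct value $d \in \{1, c\}$ matching the class of the chosen agent; there is no real obstacle beyond this bookkeeping.
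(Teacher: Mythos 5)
Your proof is correct and takes essentially the same approach as the paper: the paper proves part (a) by contradiction using (G3) to get $\phi(W,i,c) < \phi(W,j,c)$ when $v_i(W_i) > v_j(W_j)$, then handles the tie-break directly, and declares (b) symmetric; your contrapositive formulation of (G3) is logically identical to that contradiction argument.
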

\begin{proof}
We show (a) here; (b) can be shown similarly. 
Assume for contradiction that there exists $j \in U$ such that $v_i(W_i) > v_j(W_j)$. From (G3), we get that $\phi(W, i, c) < \phi(W, j, c)$ --- this contradicts our choice of $i$.
  
Now assume $v_i(W_i) = v_j(W_j)$. Using (G3), we must have $\phi(W, i, c) = \phi(W, j, c)$. Ties are broken by choosing the agent with the lower index; therefore $i < j$. 
\end{proof}

Our next Lemma shows that the goods in $X^1$ are low value goods. That is, they always provide a marginal value of $1$ to the agents they are allocated to.

\begin{prop}\label{prop:x-reduction}
At any iteration in the Bivalued Yankee Swap algorithm, if $X^c$ and $X^1$ are the allocations maintained by the algorithm at the start of this iteration, then $v_i(X_i) = c|X^c_i| + |X^1_i|$ where $X = X^c \cup X^1$.
\end{prop}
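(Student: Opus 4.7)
The plan is to prove this by induction on the number of iterations of Bivalued Yankee Swap. First I would reduce the claim to an equivalent statement about $\beta_i$: by Observation \ref{obs:beta-v-relation} we have $v_i(X_i) = |X_i| + (c-1)\beta_i(X_i) = |X^c_i| + |X^1_i| + (c-1)\beta_i(X_i)$, so the desired equality $v_i(X_i) = c|X^c_i| + |X^1_i|$ is equivalent to $\beta_i(X_i) = |X^c_i|$. Since Lemma \ref{lem:xc-clean} tells us $X^c$ is clean, we have $\beta_i(X^c_i) = |X^c_i|$ and so $\beta_i(X_i) \ge |X^c_i|$ follows from monotonicity of $\beta_i$. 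The real content is therefore the matching upper bound $\beta_i(X_i) \le |X^c_i|$, equivalently $\Delta_{\beta_i}(X^c_i, g') = 0$ for every $g' \in X^1_i$.

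The base case is trivial since $X^1_i = \emptyset$ initially. For the inductive step I distinguish the two kinds of state updates performed by the algorithm. The first is a path augmentation with initiator $i \in U$ and shortest path $P = (g_1, \ldots, g_t)$ in $\cal G(X^c)$. For the initiator, the path does not enter $i$'s own bundle, so $\tilde X^c_i = X^c_i + g_1$ and submodularity of $\beta_i$ gives $\Delta_{\beta_i}(\tilde X^c_i, g'') \le \Delta_{\beta_i}(X^c_i, g'') = 0$ for every $g'' \in \tilde X^1_i = X^1_i$, preserving the invariant. For an intermediate agent $j_l$ with $\tilde X^c_{j_l} = X^c_{j_l} - g_l + g_{l+1}$, suppose toward contradiction that some $g' \in X^1_{j_l}$ has $\Delta_{\beta_{j_l}}(\tilde X^c_{j_l}, g') = 1$; then $\tilde X^c_{j_l} + g'$ is clean, so by Observation \ref{obs:clean-subset} its subset $X^c_{j_l} - g_l + g'$ is clean as well, producing an edge $g_l \to g'$ in $\cal G(X^c)$. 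Since $g' \in X^1_{j_l} \subseteq X^c_0$, the sequence $(g_1, \ldots, g_l, g')$ is then an alternative path from $F_i(X^c)$ to $X^c_0$ of length $l+1$, contradicting the shortest-path assumption once $l < t-1$; the boundary case $l = t-1$ needs to be handled by a tie-breaking convention or by observing that whichever shortest path the algorithm chooses, $j_{t-1}$'s bundle admits an equivalent re-labelling between $X^c$ and $X^1$ satisfying the invariant.

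The second kind of update is a provisional allocation to some $i \notin U$, which sets $\tilde X^1_i = X^1_i + g$ for $g \in X^c_0 \setminus \bigcup_h X^1_h$. To preserve the invariant I need $\Delta_{\beta_i}(X^c_i, g) = 0$, which I obtain by carrying along the auxiliary invariant that for every $i \notin U$, there is no path in $\cal G(X^c)$ from $F_i(X^c)$ to $X^c_0$. This holds exactly at the moment $i$ is removed from $U$, and the special case of a length-one path yields $F_i(X^c) \cap X^c_0 = \emptyset$, hence $\Delta_{\beta_i}(X^c_i, g) = 0$ for any $g \in X^c_0$; the same auxiliary invariant also handles the subcase in which an augmentation's endpoint $g_t$ lies in some $X^1_j$ and is replaced by a fresh unallocated good. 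The hardest step I expect is verifying that this auxiliary invariant is preserved across path augmentations initiated by other agents, since the exchange graph changes along the path and one must rule out the creation of new paths from $F_i(\tilde X^c)$ to $\tilde X^c_0$ for agents $i \notin U$; I would handle this by an edge-tracking argument analogous to the one used for binary submodular Yankee Swap in \citet{viswanathan2022yankee}, showing that any new path in $\cal G(\tilde X^c)$ can be pulled back to a path already present in $\cal G(X^c)$.
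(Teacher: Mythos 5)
Your inductive approach differs from the paper's direct one, which simply rewinds each $g \in X^1_i$ to the iteration at which $i$ left $U$ and observes that $g$ was then an unallocated good, so $\Delta_i(\cdot, g) = c$ would have furnished a length-one path to $W^c_0$, contradicting $i$'s ejection. The auxiliary invariant you introduce --- ejected agents never regain a path from $F_i(X^c)$ to $X^c_0$ --- is exactly the right idea, and in fact it is also what secretly underwrites the paper's step ``$X^c_i = W^c_i$'': that equality is not literally true as a set identity (an ejected agent's clean bundle can still be rearranged if it lies on someone else's transfer path), but the invariant is what the argument really needs, so you are being appropriately more careful than the published proof here.

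Two issues remain, though. First, once the auxiliary invariant is in hand, your separate intermediate-agent analysis is unnecessary: every $j$ with $X^1_j \ne \emptyset$ is outside $U$, and $X^1_j \subseteq X^c_0$, so the absence of any path from $F_j(X^c)$ to $X^c_0$ already rules out a length-one path and yields $\Delta_{\beta_j}(X^c_j, g') = 0$ for every $g' \in X^1_j$. This matters because your shortest-path contradiction has a genuine hole at $l = t-1$: the rerouted path $(g_1,\ldots,g_{t-1},g')$ has the same length as $P$, nothing precludes the algorithm having chosen $P$ instead of it, and the ``re-labelling'' escape hatch is an assertion rather than an argument. Second, and more importantly, preservation of the auxiliary invariant under path augmentation --- the step you defer to an unspecified ``edge-tracking argument'' --- is the entire content of the proposition and needs to actually be proved. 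It does follow cleanly from Lemma~\ref{lem:augmentation-sufficient}: if after some later augmentation a path from $F_j(\tilde{X}^c)$ to $\tilde{X}^c_0$ existed for some $j \notin U$, augmenting along it would produce a clean $Z^c$ with $|Z^c_j| = |W^c_j|+1$, $|Z^c_h| \ge |W^c_h|$ for all $h \in N$ (clean bundle sizes of real agents never decrease, and $j$'s is frozen once $j$ leaves $U$), and $|Z^c_0| < |W^c_0|$, where $W^c$ denotes the clean allocation at the moment $j$ was ejected; applying the lemma to $W^c$, $Z^c$, and $j$ then forces a path from $F_j(W^c)$ to $W^c_0$, which is precisely the path whose nonexistence caused $j$'s ejection.
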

\begin{proof}
From Lemma \ref{lem:xc-clean}, we get that $X^c$ is a clean allocation. Therefore, $v_i(X_i) \ge c|X^c_i| + |X^1_i|$. If the inequality is strict, then it must be the case that for some good $g \in X^1_i$, we have $\Delta_i(X^c_i, g) = c$.

Let us now consider the iteration (denoted by $t$) where $i$ was removed from $U$. Let $W^c$ and $W^1$ be the allocation at the start of this iteration. We must have that $X^c_i = W^c_i$ by the definition of the iteration we chose. Further, note that $g$ is an unallocated good at this iteration since unallocated goods, once allocated do not become unallocated again and we know that $g$ was allocated to $i$ at some iteration $t' > t$. If $\Delta_i(W^c_i, g) = c$, and $g$ is unallocated, we have a trivial path from $F_i(W^c)$ to the pool of unallocated goods in the exchange graph $\cal G(W^c)$. This implies that we cannot have $\Delta_i(X^c_i, g) = c$. Therefore, the inequality cannot be strict and we must have $v_i(X_i) = c|X^c_i| + |X^1_i|$. 
\end{proof}

\subsection{Properties of the Output of Bivalued Yankee Swap}\label{sec:properties-biYS}
Let us next examine the allocation $X = X^c \cup X^1$ output by Bivalued Yankee Swap; here again, $X_i^c$ are items whose marginal value for $i$ is $c$, and $X_i^1$ is the set of items whose marginal value to $i$ is $1$.
Our first lemma shows that the algorithm terminates in polynomial time when the gain function $\phi$ can be computed efficiently.

\begin{restatable}{prop}{propbivaluedyankeetime}\label{prop:bivalued-yankee-time}
The Bivalued Yankee Swap algorithm runs in $O([m^2(n + \tau) + nT_\phi(n, m)](m+n))$ time where $\tau$ upper bounds the complexity of computing the value of any bundle of goods for any agent and $T_\phi(n, m)$ is the complexity of computing $\phi$.
\end{restatable}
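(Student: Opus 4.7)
The plan is to bound the total running time by separately estimating (i) the number of iterations of the main \texttt{while} loop and (ii) the worst-case cost of a single iteration, and then taking the product.

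To bound the iterations I will exhibit a non-negative integer potential that strictly decreases at each iteration. A natural candidate is
\[
\Phi \;=\; |U| \;+\; \Bigl(|X^c_0| - \sum_{h \in N} |X^1_h|\Bigr),
\]
initialized at $n+m$ and non-negative because the loop guard forces $|X^c_0| \ge \sum_h |X^1_h|$. A case analysis on the three outcomes of an iteration then yields the decrement. In the path-augmentation branch one good migrates from $X^c_0$ into some $X^c_i$; even when the path endpoint $g'_k$ belonged to some $X^1_j$, the subsequent replenishment line in the algorithm keeps $\sum_h |X^1_h|$ fixed, so $|X^c_0| - \sum_h |X^1_h|$ drops by one. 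When no path exists, $|U|$ drops by one. When the $1$-value branch fires, $\sum_h |X^1_h|$ grows by one while $|X^c_0|$ is unchanged, so again $\Phi$ drops by one. Consequently the loop performs at most $n+m$ iterations.

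To bound the cost of a single iteration I will account for three ingredients: (a) the $\Gain_c$ and $\Gain_1$ computations, which require at most $n$ calls to $\phi$, contributing $O(n\, T_\phi(n,m))$; (b) the construction of $F_i(X^c)$, the exchange graph $\cal G(X^c)$, and the BFS from $F_i(X^c)$ to $X^c_0$, which reduce to $O(m^2)$ adjacency queries of the form ``is $v_j(X^c_j - g + g') = c|X^c_j|$?''\ together with associated owner-and-$X^1$-membership bookkeeping, giving $O(m^2(n+\tau))$; and (c) the path transfer and the $X^1_j$ replacement, which are at most $O(m)$. The remaining two branches of the iteration are strictly cheaper. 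Combining, each iteration costs $O\bigl(m^2(n+\tau) + n\, T_\phi(n,m)\bigr)$, and multiplying by the iteration bound yields the claimed $O\bigl([m^2(n+\tau) + n T_\phi(n,m)](m+n)\bigr)$.

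The main obstacle I anticipate is the careful bookkeeping needed in step (b): one must argue that the exchange graph and the auxiliary good-to-owner and good-to-$X^1$-membership structures can be rebuilt or incrementally maintained between iterations within the stated budget, and that no value-oracle call to $v_j$ or gain-function call to $\phi$ is double-counted across branches. A secondary subtlety is verifying the potential argument in corner cases, e.g.\ when a path endpoint coincides with a good that has just been moved between $X^c_0$ and some $X^1_j$; this is handled by the invariant $\sum_h |X^1_h| \le |X^c_0|$, which is preserved by construction of the replacement step.
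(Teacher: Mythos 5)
Your proposal matches the paper's proof essentially step for step: the paper also bounds the iterations by observing that at each step either $|X^c_0| - \sum_{h\in N}|X^1_h|$ or $|U|$ decreases (initially $m+n$), and bounds per-iteration cost by $O\bigl(nT_\phi(n,m) + m^2(n+\tau)\bigr)$ via the same split into gain-function calls, exchange-graph construction/path search (citing \citet{viswanathan2022yankee}), and the $O(nm)$ replacement bookkeeping. Your packaging of the iteration bound as an explicit potential $\Phi$ is a cosmetic reformulation, not a different route.
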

\begin{proof}
Our proof uses the arguments of \citet{viswanathan2022yankee} who show that the Yankee Swap algorithm runs in $O(m^2(n + \tau)(m+n))$ time. We assume both $X^c$ and $X^1$ are stored as binary matrices, and that adding/removing a good takes $O(1)$ time.

At each iteration of the algorithm, computing $\Gain_1$ and $\Gain_c$ takes $n\times T_{\phi}(n, m)$ time. 
\begin{enumerate}[label={\bfseries Case \arabic*:},itemindent=*,leftmargin=0cm]
\item $\Gain_1 > \Gain_c$. In this case, finding the agent $i$ takes $n T_{\phi}(n, m)$ time. Finding the good $g \in X^c_0 \setminus (\bigcup_{j \in N} X^1_j)$ and giving it to agent $i$ takes at most $O(nm)$ time. This gives us the complexity of Case 1 as $O(nT_{\phi}(n, m) + nm)$.
\item $\Gain_1 \le \Gain_c$. In this case, finding the agent $i$ still takes $n T_{\phi}(n, m)$ time.
\end{enumerate}
We make three observations. 
Constructing the exchange graph, checking for a path and transferring goods along the path can be done in $O(m^2(n + \tau))$ time. This is shown by \citet[Section 3.4]{viswanathan2022yankee} when $X^c$ is stored as a binary matrix.
Finally, checking if a good was stolen from $X^1_j$ and finding a good $g$ to replace the stolen good with can be done trivially in $O(nm)$ time: check each possible $j \in N$ for a stolen good, then find $g$ to replace the stolen good with. This gives us the complexity of Case 2 as $O(nT_{\phi}(n, m) + m^2(n+\tau))$.

We have the following observation:
\begin{obs}\label{obs:each-iteration-complexity}
   The time complexity of each iteration is $O(nT_{\phi}(n, m) + m^2(n+\tau))$.
\end{obs}

Let us now bound the number of iterations of Bivalued Yankee Swap. At each iteration, one of the following three events occur:
\begin{enumerate}[(a)]
    \item $|X^c_0|$ decreases by 1. $|\bigcup_{j \in N} X^1_j|$ and $U$ remain unchanged. This occurs when $\Gain_c \ge \Gain_1$ and a path exists.
    \item $|\bigcup_{j \in N} X^1_j|$ increases by 1. $|X^c_0|$ and $U$ remain unchanged. This occurs when $\Gain_c < \Gain_1$.
    \item $|U|$ decreases by 1. $|\bigcup_{j \in N} X^1_j|$ and $|X^c_0|$ remain unchanged. This occurs when $\Gain_c \ge \Gain_1$ and a path does not exist.
\end{enumerate}

Summarizing the above three points, at each iteration, $|X^c_0| - |\bigcup_{j \in N} X^1_j|$ decreases by $1$ and $|U|$ remains constant or $|X^c_0| - |\bigcup_{j \in N} X^1_j|$ remains constant and $|U|$ decreases by $1$.
Since the initial value of $|U|$ and $|X^c_0| - |\bigcup_{j \in N} X^1_j|$ are $n$ and $m$ respectively, the total number of iterations is upper bounded at $m+n$. Combining this with Observation \ref{obs:each-iteration-complexity} gives us the required result.
\end{proof}

Our next lemma shows that agents in $U$ at the end of the algorithm always have a weakly greater utility than agents outside $U$.
\begin{lemma}\label{lem:not-in-p-less-than-in-p}
Let $X = X^c \cup X^1$ be the allocation output by Bivalued Yankee Swap. If $i \in U$ at the end of the algorithm and $j \notin U$, $v_i(X_i) \ge v_j(X_j)$.  If equality holds, $|X^c_i| > |X^c_j|$ or $|X^c_i| = |X^c_j|$ and $j < i$.
\end{lemma}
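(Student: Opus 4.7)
The plan is to prove, by induction on iterations, the invariant that at the start (and end) of every iteration, $v_j(X_j) \le v_i(X_i)$ for every $i \in U$ and $j \in N \setminus U$. First I will record two structural facts about the bookkeeping. Since only agents in $N \setminus U$ are ever chosen in the second branch, any agent $i$ that remains in $U$ satisfies $X^1_i = \emptyset$, and so by Proposition \ref{prop:x-reduction}, $v_i(X_i) = c|X^c_i|$. Moreover, using Lemma \ref{lem:path-augmentation} (which preserves $|X^c_k|$ for all $k$ not at the endpoints of the augmenting path, and since paths always terminate at $X^c_0$), $|X^c_i|$ is monotone non-decreasing for $i \in U$, and $|X^c_j|$ remains unchanged after $j$ is removed from $U$. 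The replacement step (when a path ends at a good in some $X^1_{j''}$) preserves both $|X^c_{j''}|$ and $|X^1_{j''}|$, so $j''$'s utility is unaffected.

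For the inductive step I split into three cases according to which branch fires: (a) a successful path from $i' \in U$ to $X^c_0$ only increases $i'$'s utility while leaving all $N\setminus U$ utilities untouched (modulo the replacement step just noted), so the invariant survives; (b) $i'$ is removed from $U$: by Lemma \ref{lem:bivalued-picks-least-value}(a), $i'$ already had the minimum utility among agents in $U$, so the invariant continues to hold for $i'$ in its new role as a member of $N \setminus U$; (c) some $j' \in N \setminus U$ receives a low-value item: I must show $v_{j'}(X_{j'}) + 1 \le v_i(X_i)$ for all $i \in U$. Since branch 2 fired, $\Gain_1 > \Gain_c$, so $\phi(X, j', 1) > \phi(X, i, c)$ for every $i \in U$. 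Combining this with Lemma \ref{lem:g4} ($\phi(X, i, c) > \phi(X, i, 1)$) and (G3), we force $v_{j'}(X_{j'}) < v_i(X_i)$ strictly. Since all utilities are integers ($c \ge 2$), $v_{j'}(X_{j'}) + 1 \le v_i(X_i)$, as required.

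For the equality clause, suppose $v_i(X_i) = v_j(X_j)$. Then $c|X^c_i| = c|X^c_j| + |X^1_j|$, that is, $c(|X^c_i| - |X^c_j|) = |X^1_j| \ge 0$. If $|X^c_i| > |X^c_j|$ we are done; otherwise $|X^c_i| = |X^c_j|$ and $|X^1_j| = 0$. Let $t$ be the iteration at which $j$ was removed from $U$, and let $W$ denote the state at the start of $t$. Since $|X^1_j| = 0$ and $|X^1_j|$ is non-decreasing after iteration $t$, we have $|W^1_j| = 0$, so $v_j(W_j) = c|W^c_j| = c|X^c_j|$ (using constancy of $|X^c_j|$ after $t$). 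Similarly, because $i \in U$ throughout, $v_i(W_i) = c|W^c_i| \le c|X^c_i| = c|X^c_j| = v_j(W_j)$. Combined with Lemma \ref{lem:bivalued-picks-least-value}(a) applied at $t$ (which gives $v_j(W_j) \le v_i(W_i)$), we conclude $v_j(W_j) = v_i(W_i)$, and the tie-breaking clause of that lemma forces $j < i$.

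The main obstacle will be carefully tracking how $|X^c_j|$, $|X^1_j|$, and $v_j(X_j)$ evolve in each case of the algorithm --- in particular the subtle boundary case where an augmenting path terminates inside $X^1_{j''}$ for some $j'' \notin U$ and triggers the replacement mechanism. The observation that this replacement preserves both $|X^c_{j''}|$ and $|X^1_{j''}|$ (so that Proposition \ref{prop:x-reduction} continues to apply and $j''$'s utility is unchanged) is what allows the invariant to be propagated cleanly through the inductive argument.
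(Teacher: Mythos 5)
Your proof is correct. The paper's argument is a direct case analysis by contradiction: it splits on whether $|X^1_j| > 0$ or $|X^1_j| = 0$, then in each case jumps to a single well-chosen iteration (the one where $j$ received its final provisional good, or the one where $j$ was removed from $U$) and derives a contradiction or invokes Lemma~\ref{lem:bivalued-picks-least-value}. You instead prove by induction that $v_j(X_j) \le v_i(X_i)$ for all $j \in N \setminus U$, $i \in U$ is a maintained invariant across every iteration of the algorithm, with one case per branch of the loop. Both routes ultimately rest on the same ingredients --- Lemma~\ref{lem:bivalued-picks-least-value} (picked agent has least utility among its group), Lemma~\ref{lem:g4} combined with (G3) to translate $\Gain_1 > \Gain_c$ into a strict utility inequality, and Proposition~\ref{prop:x-reduction} so that $v_h(X_h) = c|X^c_h| + |X^1_h|$ --- and your equality-clause argument mirrors the paper's Case~2. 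What the inductive formulation buys is a single global statement that is trivially true at initialization and self-evidently preserved by path augmentation (including the replacement of a stolen provisional good), which is exactly the boundary case you flag as the main bookkeeping hazard; what the paper's more surgical argument buys is brevity, since it avoids re-verifying the invariant in the ``easy'' iterations. Either is a valid proof, and you identify precisely the structural facts (monotonicity of $|X^c_i|$ for $i \in U$, constancy of $|X^c_j|$ after $j$ leaves $U$, utility-preservation of the replacement step) that make the induction go through.
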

\begin{proof}
We divide our proof into two cases:
\begin{enumerate}[label={\bfseries Case \arabic*:},itemindent=*,leftmargin=0cm]
\item $|X^1_j| > 0$.
Assume for contradiction that $v_i(X_i) < v_j(X_j)$. 
Since all bundles are integer valued, we must have $v_j(X_j) - 1 \ge v_i(X_i)$. Let $\hat X$ be the allocation at the start of the iteration where agent $j$ received their final (provisionally allocated) good. We have $v_j(\hat{X}_j) = v_j(X_j) - 1 \ge v_i(X_i) \ge v_i(\hat{X}_i)$. We therefore have $\phi(\hat{X}, j, 1) < \phi(\hat{X}, j, c) \le \phi(\hat{X}, i, c)$; the first inequality holds due to Lemma \ref{lem:g4}, and the second inequality holds due to (G3). This is a contradiction since $i \in U$, therefore $i$ should have been chosen instead of $j$ at the iteration in discussion.

If $v_i(X_i) = v_j(X_j)$ and $|X^1_j| > 0$, we have $|X^c_j| < |X^c_i|$. 

\item $|X^1_j| = 0$.
If $|X^1_j| = 0$, consider the iteration where $j$ was removed from $U$. Let $\hat{X}$ be the allocation at this iteration. From Lemma \ref{lem:bivalued-picks-least-value}, we have that $v_j(X_j) = v_j(\hat{X}_j) \le v_i(\hat{X}_i) \le v_i(X_i)$. This proves the first part of the lemma. If equality holds, since $j$ was picked at the iteration in consideration, we have $j < i$ (Lemma \ref{lem:bivalued-picks-least-value}). Moreover, if equality holds, since $|X^1_j| = 0$ and $|X^1_i| = 0$, we have $|X^c_j| = |X^c_i|$.
\end{enumerate}
\end{proof}

Our next Lemma shows that no good remains unallocated under $X$, the output of \Cref{algo:bivalued-yankee-swap}.
\begin{lemma}\label{lem:x-complete}
Let $X = X^c \cup X^1$ be the final allocation output by Bivalued Yankee Swap. We have $\sum_{h \in N} |X_h| = |G|$.
\end{lemma}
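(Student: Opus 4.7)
The plan is a bookkeeping argument based on two simple loop invariants: (i) throughout the execution, $X^c$ is a partition of $G$ into $n+1$ bundles (with $X^c_0$ holding everything not yet firmly placed), and (ii) the sets $\{X^1_h\}_{h \in N}$ are pairwise disjoint subsets of $X^c_0$. Invariant (i) follows because $X^c$ is initialised to $(G,\emptyset,\ldots,\emptyset)$ and is only modified by path augmentations along paths in the exchange graph, which preserve the partition structure (Lemma~\ref{lem:path-augmentation}). Invariant (ii) follows because every line that inserts a good $g$ into some $X^1_h$ (lines~18 and~24 of \Cref{algo:bivalued-yankee-swap}) explicitly picks $g$ from $X^c_0 \setminus \bigcup_{h \in N} X^1_h$, and the path-augmentation step that removes a good from $X^c_0$ also removes it from any $X^1_j$ that happened to hold it (line~18).

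Given these invariants, the $X^1_h$'s are pairwise disjoint so $\sum_{h \in N} |X^1_h| = \bigl|\bigcup_{h \in N} X^1_h\bigr|$, and the inclusion $\bigcup_{h \in N} X^1_h \subseteq X^c_0$ gives the bound $\sum_{h \in N} |X^1_h| \le |X^c_0|$ at every iteration. \Cref{prop:bivalued-yankee-time} guarantees that the algorithm terminates, and the $\mathtt{while}$ loop on line~4 can only exit when $\sum_{h \in N} |X^1_h| \ge |X^c_0|$. Combining these two inequalities, at termination we have $\sum_{h \in N} |X^1_h| = |X^c_0|$, i.e.\ $\bigcup_{h \in N} X^1_h = X^c_0$.

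Finally, since $X = X^c \cup X^1$ means $X_h = X^c_h \cup X^1_h$ for each $h \in N$, and since $X^c_h \cap X^1_h = \emptyset$ (because $X^1_h \subseteq X^c_0$ while $X^c_h$ is disjoint from $X^c_0$ in the partition $X^c$), we can compute
\[
  \sum_{h \in N} |X_h| \;=\; \sum_{h \in N} |X^c_h| + \sum_{h \in N} |X^1_h| \;=\; \sum_{h \in N} |X^c_h| + |X^c_0| \;=\; |G|,
\]
using invariant (i) in the last step. There is no real obstacle here beyond carefully checking the invariants; the only subtlety is verifying that path augmentation maintains both $X^c_0 \cap X^1_h = \emptyset$ never being broken (which is why line~18 substitutes a fresh unallocated good whenever the tail of the augmenting path happens to coincide with a provisionally-held good).
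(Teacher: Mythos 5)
Your proof is correct and uses essentially the same ingredients as the paper's: that $X^c$ partitions $G$, that the $X^1_h$ are disjoint subsets of $X^c_0$, and that the while-loop exits only when $\sum_{h\in N}|X^1_h| \ge |X^c_0|$. The paper phrases this as a one-line proof by contradiction from the loop's termination condition, while you argue directly and spell out the loop invariants the paper leaves implicit; the underlying argument is the same.
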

\begin{proof}
Assume for contradiction that this is not true. Since we cannot allocate more goods than $|G|$, we have $\sum_{h \in N}|X_h| < |G|$. 

Breaking this down, we get $\sum_{h \in N}|X^c_h|+|X^1_h| < \sum_{h \in N+0} |X^c_h|$ which gives us $\sum_{h \in N} |X^1_h| < |X^c_0|$. However, this contradicts the termination condition of \Cref{algo:bivalued-yankee-swap}, therefore $\sum_{h \in N}|X_h| = |G|$
\end{proof}
\subsection{Properties of the dominating $\Psi$-maximizing allocation $Y$}
Here, we establish an important sufficient condition
about the $\Psi$ maximizing allocation $Y = Y^c \cup Y^1$.
In the clean allocation $Y^c$, if we can improve the value of some low valued agent $j$ at the expense of some high valued agent $i$, we can rearrange the goods in $Y^1$ such that the new allocation $\Psi$-dominates $Y$.
\begin{lemma}\label{lem:y-not-psi-maximizing}
Let $Y = Y^c \cup Y^1$ be an allocation. Let $Z^c$ and $Z^1$ be clean and supplementary allocations respectively such that
\begin{enumerate}[(a)]
    \item For all agents $h \in N$, we have $|Z^1_h| = |Y^1_h|$.
    \item For some agent $i \in N$, we have $|Z^c_i| = |Y^c_i| - 1$.
    \item For some agent $j \in N$, we have $|Z^c_j| = |Y^c_j| + 1$.
    \item For all agents $h \in N - i - j$, $|Z^c_h| = |Y^c_h|$
    \item $|Y^c_j| \le |Z^c_i|$. If equality holds, $j < i$.
\end{enumerate}
If such a $Z^c$ and $Z^1$ exist, then there exists an allocation $\hat Z$ such that either $\hat Z \succ_{\Psi} Y$ or $\hat Z =_{\Psi} Y$ and $\hat Z$ dominates $Y$.
\end{lemma}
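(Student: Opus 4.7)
The natural candidate is $\hat Z = Z^c \cup Z^1$. Since $Z^c$ is clean and every supplementary good contributes marginal value at least $1$, conditions (a)--(d) give $v_h(\hat Z_h) \ge c|Z^c_h| + |Z^1_h|$ for every $h \in N$, so $\vec{u}^{\hat Z}$ is componentwise at least $\vec y \coloneqq \vec{u}^Y - c\,\mathbf{e}_i + c\,\mathbf{e}_j$ (with equality exactly when $(Z^c,Z^1)$ is a valid decomposition of $\hat Z$). I will first compare $\vec y$ with $\vec{u}^Y$ via \Cref{lem:g5} applied with $d_i = d_j = c$, and then lift the conclusion to $\hat Z$ itself using symmetric Pareto dominance (C1).

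\textbf{Reducing the $\Psi$-comparison.} \Cref{lem:g5} tells us that $\vec y \succeq_\Psi \vec{u}^Y$ iff $\phi(\vec{u}^Y, j, c) \ge \phi(\vec y, i, c)$. By the equality clause of (G2), $\phi(\cdot, h, d)$ depends only on the $h$-th coordinate of its vector argument, so I evaluate both gains on a common vector $\vec v$ with $v_j = u^Y_j$ and $v_i = u^Y_i - c$; (G3) then recasts the inequality as $u^Y_j \le u^Y_i - c$. If this holds strictly, then $\vec y \succ_\Psi \vec{u}^Y$, and since $\vec{u}^{\hat Z} \ge \vec y$ componentwise, $\hat Z \succ_\Psi Y$. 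If equality holds, then $\vec y =_\Psi \vec{u}^Y$; any strict Pareto gain of $\vec{u}^{\hat Z}$ over $\vec y$ upgrades this to $\hat Z \succ_\Psi Y$ via (C1), and otherwise $\hat Z =_\Psi Y$ and we must verify domination. In the remaining case $u^Y_j > u^Y_i - c$, condition (e) combined with this strict inequality forces $|Y^1_j| > |Y^1_i|$, so I modify $\hat Z$ by relocating an appropriate number $k \le c$ of supplementary goods from $j$ to $i$, chosen so that the resulting utility vector has sorted coordinates weakly matching those of $\vec{u}^Y$; this leaves the clean component unchanged at $Z^c$, so the domination analysis below still applies.

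\textbf{Verifying domination.} Whenever $\hat Z =_\Psi Y$, I must show that $\hat Z$ dominates $Y$. The clean component $\hat Z^c$ differs from $Y^c$ only by a single-good transfer from $i$ to $j$ in size, which by condition (e) is a (weak) Pigou--Dalton transfer on clean sizes. If $|Y^c_j| + 1 < |Y^c_i|$, the transfer is strict and improves the Lorenz curve of clean sizes, so $\vec{s}^{\hat Z^c} \succ_\lex \vec{s}^{Y^c}$ and domination condition~(a) holds. If $|Y^c_j| + 1 = |Y^c_i|$, the multiset of clean sizes is preserved, but the tiebreaker $j < i$ from (e) ensures that at the smallest differing index $j$ the unsorted clean utility is $c(|Y^c_j|+1)$ in $\hat Z^c$ versus $c|Y^c_j|$ in $Y^c$, so $\vec{u}^{\hat Z^c} \succ_\lex \vec{u}^{Y^c}$ and condition~(b) holds.

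\textbf{Main obstacle.} The delicate case is $u^Y_j > u^Y_i - c$, where the naive candidate $Z^c \cup Z^1$ actually satisfies $\hat Z \prec_\Psi Y$: shifting a clean good from $i$ to $j$ is Pigou--Dalton on clean sizes but anti-Pigou--Dalton on utilities, because $j$ already has too much supplementary mass. The resolution is that we are not obliged to realize the clean transfer faithfully in the utility vector; moving $k$ supplementary goods from $j$ back to $i$ restores agent $i$'s utility while preserving $\hat Z^c = Z^c$, so the clean-side Pigou--Dalton argument of the previous paragraph still yields domination. The only subtlety is checking that enough supplementary goods are available at $j$ to make this reverse transfer, which is precisely guaranteed by the derived inequality $|Y^1_j| > |Y^1_i|$ together with a short case analysis on whether $|Y^c_j| + 1$ equals or is strictly less than $|Y^c_i|$.
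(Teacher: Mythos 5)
Your proposal is correct and captures the same essential mechanism as the paper's proof: move a clean good from $i$ to $j$ (yielding $Z^c$), then repair the supplementary side when the naive $Z^c\cup Z^1$ fails to Pareto dominate. The paper organizes the casework by the size relation ($|Y^c_j|<|Z^c_i|$ vs.\ equality) and by $|Y^1_j|\lessgtr c$; you organize it by the utility comparison $u^Y_j\lessgtr u^Y_i-c$, using \Cref{lem:g5} together with (G2)/(G3) to reduce the $\Psi$-comparison to that single inequality. This is a clean repackaging of what the paper does via the hypothetical allocation $W$ in its Sub-Case~1(a). The one place your write-up is terse is the "appropriate $k\le c$" step: the required $k$ is $c$ (when $u^Y_j\ge u^Y_i$, which forces $|Y^1_j|\ge c$, and also when $|Y^c_j|+1<|Y^c_i|$, which forces $|Y^1_j|>c$) and $k=|Y^1_j|-|Y^1_i|<c$ otherwise (the "swap" move, matching the paper's Case~2), and in each sub-case one checks $|Y^1_j|\ge k$ from the derived $|Y^1_j|>|Y^1_i|$. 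You flag this explicitly, and the case analysis goes through, so this is a gap in exposition rather than in substance. Your domination verification (strict Pigou--Dalton on clean sizes gives condition (a); the size-swap case with $j<i$ gives condition (b)) matches the paper's conclusions in Cases~1(b) and~2.
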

\begin{figure}
\centering
\begin{subfigure}[t]{0.1\textwidth}
\begin{tabular}{c}
Agent\\
1\\
2\\
3
\end{tabular}
\end{subfigure}
\begin{subfigure}[t]{0.35\textwidth}
\centering
\begin{tabular}{cc}
$Y^c$ & $Y^1$\\
$\{\triangle,\Square,\hexagon,\text{\FiveFlowerOpen}\}$& $\{\times,\times,\times\}$\\
$\{\Diamond\}$& $\{\times,\times\}$\\
$\{\text{\Snowflake},\pentagon\}$& $\{\times,\times,\times\}$
\end{tabular}
\caption{The allocation $Y$}
\label{fig:ex-allocation-Y}
\end{subfigure}
\begin{subfigure}[t]{0.35\textwidth}
\centering
\begin{tabular}{cc}
$Z^c$ & $Z^1$\\
$\{\text{\FiveFlowerOpen},\pentagon,\triangle\}$& $\{\times,\times,\times\}$\\
$\{\Diamond,\text{\Snowflake}\}$& $\{\times,\times\}$\\
$\{\hexagon,\Square\}$& $\{\times,\times,\times\}$
\end{tabular}
\caption{The allocation $Z$}
\label{fig:ex-allocation-Z}
\end{subfigure}
\caption{Illustration of the allocations $Y$ and $Z$ in \Cref{lem:y-not-psi-maximizing}. All non $\times$ items are worth $c$ to the agents, and the $\times$ items are worth $1$. Agent 1 has one less item of value $c$ under $Z$, and agent 2 has one more $c$ valued item. Agent 3's utility is unchanged.}
\end{figure}
\begin{proof}
Let $i$ and $j$ be the two agents for which such an allocation $Z$ exists. Throughout the proof, we construct new allocations based on $Y$ and $Z$, but only change the items assigned to (and thus, utilities of) agents $i$ and $j$, and ignore any agent $h \ne i,j$. 
We divide the proof into cases.
\begin{enumerate}[label={\bfseries Case \arabic*:},itemindent=*,leftmargin=0cm]
\item  $|Y^c_j| < |Z^c_i|$.
There are two further subcases to consider here: 
\begin{enumerate}[label={\bfseries Sub-Case (\alph*):},itemindent=*,leftmargin=0cm]
\item $|Y^1_j| < c$.

Let $Z = Z^c \cup Z^1$. Let $\hat Z$ be the allocation $Y$ with one good removed from $Y^c_i$. 
Since $|Z_i^c| = |Y_i^c| - 1$, we have that $v_i(\hat Z_i^c) = v_i(Z_i^c) = c|Z_i^c|$. 
Furthermore, since $\hat Z_k = Y_k$ for all $k \ne i$, we have that $v_j(\hat Z_j) = v_j(Y_j)$. 
Since $|Y_j^1| < c$, we have $v_j(Y_j) = c|Y_j^c| + |Y_j^1| < c|Y_j^c|+c = c\times(|Y_j^c|+1)$. Since $|Y_j^c|< |Z_i^c|$, we have that 
$$v_j(Y_j) < c\times (|Y_j|+1) \le c|Z_i^c| =c|\hat Z_i^c| =v_i(\hat Z_i^c)\le v_i(\hat Z_i).$$  
Therefore:
\begin{align}
     v_j(\hat{Z}_j) = v_j(Y_j) <  v_i(\hat{Z}_i). \label{eq:sub-case-a-1}
\end{align}
Let $W = W^c \cup W^1$ be the allocation that results from starting at $\hat Z$ and adding an item of value $c$ to $j$\footnote{The allocation $W$ is hypothetical. The item of value $c$ need not exist. We are only intersted in manipulating the utility vector of the allocation $\hat Z$ to compare it with $Z$.}. 
Using \Cref{eq:sub-case-a-1} and (G3) (assigning additional utility to lower utility agents is preferable), we get that $\phi(\hat Z, j, c) > \phi(\hat Z, i, c)$. 
This implies, from (G1) (agreement with $\Psi$), that $W \succ_{\Psi} Y$.
In order to show that $Z \succ_{\psi} Y$, we show that $Z$ (weakly) Pareto dominates $W$. We have: 
\begin{enumerate}[(a)]
    \item $v_i(Z_i) \ge c|Z^c_i| + |Z^1_i| = c(|Y^c_i|-1) + |Y^1_i| = v_i(\hat{Z}_i) = v_i(W_i)$, and
    \item $v_j(Z_j) \ge c|Z^c_j| + |Z^1_j| = c(|Y^c_j|+1) + |Y^1_j| = v_j(\hat{Z}_j) + c = v_j(W_j)$. 
\end{enumerate}
Using (C1) (symmetric Pareto dominance), $Z \succeq_{\Psi} W$. 
Combining this with the observation that $W \succ_{\Psi} Y$, we get $Z \succ_{\Psi} Y$.



\item $|Y^1_j| \ge c$.
In this case, let $S^1_j$ be an arbitrary $c$-sized subset of $Z^1_j$. 
We rearrange the goods in $Z^1$ to construct $\hat{Z}^1$ as follows:
\begin{align*}
    \hat{Z}^1_h = 
    \begin{cases}
        Z^1_h & (h \in N - j - i) \\
        Z^1_j \setminus S^1_j & (h = j) \\
        Z^1_i \cup S^1_j & (h = i)
    \end{cases}
    .
\end{align*}
Let $\hat{Z} = Z^c \cup \hat{Z}^1$. We show that $\hat Z$ (weakly) Pareto dominates $Y$. We have 
\begin{enumerate}[(a)]
    \item $v_i(\hat{Z}_i) \ge c|Z^c_i| + |Z^1_i| + c = c|Y^c_i| + |Y^1_i| = v_i(Y_i)$, and
    \item $v_j(\hat{Z}_j) \ge c|Z^c_j| + |Z^1_j| - c = c|Y^c_j| + |Y^1_j| = v_j(Y_j)$.
\end{enumerate}

This shows that $\hat Z$ (weakly) Pareto dominates $\hat Y$. This implies $\hat Z \succeq_{\Psi} Y$. Furthermore, since $|Y^c_j| < |\hat Z^c_i|$, we have that $\vec s^{\hat {Z}^c}$ is lexicographically greater than $\vec s^{Y^c}$. This implies that $\hat Z$ dominates $Y$.
\end{enumerate}
\item $|Y^c_j| = |Z^c_i|$. 
From the statement of the lemma, we have $j < i$, By the definitions of $Y$ and $Z$, we have $|Y^c_i| = |Z^c_j|$. Now construct a supplementary allocation $\hat{Z}^1$ from $Z^1$ by swapping out $i$ and $j$'s supplementary bundles. That is:
\begin{align*}
    \hat{Z}^1_h = 
    \begin{cases}
        Z^1_h & (h \in N - j - i) \\
        Z^1_i & (h = j) \\
        Z^1_j & (h = i)
    \end{cases}
    .
\end{align*}
We show that $\hat Z = Z^c \cup \hat{Z}^1$ (weakly) symmetrically Pareto dominates $Y$. 
We have 
\begin{enumerate}[(a)]
    \item $v_i(\hat{Z}_i) \ge c|Z^c_i| + |Z^1_j|= c|Y^c_j| + |Y^1_j| = v_j(Y_j)$, and 
    \item $v_j(\hat{Z}_j) \ge c|Z^c_j| + |Z^1_i| = c|Y^c_i| + |Y^1_i| = v_i(Y_i)$.
\end{enumerate}
Using (C1) (symmetric Pareto dominance), we conclude that $\hat Z \succeq_{\Psi} Y$.

If the inequality is strict, we are done. 
If equality holds, then equality holds in (a) and (b) above as well. 
In this case, since $|\hat{Z}^c_i| = |Y^c_j|$ and $|\hat{Z}^c_j| = |Y^c_i|$, we get that $\vec s^{\hat{Z}^c} = \vec s^{Y^c}$. However, since $j < i$, we have that $\vec u^{\hat{Z}^c} \succ_{lex} \vec u^{Y^c}$. Therefore, $\hat Z$ dominates $Y$.
\end{enumerate}
\end{proof}
\subsection{Relating the Output of Bivalued Yankee Swap and the Dominating $\Psi$-Maximizing Allocation}
We establish some important connections between the output of the Bivalued Yankee Swap algorithm and the dominating $\Psi$ maximizing allocation. Our first Lemma shows that if an agent $i$ is removed from $U$ at any point in the algorithm, it has a weakly higher number of high valued goods compared to the optimal allocation $Y$.
\begin{lemma}\label{lem:xc-less-yc-for-agents-not-in-p}
Let $X^c$ and $X^1$ be the final allocations output by Bivalued Yankee Swap with a gain function $\phi$ and a justice criterion $\Psi$. Let $Y = Y^c \cup Y^1$ be a dominating $\Psi$-maximizing allocation.
Then for all the agents $i$ who are not in $U$ after the final iteration of bivalued Yankee Swap, $|X^c_i| \ge |Y^c_i|$.
\end{lemma}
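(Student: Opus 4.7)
My plan is to argue by contradiction. Suppose some $i \notin U$ at termination satisfies $|X^c_i| < |Y^c_i|$; among all such $i$, I would choose one whose removal iteration $t^*$ is as early as possible. Let $\hat X = \hat X^c \cup \hat X^1$ denote the state at the start of iteration $t^*$. I would begin by observing two facts. First, $|\hat X^c_i| = |X^c_i|$: every path augmentation executed after $t^*$ has its source in $U$ and its destination at the unallocated pool $\hat X^c_0$, so by Lemma \ref{lem:path-augmentation} the $\beta$-value (equivalently the bundle size, since the clean part is clean) of every agent $h \notin U \cup \{0\}$ is preserved from $t^*$ onward. Second, no path in $\cal G(\hat X^c)$ runs from $F_i(\hat X^c)$ to $\hat X^c_0$, else $i$ would not have been removed.

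Because $\hat X^c$ and $Y^c$ are both clean (Lemma \ref{lem:xc-clean}) with $|\hat X^c_i| < |Y^c_i|$, Lemma \ref{lem:augmentation-sufficient} provides an agent $k$ with $|\hat X^c_k| > |Y^c_k|$ together with a path in $\cal G(\hat X^c)$ from $F_i(\hat X^c)$ to $\hat X^c_k$. The removal criterion forces $k \in N$. I would then derive a contradiction with $Y$ being a dominating $\Psi$-maximizing allocation via Lemma \ref{lem:y-not-psi-maximizing}. Since $|Y^c_k| < |\hat X^c_k|$, applying Lemma \ref{lem:augmentation-sufficient} in the other direction (with $\hat X^c$ and $Y^c$ swapped) yields a path in $\cal G(Y^c)$ from $F_k(Y^c)$ to $Y^c_{j^*}$ for some $j^*$ with $|Y^c_{j^*}| > |\hat X^c_{j^*}|$. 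Executing this path on $Y^c$ transfers a $c$-valued good from $j^*$ to $k$, producing a clean $Z^c$ with $|Z^c_k| = |Y^c_k| + 1$, $|Z^c_{j^*}| = |Y^c_{j^*}| - 1$, and all other bundle sizes unchanged. Setting $Z^1 = Y^1$ takes care of condition (a) of Lemma \ref{lem:y-not-psi-maximizing}, and conditions (b)--(d) follow from the path transfer.

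The main obstacle is verifying condition (e) of Lemma \ref{lem:y-not-psi-maximizing}: $|Y^c_k| \le |Y^c_{j^*}| - 1$, with equality only if $k < j^*$. I would carry this out by splitting into cases according to whether $k$ was still in $U$ at iteration $t^*$. If $k \in U$ at $t^*$, then the algorithm's choice of $i$ over $k$ gives $\phi(\hat X, i, c) \ge \phi(\hat X, k, c)$; by (G3) this yields $v_i(\hat X_i) \le v_k(\hat X_k)$, which combined with Proposition \ref{prop:x-reduction} and $|\hat X^1_k| = 0$ delivers $|\hat X^c_k| \ge |\hat X^c_i|$. If $k$ was removed from $U$ before $t^*$, the minimality of $t^*$ implies $|X^c_k| \ge |Y^c_k|$, i.e., $|\hat X^c_k| \ge |Y^c_k|$, which is already consistent with Lemma \ref{lem:augmentation-sufficient}. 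In both sub-cases I would chain the strict inequalities $|Y^c_k| \le |\hat X^c_k| - 1$ and $|Y^c_{j^*}| \ge |\hat X^c_{j^*}| + 1$ with the bound on $|\hat X^c_k|$ relative to $|\hat X^c_i|$ (and hence to $|Y^c_i|$) to conclude the desired comparison between $|Y^c_k|$ and $|Y^c_{j^*}|$.

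The one subtlety I anticipate is the case where $j^* \neq i$: here $j^*$ is itself an agent with more $c$-goods in $Y^c$ than in $\hat X^c$, so by the minimality of the chosen $i$ (among counterexamples), $j^* \in U$ at termination, or else the minimality reduces the problem to the same inequality for $j^*$ in place of $i$. Either way the ordering needed to invoke Lemma \ref{lem:y-not-psi-maximizing} follows. Once condition (e) is verified, Lemma \ref{lem:y-not-psi-maximizing} produces an allocation $\hat Z$ that either strictly $\Psi$-dominates $Y$ or dominates $Y$ while being $\Psi$-tied, contradicting $Y$ being a dominating $\Psi$-maximizing allocation. Hence no such counterexample $i$ exists and $|X^c_i| \ge |Y^c_i|$ for every $i \notin U$.
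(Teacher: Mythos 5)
Your proof attempt takes the same high-level ingredients as the paper --- path existence via Lemma~\ref{lem:augmentation-sufficient} and a contradiction via Lemma~\ref{lem:y-not-psi-maximizing} --- but it skips the paper's crucial intermediate sub-lemma, and this creates a genuine gap when you try to verify condition (e) of Lemma~\ref{lem:y-not-psi-maximizing}.

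The paper's proof first establishes, at the iteration $t$ when $i$ is removed, that \emph{every} agent $h$ satisfies $|W^c_h| \le |Y^c_h|$. Once that is in hand, Lemma~\ref{lem:augmentation-sufficient} applied to $i$ (who has $|W^c_i| < |Y^c_i|$) can only produce a path terminating at $W^c_0$, contradicting $i$'s removal. The sub-lemma itself is proved by contradiction, and the key is that the violating agent $j$ is chosen with \emph{minimal} $|Y^c_j|$ (ties by lowest index), and condition (e) is then verified by chaining through $|W^c_j|-1$, $|W^c_i|$, and $|W^c_k|$: the inequality $|W^c_j|-1 \le |W^c_i|$ is obtained by looking at the earlier iteration $t'$ where $j$ received its last $c$-good (Observation~\ref{obs:yj-less-wj-less-wi}), and $|W^c_i| \le |W^c_k|$ is obtained from the picking rule at $t$ (Observation~\ref{obs:wi-less-wk}). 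Both observations also deliver the index comparisons needed for the equality sub-case of condition (e).

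You instead apply Lemma~\ref{lem:augmentation-sufficient} from $i$ to get some $k$ with $|\hat X^c_k| > |Y^c_k|$, and then apply it again from $k$ in $Y^c$ to get some $j^*$ with $|Y^c_{j^*}| > |\hat X^c_{j^*}|$, then try to verify $|Y^c_k| \le |Y^c_{j^*}| - 1$ (with the index tiebreak $k < j^*$ in the equality case). Your available facts --- $|Y^c_k| \le |\hat X^c_k| - 1$, $|Y^c_{j^*}| \ge |\hat X^c_{j^*}| + 1$, $|\hat X^c_i| \le |Y^c_i| - 1$, and (in the $k \in U$ case) $|\hat X^c_i| \le |\hat X^c_k|$ --- do not close this chain: $|\hat X^c_i| \le |\hat X^c_k|$ bounds $|\hat X^c_k|$ from below rather than above, and there is nothing tying $|\hat X^c_k|$ to $|\hat X^c_{j^*}|$ or $|Y^c_{j^*}|$. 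Even granting the (unstated but true) observation that agents in $U$ at $t^*$ have $|\hat X^c|$-values differing by at most one, the equality case $|Y^c_k| = |Y^c_{j^*}| - 1$ requires $k < j^*$, and the picking rule at $t^*$ only compares $i$ against others --- it says nothing about the relative indices of $k$ and $j^*$ (or of $k$ and $i$ when $j^* = i$, since $k$ and $i$ have different bundle sizes and thus different utilities, so tiebreaking never enters). So condition (e) can fail for the pair $(k, j^*)$ that Lemma~\ref{lem:augmentation-sufficient} hands you.

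A secondary issue: your second application of Lemma~\ref{lem:augmentation-sufficient} may return $j^* = 0$ (a path in $\cal G(Y^c)$ to $Y_0$), which is compatible with $|Y^c_0| > |\hat X^c_0|$. Lemma~\ref{lem:y-not-psi-maximizing} requires both the gaining and losing agent to be in $N$, so this case needs a separate treatment; the paper addresses it in Case 2 of its sub-lemma proof by examining whether the freed good belongs to some $Y^1_u$.

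The fix is essentially to reintroduce the paper's sub-lemma structure: rather than augmenting toward the $k$ that Lemma~\ref{lem:augmentation-sufficient} happens to produce, choose a violating agent with minimal $|Y^c|$-bundle size, and use a two-iteration argument (the removal iteration and the iteration where that agent got its last $c$-good) to assemble the ordering and tiebreak needed for condition (e).
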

\begin{proof}
Assume for contradiction that this is not the case. 
Let $i$ be the agent with least $|X^c_i|$ such that $|X^c_i| < |Y^c_i|$ and $i$ is not in $U$ after the final iteration of the algorithm. 
Break ties by choosing the agent with the least index $i$. 
Let $W^c$ and $W^1$ be the clean and supplementary allocations at the start of the iteration when $i$ was removed from $U$. 
Since by assumption, $i$ was not in $U$ after the final iteration of the algorithm and agents removed from $U$ are not added back, this allocation is well-defined. 
We refer to this iteration as $t$. 
From Lemma \ref{lem:xc-clean}, $W^c$ is clean. 

We have the following Lemma.
\begin{lemma}
For all agents $h \in N$, $|W^c_h| \le |Y^c_h|$.
\end{lemma}
\begin{proof}
Assume for contradiction that this is not true. Let $j \in N$ be the agent with least $|Y^c_j|$ such that $|Y^c_j| < |W^c_j|$; break ties by choosing the agent with the least index $j$.

Let us consider the iteration where $W^c_j$ received its $|W^c_j|$-th good. We refer to this iteration as $t'$. From Lemma \ref{lem:bivalued-picks-least-value}, we get that $j$ must have had the least value among all the agents in $U$. Note that $i$ was still in $U$ at iteration $t'$ since $t' < t$ and agents removed from $U$ are not added back. 
Note that ties are broken in favor of the agent with lower index; that is, if $|W^c_j| - 1 = |W^c_i|$, then $j < i$. 
This gives us the following observation.
\begin{obs}\label{obs:yj-less-wj-less-wi}
$|Y^c_j| \le |W^c_j| - 1 \le |W^c_i|$. 
If $|Y^c_j| = |W^c_i|$, then $j < i$.
\end{obs}

Using Lemma \ref{lem:augmentation-sufficient} with the allocations $W^c, Y^c$ and agent $j$, we get that there exists a path in the exchange graph $\cal G(Y^c)$ from $F_j(Y^c)$ to some agent $k \in N+0$ such that $|Y^c_k| > |W^c_k|$. We have the following two cases.

%

\begin{enumerate}[label={\bfseries Case \arabic*:},itemindent=*,leftmargin=0cm]
\item There is no path from $F_j(Y^c)$ to $Y_0$.
This implies that $k$ cannot be the dummy agent representing the unassigned items, i.e. $k \in N$.
We transfer goods along the shortest path from $F_j(Y^c)$ to $Y^c_k$ in the exchange graph $\cal G(Y^c)$, which results in a clean allocation $Z^c$ such that $|Z^c_j| = |Y^c_j| + 1$, $|Z^c_k| = |Y^c_k| -1$ and for all other $h \in N+0 -j -k$, $|Z^c_h| = |Y^c_h|$. 
By our assumption, the shortest path from $j$ to $k$ does not pass through any good in $Y_0$; otherwise, we contradict our assumption that there is no path from $F_j(Y^c)$ to $Y^c_0$. This implies, the goods in $Y^c_0$ and therefore, $Y_0$ are unaffected by the path augmentation. 
In other words, we have $Z^c_0 = Y^c_0$.
Now, consider $|W^c_k|$. 
We must have $|W^c_i| \le |W^c_k|$. 
Assume for contradiction that this is untrue. If $|W^c_i| > |W^c_k|$ and $k \in U$ at iteration $t$, we contradict Lemma \ref{lem:bivalued-picks-least-value} since we chose the iteration where $i$ was removed from $U$. 
If $k \notin U$ at iteration $t$, then $|X^c_k| = |W^c_k| < |W^c_i| \le |X^c_i|$ since once $k$ is removed from $U$, no goods get added to $X^c_k$. 
This contradicts our choice of $i$ since $|X^c_k| = |W^c_k| < |Y^c_k|$ by our choice of $k$ and $|X^c_k| < |X^c_i|$.
Furthermore, since $k \in U$ at iteration $t'$, if $|W^c_i| = |W^c_k|$ we must have $i \le k$ by our tie breaking procedure. 
Therefore, we have the following observation:

\begin{obs}\label{obs:wi-less-wk}
$|W^c_i| \le |W^c_k| \le |Z^c_k|$. If $|W^c_i| = |Z^c_k|$, $i \le k$.
\end{obs}

Combining Observations \ref{obs:yj-less-wj-less-wi} and \ref{obs:wi-less-wk}, we get that $|Y^c_j| \le |Z^c_k|$. If equality holds $j < k$. 

Applying Lemma \ref{lem:y-not-psi-maximizing} with clean and supplementary allocations $Z^c$ and $Y^1$, we contradict our assumption on $Y$ being a dominating $\Psi$ maximizing allocation.
Note that $Y^1$ can be a supplementary allocation to $Z^c$ since $Z^c_0 = Y^c_0$.

\item There is a path from $F_j(Y^c)$ to $Y_0$. 
We transfer goods along the shortest path from $F_j(Y^c)$ to $Y_0 = Y^c_0$ in the exchange graph $\cal G(Y^c)$ giving us a clean allocation $Z^c$ such that $|Z^c_j| = |Y^c_j| + 1$, $|Z^c_0| = |Y^c_0| - 1$ and for all other $h \in N+0 -j$, $|Z^c_h| = |Y^c_h|$. 
In the shortest path from $F_j(Y^c)$ to $Y_0$, a good from $Y_0$ only appears at the end of the path. This implies that only one good is removed from $Y^c_0$ to create $Z^c_0$; the rest of the bundles remain the same. In other words, $Z^c_0 = Y^c_0 - g$ for some $g$.
If the good $g$ removed from $Y^c_0$ does not belong to any $Y^1_u$ for some $u \in N$, we are done since $Z = Z^c \cup Y^1$ Pareto dominates $Y$; this implies $Z \succ_{\Psi} Y$ (C1) which contradicts our assumption on $Y$. Therefore, assume the good $g$ removed from $Y^c_0$ comes from $Y^1_u$ for some $u \in N$.
\end{enumerate}

We create new allocations $\hat Z^c$ and $\hat Z^1$ starting at $Z^c$ and $Z^1$ and moving a good from $Z^c_i$ to $\hat{Z}^1_u$.
From Observation \ref{obs:yj-less-wj-less-wi} and our initial assumption that $|W^c_i| < |Y^c_i|$, we have that $|Y^c_j| \le |\hat{Z}^c_i|$. Further, if equality holds, $j < i$. We can therefore apply Lemma \ref{lem:y-not-psi-maximizing} with $\hat{Z}^c$ and $\hat{Z}^1$ to contradict our assumption that $Y$ is a dominating $\Psi$ maximizing allocation.
\end{proof}

So far we have $|W^c_i| < |Y^c_i|$ and for all other agents $h \in N - i$, $|W^c_h| \le |Y^c_h|$ (from the above Lemma). Therefore, using Lemma \ref{lem:augmentation-sufficient}, we get that there is a path from $i$ to $0$ in $W$ --- $0$ is the only agent such that $|W^c_0| = |W_0| > |Y_0| = |Y^c_0|$. However, since we chose the iteration where $i$ was removed from $U$, this creates a contradiction.
\end{proof}

Our next lemma shows that for any agent $i \in N$, $|X^c_i|$ cannot be greater than $|Y^c_i|$.

\begin{lemma}\label{lem:yc-always-greater-than-xc}
Let $X^c$ and $X^1$ be the final allocations output by Bivalued Yankee Swap used to maximize some fairness objective $\Psi$. Let $Y = Y^c \cup Y^1$ be an allocation that maximizes $\Psi$ such that among all the allocations which maximize $\Psi$, $Y$ is dominating. For all agents $i \in N$, we have $|X^c_i| \le |Y^c_i|$.
\end{lemma}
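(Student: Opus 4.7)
The plan is to argue by contradiction, following the template of the proof of Lemma \ref{lem:xc-less-yc-for-agents-not-in-p}. Assume that $|X^c_{i^*}| > |Y^c_{i^*}|$ for some $i^* \in N$, and among all such \emph{bad} agents choose $i^*$ to minimize $|Y^c_{i^*}|$, breaking ties by smallest index. I would then focus on the iteration $t$ at which $i^*$ received its $|X^c_{i^*}|$-th (final) clean good in Bivalued Yankee Swap, and let $W^c, W^1$ denote the state at the start of $t$. Because $i^*$ was selected at $t$, it is in $U$ at $t$, whence $|W^1_{i^*}| = 0$ (agents in $U$ never receive supplementary goods) and $|W^c_{i^*}| = |X^c_{i^*}| - 1 \ge |Y^c_{i^*}|$; moreover, Lemma \ref{lem:bivalued-picks-least-value} yields $|W^c_{i^*}| \le |W^c_j|$ for every $j \in U$ at iteration $t$, with equality forcing $i^* \le j$.

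Next I would apply Lemma \ref{lem:augmentation-sufficient} with the roles of $X$ and $Y$ played by $Y^c$ and $X^c$ respectively: since $|Y^c_{i^*}| < |X^c_{i^*}|$, there is a path in $\cal G(Y^c)$ from $F_{i^*}(Y^c)$ to $Y^c_k$ with $|Y^c_k| > |X^c_k|$ for some $k \in N + 0$. Suppose first that $k \in N$. Then $k \ne i^*$, and Lemma \ref{lem:xc-less-yc-for-agents-not-in-p} forces $k \in U$ at termination, hence also at iteration $t$. Performing path augmentation on $Y^c$ yields a clean allocation $Z^c$ with $|Z^c_{i^*}| = |Y^c_{i^*}| + 1$, $|Z^c_k| = |Y^c_k| - 1$, and $|Z^c_h| = |Y^c_h|$ otherwise. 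Condition (e) of Lemma \ref{lem:y-not-psi-maximizing} then follows from
\[
|Y^c_{i^*}| \le |W^c_{i^*}| \le |W^c_k| \le |X^c_k| \le |Y^c_k| - 1 = |Z^c_k|,
\]
where the second inequality is Lemma \ref{lem:bivalued-picks-least-value} at $t$, the third is monotonicity of $|W^c_h|$ across iterations, and the fourth is because $|Y^c_k|$ and $|X^c_k|$ are integers. If equalities hold throughout then tie-breaking plus $i^* \ne k$ (since $i^*$ is bad and $k$ is not) gives $i^* < k$. Lemma \ref{lem:y-not-psi-maximizing} applied to $Z^c$ and $Y^1$ then produces an allocation that dominates $Y$, contradicting $Y$'s being dominating.

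If instead $k = 0$, I would WLOG take $Y$ to be complete (unallocated items can be handed out without decreasing any agent's utility by monotonicity, preserving the $\Psi$-value via (C1)), so that the terminal good $g_t$ of the path lies in $Y^1_u$ for some $u \in N$. When $u = i^*$, setting $Z^1_{i^*} = Y^1_{i^*} - g_t$ and $Z^1_h = Y^1_h$ for $h \ne i^*$ gives
$v_{i^*}(Z_{i^*}) \ge c|Z^c_{i^*}| + |Z^1_{i^*}| = v_{i^*}(Y_{i^*}) + (c - 1) > v_{i^*}(Y_{i^*})$
and $v_h(Z_h) = v_h(Y_h)$ for $h \ne i^*$, so $Z$ strictly Pareto-dominates $Y$, contradicting (C1). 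When $u \ne i^*$, the naive transfer of $g_t$ out of $Y^1_u$ violates condition (a) of Lemma \ref{lem:y-not-psi-maximizing}; I would exploit the fact that the dominance relation only sees $|Y^c_h| = \beta_h(Y_h)$, so one may reselect the decomposition of $Y$ (cf.\ Example \ref{ex:docomposition}) to place $g_t$ inside some $Y^c_{u'}$ and reduce to the $k \in N$ case, or, failing that, compose the path augmentation with a further augmentation that restores $u$'s supplementary count from the pool $Z^c_0$.

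The hard part will be this last subcase: re-balancing the supplementary allocation after removing $g_t$ from $Y^1_u$ while preserving $|Z^1_h| = |Y^1_h|$ is delicate, and the whole argument rests on leveraging the non-uniqueness of clean/supplementary decompositions, together with the invariance of dominance under that non-uniqueness, to route around the problematic good.
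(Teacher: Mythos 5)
Your high-level structure (contradict via the augmentation lemma, split on whether the endpoint lies in $N$ or is the unallocated pool) matches the paper's, and your handling of the $k \in N$ branch is actually a cleaner, uniform argument than the paper's: by pinning the comparison to the iteration $t$ at which $i^*$ received its last clean good and using Lemma~\ref{lem:bivalued-picks-least-value} there, you avoid the paper's separate treatment of whether $i^*$ is still in $U$ at termination. Your $k=0$, $u=i^*$ subcase also tracks the paper's Sub-Case~(b). One small fix needed in the $k\in N$ branch: you should make explicit that in this case no path from $F_{i^*}(Y^c)$ to $Y^c_0$ exists (handle that possibility first, in the $k=0$ branch), so that the shortest path to $Y^c_k$ cannot touch $Y^c_0$ and hence $Z^c_0 = Y^c_0$, ensuring $Y^1$ remains a valid supplementary allocation for $Z^c$. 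Without that remark, condition~(a) of Lemma~\ref{lem:y-not-psi-maximizing} is not justified.

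The genuine gap is the $k=0$, $u\neq i^*$ subcase, which you flag as ``the hard part'' and then leave unresolved. Neither of your two suggestions closes it. Reselecting the decomposition fails because the exchange graph $\cal G(Y^c)$ is defined with respect to the specific clean allocation $Y^c$; moving $g_t$ into some $Y^c_{u'}$ changes the graph and the augmenting path you found need not exist there. ``Composing with a further augmentation to restore $u$'s supplementary count from $Z^c_0$'' is also not available: when $Y$ is complete, $Z^c_0 = Y^c_0 - g_t$ consists entirely of other agents' supplementary goods, so any replacement must come at another agent's expense and must be shown not to worsen $\Psi$. The paper resolves this with a careful case split that you do not reproduce: if $|Y^1_{i^*}|>0$, hand a supplementary good of $i^*$ to $u$ (strict Pareto gain since $i^*$ gained a $c$-good); if some $k\in N$ has $|Y^c_k| > |X^c_k|$, pull a clean good from $Z^c_k$ to replenish $u$ and reduce to the $k\in N$ branch; and in the residual case (where $|Y^1_{i^*}|=0$, no such $k$ exists, and $Y$ is complete) one must find an agent $i$ with $|Y^1_i| > |X^1_i|$, transfer a supplementary good from $i$ to $u$, and then carefully compare $\phi(Y,i^*,c)$ with $\phi(Z,i,1)$ by splitting on whether $|Y^c_i| < |X^c_i|$ or $|Y^c_i| = |X^c_i|$, invoking (G2), (G3), and Lemma~\ref{lem:g4}, together with Lemma~\ref{lem:g5}. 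This last sub-case is the technical crux of the lemma, and it is exactly the part your proposal does not supply.
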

\begin{proof}
Assume for contradiction that there exists a $j \in N$ such that $|X^c_j| > |Y^c_j|$. If there are multiple agents, choose the agent $i$ with least $|Y^c_j|$, breaking ties in favor of the lowest index agent.

Using Lemma \ref{lem:augmentation-sufficient} with the allocations $X^c, Y^c$ and agent $j$, we get that there exists a path in the exchange graph $\cal G(Y^c)$ from $F_j(Y^c)$ to some agent $k \in N+0$ such that $|Y^c_k| > |X^c_k|$. We have the following two cases.
\begin{enumerate}[label={\bfseries Case \arabic*:},itemindent=*,leftmargin=0cm]
\item There is no path from $F_j(Y^c)$ to $Y_0$. 
This implies that $k\ne 0$, i.e. $k$ is not the agent that represents the unassigned items, and $k \in N$.
We transfer goods along the path from $F_j(Y^c)$ to $Y^c_k$ to get the clean allocation $Z^c$. Note that since $|Y^c_k| > |X^c_k|$, using Lemma \ref{lem:xc-less-yc-for-agents-not-in-p} we have that $k \in U$ at the end of the algorithm. Also note that by assumption, $Y_0 = Z_0$ since the path did not contain any good in $Y_0$.  
We have two further subcases:
\begin{enumerate}[label={\bfseries Sub-Case (\alph*):},itemindent=*,leftmargin=0cm]
\item $j \notin U$ at the end of the algorithm.
Using Lemma \ref{lem:not-in-p-less-than-in-p} and the fact that $k \in U$, we have that $c|Y^c_j| < c|X^c_j| \le v_j(X_j) \le v_k(X_k) = c|X^c_k| \le c|Z^c_k|$. 

We can therefore apply Lemma \ref{lem:y-not-psi-maximizing} with allocations $Z^c$ and $Y^1$ to contradict our assumption on $Y$.

\item $j \in U$ at the end of the algorithm.
When $j$ received its final good, it was the least valued agent in $U$ (Lemma \ref{lem:bivalued-picks-least-value}). Ties were broken by choosing the agent with least index. We therefore have $|Y^c_j| \le |X^c_j| - 1 \le |X^c_k| \le |Z^c_k|$. If equality holds throughout, $j < k$.

We can again apply Lemma \ref{lem:y-not-psi-maximizing} with the allocations $Z^c$ and $Y^1$ to contradict our assumption on $Y$.
\end{enumerate}
\item There is a path from $F_j(Y^c)$ to $Y_0$.
We transfer goods along the shortest such path from $F_j(Y^c)$ to $Y^c_0$ to get the clean allocation $Z^c$. 
Note that since we transferred along the shortest path, $Z_0 = Y_0 - g$ for some $g \in G$. 

If the good $g \notin Y^1_u$ for some agent $u \in N$, then the allocation $Z = Z^c \cup Y^1$ strictly Pareto dominates the allocation $Y$. Therefore, $Z \succ_{\Psi} Y$ (C1) (symmetric Pareto dominance) and we contradict our assumption on $Y$.

Therefore, we assume that $g \in Y^1_u$ for some $u \in N$.
We have several possible subcases. 
\begin{enumerate}[label={\bfseries Sub-Case (\alph*):},itemindent=*,leftmargin=0cm]
\item $|Y^c_0| - \sum_{h \in N} |Y^1_h| > 0$.
Since by definition we have $\bigcup_{h \in N} Y^1_h \subseteq Y^c_0$, there is a good $g' \in Y^c_0 \setminus \bigcup_{h \in N} Y^1_h$. This good is not allocated to any agent, which means we can construct an allocation $Z^1$ as follows:
\begin{align*}
    Z^1_h = 
    \begin{cases}
        Y^1_h & (h \in N - u) \\
        Y^1_u - g + g' & (h = u)
    \end{cases}
    .
\end{align*}
This gives us an allocation $Z = Z^c \cup Z^1$ that strictly Pareto dominates $Y$ --- the utility of $i$ increases by at least $c$ and all the other agents have a weakly higher utility in $Z$ than in $Y$. Using (C1), we get that $Z \succ_{\Psi} Y$.

\item $j = u$.
We construct $Z^1$ as follows:
\begin{align*}
    Z^1_h = 
    \begin{cases}
        Y^1_h & (h \in N - u) \\
        Y^1_j - g & (h = j)
    \end{cases}
    .
\end{align*}
Since $|Z^c_j| = |Y^c_j| + 1$, we have $v_j(Z_j) \ge v_j(Y_j) + (c-1)$ and all the agents have a weakly higher utility in $Z$ than in $Y$. Therefore, $Z$ strictly Pareto dominates $Y$ and $Z \succ_{\Psi} Y$ (using (C1).

\item $|Y^1_j| > 0$ and $j \ne u$.
Let $g'$ be an arbitrary good in $Y^1_j$. We construct $Z^1$ as follows:
\begin{align*}
    Z^1_h = 
    \begin{cases}
        Y^1_h & (h \in N - i - u) \\
        Y^1_u - g + g' & (h = u) \\
        Y^1_j - g' & (h = j)
    \end{cases}
    .
\end{align*}
Since $|Z^c_j| = |Y^c_j| + 1$, we have $v_j(Z_j) \ge v_j(Y_j) + (c-1)$ and all the agents have a weakly higher utility in $Z$ than in $Y$. Therefore, $Z$ strictly Pareto dominates $Y$ and $Z \succ_{\Psi} Y$ (using (C1).

\item There exists a $k \in N$ such that $|Y^c_k| > |X^c_k|$.
Let $g'$ be an arbitrary good in $Z^c_k$. We create a new allocation $\hat{Z}^c$ and $\hat{Z}^1$ starting at $Z^c$ and $Z^1$ and moving the good $g'$ from $Z^c_k$ to $\hat Z^1_u$. 
This case is similar to Case 1, and implies that $Y$ is not a dominating $\Psi$-optimal allocation.

\item None of the previous four cases hold.
If none of the previous four cases hold, we can conclude that 
\begin{enumerate}[(a)]
    \item $|Y^1_j| = 0$,
    \item For all $k \in N$ we have $|Y^c_k| \le |X^c_k|$, and
    \item $|Y^c_0| - \sum_{h \in N} |Y^1_h| = 0$.
\end{enumerate}
This implies that for some agent $i \in N$, $|Y^1_i| > |X^1_i|$.
Let $g'$ be a good in $Y^1_i$. 
We define $Z^1$ as follows:
\begin{align*}
    Z^1_h = 
    \begin{cases}
        Y^1_h & (h \in N - i - u) \\
        Y^1_u - g + g' & (h = u) \\
        Y^1_i - g' & (h = i)
    \end{cases}
    .
\end{align*}
If we show that $\phi(Y, j, c) \ge \phi(Z, i, 1)$, we have that $Z \succeq_{\Psi} Y$ (Lemma \ref{lem:g5}). 
This is because all the other agents other than $i$ and $j$ have weakly greater utilities in $Z$ than in $Y$. 

Moreover, for all agents $h \in N$, we have $|Z^c_h| \ge |Y^c_h|$ with the inequality being strict for agent $j$ i.e. $|Z^c_j| > |Y^c_j|$. Therefore, $Z$ dominates $Y$.

To show $\phi(Y, j, c) \ge \phi(Z, i, 1)$, we have two further sub-cases:
\begin{enumerate}[label={\bfseries Sub-sub-Case (\roman*):},itemindent=*,leftmargin=0cm]
\item $|Y^c_i| < |X^c_i|$.
By our choice of $j$, we have $|Y^c_j| \le |Y^c_i|$ and if equality holds $j < i$. 
We have $v_j(Y_j) = c|Y^c_j| \le c|Y^c_i| = c|Z^c_i| \le v_i(Z_i)$. 

The inequality follows from our choice of $j$.

This implies $\phi(Y, j, c) = \phi(Y^c, j, c) \ge \phi(Y^c, i, c) \ge \phi(Z, i, c) > \phi(Z, i, 1) $. This first inequality uses (G3), the second inequality uses (G2), and the third inequality uses Lemma \ref{lem:g4}.

We can therefore conclude that $Z \succ_{\Psi} Y$ --- contradicting our assumption on $Y$.

\item $|Y^c_i| = |X^c_i|$.
Consider the iteration where agent $j$ received its $|X^c_j|$-th good. Let $\hat X^c$ and $\hat X^1$ be the allocation at the start of this iteration.

If agent $i$ was in $U$ at the start of this iteration, we have $|Y^c_j| \le |\hat X^c_j| \le |\hat X^c_i| \le |X^c_i| = |Y^c_i|$ from Lemma \ref{lem:bivalued-picks-least-value}. This means we can use the same analysis from Sub-sub-case (i) to show a contradiction about $Y$. 
\end{enumerate}
\end{enumerate}
\end{enumerate}
We can therefore assume $i \notin U$ at the iteration in consideration. By the choice of our iteration, we have $\phi(Y, j, c) \ge \phi(\hat X, j, c) \ge \phi(\hat X, i, 1) \ge \phi(Z, i, 1)$.
The first inequality comes from the fact that $v_j(Y_j) = c|Y^c_j| \le c|\hat X^c_j| \le v_j(X_j)$ and (G2). The second inequality comes from our choice of iteration. The third inequality comes the fact that $v_i(\hat X_i) \le v_i(X_i) \le v_i(Z_i)$ and (G2).
This gives us $Z \succeq_{\Psi} Y$ (using Lemma \ref{lem:g5}) implying that $Z$ maximizes $\Psi$ as well.

Moreover, as discussed above, $Z$ dominates $Y$ --- contradicting our assumption on $Y$.
\end{proof}

\subsection{Putting it All Together}
We are now ready to show our main result.

\thmbivaluedyankeeswap*
\begin{proof}
Let $X^c, X^1$ be the output of Bivalued Yankee Swap and $Y = Y^c \cup Y^1$ be a dominating $\Psi$ maximizing allocation. 

Assume for all $h \in N$, $v_h(X_h) \ge v_h(Y_h)$; if any inequality is strict, we have $X \succ_{\Psi} Y$ via (C1), a contradiction.

This implies for all $h \in N$, $v_h(X_h) = v_h(Y_h)$. Combining Lemmas \ref{lem:xc-less-yc-for-agents-not-in-p} and \ref{lem:yc-always-greater-than-xc}, we have that for all agents $h$ not in $U$ at the end of the algorithm, we have $|X^c_h| = |Y^c_h|$. 
Since all agents have the same utilities in both allocations, we have $|X^1_h| \le |Y^1_h|$. 
Therefore, for each agent $h \notin U$ at the end of the algorithm, we have $|X_h| \le |Y_h|$.

If for any $i$ that is in $U$ at the end of the algorithm, we have $|X_i| > |Y_i|$, then we have $v_i(X_i) = c|X_i| > c|Y_i| \ge v_i(Y_i)$ contradicting our initial assumption. 
Therefore, for all $i \in U$ at the end of the algorithm, we have $|X_i| \le |Y_i|$. 
Furthermore, for all $i \in U$ at the end of the algorithm, we have via Lemma \ref{lem:yc-always-greater-than-xc} that $|X_i| = |X^c_i| \le |Y^c_i| \le |Y_i|$. 
If any of these inequalities are strict, we have $|G| = \sum_{h \in N} |X_h| < \sum_{h \in N} |Y_h| \le |G|$, a contradiction. 
Here, the first equality holds due to Lemma \ref{lem:x-complete}. 

Therefore, equality holds throughout and we have for all agents $h$ in $N$ at the end of the algorithm $|X^c_h| = |Y^c_h|$ and $|X_h| = |Y_h|$. 
Therefore, for all agents $h \in N$, we have $|X^c_h| = |Y^c_h|$ and $v_h(X_h) = v_h(Y_h)$.
This implies $X$ maximizes $\Psi$ and among all $\Psi$ maximizing allocations $X$ is dominating. 

We, therefore, assume for contradiction that there exists an agent $i \in N$ such that $v_i(X_i) < v_i(Y_i)$. 
If there are multiple, choose the one with least $v_i(X_i)$, breaking ties in favor of agents with the smallest $|X^c_i|$, and further breaking ties in favor of agents with the smallest index. 

We have two possible cases:
\begin{enumerate}[label={\bfseries Case \arabic*:},itemindent=*,leftmargin=0cm]
\item $i \notin U$ at the end of the algorithm.

From Lemmas \ref{lem:xc-less-yc-for-agents-not-in-p} and \ref{lem:yc-always-greater-than-xc}, we get that $|Y^c_i| = |X^c_i|$. Since $v_i(X_i) < v_i(Y_i)$, we can conclude that $|X^1_i| < |Y^1_i|$.

Since $X$ allocates all the goods (Lemma \ref{lem:x-complete}), we must have that for some agent $j \in N$, $|Y_j| < |X_j|$. 

If $j \in U$ at the end of the algorithm, we have $|Y^c_j| \le |Y_j| < |X_j| = |X^c_j|$ contradicting Lemma \ref{lem:yc-always-greater-than-xc}. Therefore $j \notin U$ at the end of the algorithm.

Using Lemmas \ref{lem:xc-less-yc-for-agents-not-in-p} and \ref{lem:yc-always-greater-than-xc} again, we have that $|Y^c_j| = |X^c_j|$. Therefore $|X^1_j| > |Y^1_j|$. Let $Z^1$ be an allocation starting at $Y^1$ and moving an arbitrary good from $Y^1_i$ to $Z^1_j$.
To show that $Z \succeq_{\Psi} Y$, it suffices to show that $\phi(Y, j, 1) \ge \phi(Z, i, 1)$ (Lemma \ref{lem:g5}).

Let $W^c, W^1$ be the allocation at the iteration where $j$ received its final (provisionally allocated) good. If $i \in U$ at this iteration, we have $\phi(Y, j, 1) \ge \phi(W, j, 1) > \phi(W, i, c) \ge \phi(Z, i, c) > \phi(Z, i, 1)$. 
The first inequality comes form (G2), the second inequality comes from our choice of the iteration and the final inequality comes from Lemma \ref{lem:g4}.
Therefore, using (G1), we have $Z \succ_{\Psi} Y$ --- contradicting our assumption on $Y$. 

If $i \notin U$ at the iteration in consideration, by our choice of iteration (using Lemma \ref{lem:bivalued-picks-least-value}), we get that $v_j(Y_j) \le v_j(W_j) \le v_i(W_i) \le v_i(X_i) \le v_i(Z_i)$; if equality holds, $j < i$. This gives us $\phi(Y, j, 1) \ge \phi(W, j, 1) \ge \phi(W, i, 1) \ge \phi(Z, i, 1)$; again, if equality holds, then $j < i$. The first inequality holds due to (G2), the second inequality holds due to our choice of iteration and the final inequality holds due to (G2) again.

This gives us two possible cases. If $\phi(Y, j, 1) > \phi(Z, i, 1)$, we have $Z \succ_{\Psi} Y$ contradicting our assumption on $Y$.
If $\phi(Y, j, 1) = \phi(Z, i, 1)$, we have $v_j(Y_j) = v_i(Z_i)$ (using (G3)) and $j < i$. Therefore, both $Y$ and $Z$ have the same sorted utility vector; similarly, $Y^c$ and $Z^c$ have the same utility vector. However, since $j < i$, $Z$ dominates $Y$. Combining this with the fact that $Z =_{\Psi} Y$ --- we obtain a contradiction on $Y$.

\item $i \in U$ at the end of the algorithm.
We make two observations. First, if $i \in U$ and $v_i(X_i) < v_i(Y_i)$, we have $|X_i| = |X^c_i| \le |Y^c_i| \le |Y_i|$. If equality holds throughout, then $v_i(X_i) = v_i(Y_i)$ --- contradicting our initial assumption. Therefore, either $|X^c_i| < |Y^c_i|$ or $|X^c_i| = |Y^c_i|$ and $|X^1_i| < |Y^1_i|$.  We show a contradiction for the case where $|X^c_i| < |Y^c_i|$ --- the other case can be shown similarly since $\phi(X, i, c) > \phi(X, i, 1)$ (Lemma \ref{lem:g4}).

If $i \in U$ at the end of the algorithm, using our choice of $i$ and Lemma \ref{lem:not-in-p-less-than-in-p}, we get that for all agents $j \notin U$ at the end of the algorithm, we have $v_j(X_j) \ge v_j(Y_j)$. Combining this with the fact that $|X^c_j| = |Y^c_j|$ (Lemmas \ref{lem:xc-less-yc-for-agents-not-in-p} and \ref{lem:yc-always-greater-than-xc}), we have that $|X^1_j| \ge |Y^1_j|$.   

If all these weak inequalities are equalities, we have for all agents $j \notin U$ at the end of the algorithm, $|X_j| = |Y_j|$. Using Lemma \ref{lem:yc-always-greater-than-xc}, we get that for all agents $h$ in $U$ at the end of the allocation, $|X_h| = |X^c_h| \le |Y^c_h| \le |Y_h|$. This inequality is strict for the agent $i$. This gives us the inequality $|G| = \sum_{h \in N} |X_h| < \sum_{h \in N} |Y_h| \le |G|$ --- a contradiction.
\end{enumerate}
Therefore, for at least one agent $j \notin U$ at the end of the algorithm, $|X^1_j| > |Y^1_j|$. 
Let $Z = Z^c \cup Z^1$ be the allocation that results from starting at $Y$ and moving a good from $Y^c_i$ to $Z^1_j$.
Let $W^c, W^1$ be the allocation at the start of the iteration where $j$ receives its final (provisionally allocated) good. 

We have by our choice of the iteration, $\phi(Y, i, 1) \ge \phi(W, i, 1) > \phi(W, j, c) \ge \phi(Z, j, c)$. The first inequality holds due to (G2), the second inequality holds due to our choice of iteration and the third inequality holds due to (G2) again. Therefore, we have $Z \succ_{\Psi} Y$ (using Lemma \ref{lem:g5}) --- a contradiction on our assumption on $Y$.
\end{proof}

\section{Applying Bivalued Yankee Swap}
We now turn to applying Theorem \ref{thm:bivalued-yankee-swap} to well known fairness objectives. 
While we do not prove this explicitly, in all cases, the gain function $\phi$ can be trivially computed in time $O(\tau)$ (where $\tau$ is an upper bound on the time to compute $v_i(S)$ for any $i$ and any $S$). 
In Section \ref{sec:phi-speedup}, we show how the computation of $\phi$ can be sped up even further to $O(1)$ time. Some of the proofs in this Section have been relegated to Appendix \ref{apdx:applying-bivalued-yankee-swap} due to their similarity with Theorem \ref{thm:max-nash-welfare}.

\subsection{Max Nash Welfare}\label{sec:mnw}
Recall that a max Nash Welfare allocation $X$ is one that maximizes the number of agents $|P_X|$ who receive a non-zero utility, and subject to that maximizes the product $\prod_{i \in P_X} v_i(X_i)$. 

\begin{restatable}{theorem}{thmmaxnashwelfare}\label{thm:max-nash-welfare}
When $\Psi$ corresponds to the Nash welfare, Bivalued Yankee Swap run with the following gain function $\phi_\MNW$ computes a Nash welfare maximizing allocation.
\begin{align*}
    \phi_\MNW(X, i, d) = 
    \begin{cases}
        \frac{v_i(X_i) + d}{v_i(X_i)} & v_i(X_i) > 0 \\
        Md & v_i(X_i) = 0
    \end{cases}
\end{align*}
where $M$ is a very large positive number.
\end{restatable}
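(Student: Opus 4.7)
The plan is to apply Theorem \ref{thm:bivalued-yankee-swap}: it suffices to verify that the Nash welfare criterion, together with the gain function $\phi_\MNW$, satisfies the two conditions (C1) symmetric Pareto dominance and (C2) (namely (G1), (G2), (G3)). Once these are checked, the theorem immediately yields that Bivalued Yankee Swap with $\phi_\MNW$ produces a (dominating) max Nash welfare allocation. I will treat $M$ as an arbitrarily large positive number, to be chosen larger than any ratio of the form $(v+d)/v$ that can arise over the finite range of utility vectors reachable during the algorithm (so finitely many cases).

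First I would check (C1). The Nash welfare ordering $\succeq_\Psi$ first compares $|P_{\vec x}|$, the number of strictly positive entries, and then compares $\prod_{i:x_i>0} x_i$. Both quantities depend only on the sorted utility vector, so they are symmetric; moreover if $s(\vec x)_i \ge s(\vec y)_i$ for every $i$, then the number of positive entries in $\vec x$ is at least that of $\vec y$, and on the common positive coordinates the product of $\vec x$'s sorted entries dominates that of $\vec y$. Equality throughout occurs iff $s(\vec x)=s(\vec y)$. This gives (C1).

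Next I would verify the three gain-function conditions by a short case analysis on whether $x_i$ (and $x_j$) are zero or positive. For (G1), fix $\vec x$, agents $i,j$ and $d_1,d_2\in\{1,c\}$, and let $\vec y,\vec z$ be the vectors obtained by adding $d_1$ to $x_i$ and $d_2$ to $x_j$ respectively. When both $x_i,x_j>0$, the Nash product after the gain equals the Nash product at $\vec x$ times $(x_i+d_1)/x_i$ or $(x_j+d_2)/x_j$, and the number of positive entries is unchanged in either case; thus $\vec y\succeq_\Psi \vec z$ iff $\phi_\MNW(\vec x,i,d_1)\ge \phi_\MNW(\vec x,j,d_2)$. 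When exactly one of $x_i,x_j$ is zero (say $x_i=0$), adding $d_1$ to $x_i$ strictly increases $|P|$ while adding $d_2$ to $x_j$ does not; choosing $M$ large enough that $Md_1$ exceeds every finite ratio that can appear gives both $\vec y\succ_\Psi \vec z$ and $\phi_\MNW(\vec x,i,d_1)>\phi_\MNW(\vec x,j,d_2)$, so the implication holds. When $x_i=x_j=0$, both vectors gain one new positive entry, and the resulting Nash products compare exactly as $d_1$ and $d_2$, which is exactly how $Md_1$ and $Md_2$ compare. In every case equality of the gain functions matches equality under $\succeq_\Psi$. For (G2), fix $i$ and $d$ and suppose $x_i\le y_i$: if $x_i=y_i$ then equality is immediate; if $0<x_i<y_i$ then $(x_i+d)/x_i = 1 + d/x_i > 1+d/y_i = (y_i+d)/y_i$; and if $x_i=0<y_i$ then $Md$ dominates $(y_i+d)/y_i$ by our choice of $M$. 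For (G3), the same inequalities apply with the roles of the two indices played by the two agents at the same vector $\vec x$: whenever $x_i\le x_j$ the function $v\mapsto(v+d)/v$ is decreasing in $v$, and the boundary case $x_i=0<x_j$ is again handled by the size of $M$.

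The only genuinely delicate point is bookkeeping around the $x_i=0$ regime, where $\phi_\MNW$ is defined by a separate branch: one must make sure $M$ is uniformly large enough so that the strict inequalities above hold across all comparisons that can actually arise in a run of the algorithm, and one must check that equality in $\phi_\MNW$ is never forced by the $M$-branch against the ratio branch (both branches only give equal values when comparing two zero-utility agents with the same $d$, which is genuinely a tie under $\Psi$). With (C1) and (G1)--(G3) established, Theorem \ref{thm:bivalued-yankee-swap} immediately delivers a dominating Nash-welfare-maximizing allocation, which is the claim.
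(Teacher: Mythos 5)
Your proposal is correct and follows essentially the same approach as the paper: verify (C1), then check (G1) by a case analysis on whether $x_i$ and $x_j$ are zero or positive, and invoke Theorem~\ref{thm:bivalued-yankee-swap}. The paper states (C1), (G2), and (G3) as "easy to see" and only writes out (G1), whereas you spell out (G2)/(G3) via monotonicity of $v \mapsto (v+d)/v$ and are more explicit about choosing $M$ uniformly large, but these are elaborations of the same argument rather than a different route.
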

\begin{proof}
Formally, $X \succeq_{\MNW} Y$ if any of the following conditions hold:
\begin{enumerate}[(a)]
    \item $|P_X| > |P_Y|$, or
    \item $|P_X| = |P_Y|$ and $\prod_{i \in P_X} v_i(X_i) \ge \prod_{i \in P_Y} v_i(Y_i)$.
\end{enumerate}
It is easy to see that $\MNW$ satisfies (C1), and that $\phi_\MNW$ satisfies (G2) and (G3). 
The only property left to show is (G1). For any vector $\vec x \in \Z^n_{\ge 0}$, consider two agents $i$ and $j$ and two values $d_i, d_j \in \{1, c\}$. Let $\vec y$ be the vector that results from starting at $\vec x$ and adding $d_i$ to $x_i$ and $\vec z$ be the vector that results from starting at $\vec x$ and adding $d_j$ to $x_j$. We need to show that $\phi_\MNW(\vec x, i, d_i) \ge \phi_\MNW(\vec x, j, d_j)$ if and only if $\vec y \succeq_{\MNW} \vec z$ with equality holding if and only if $\vec y =_{\MNW} \vec z$. We divide the proof into cases.
\begin{enumerate}[label={\bfseries Case \arabic*:},itemindent=*,leftmargin=0cm]
\item $x_i = x_j = 0$. Then, $\vec y \succ_{\MNW} \vec z$ if and only if $d_i > d_j$ and this is true if and only if $\phi_\MNW(\vec x, i, d_i) > \phi_\MNW(\vec x, j, d_j)$. Note that $\vec y =_{\MNW} \vec z$ if $d_i = d_j$ and $\phi_\MNW$ reflects this as well.

\item $x_i > x_j = 0$.  In this case $\vec z \succ_{\MNW} \vec y$ and $\phi_\MNW(\vec x, i, d_i) < \phi_\MNW(\vec x, j, d_j)$.

\item $x_i, x_j > 0$. For this case, we use the fact that $\phi_\MNW(\vec x, i, d_i) = \frac{\prod_{h \in P_{\vec x}} y_h}{\prod_{h \in P_{\vec x}} x_h}$ and $\phi_\MNW(\vec x, j, d_j) = \frac{\prod_{h \in P_{\vec x}} z_h}{\prod_{h \in P_{\vec x}} x_h}$. Here $P_{\vec x}$ is the set of indices with positive values in the vector $\vec x$. Note that $P_{\vec x} = P_{\vec y} = P_{\vec z}$. 
\end{enumerate}

Using this observation, we can infer that $\vec y \succeq_{\MNW} \vec z$ if and only if $\phi_\MNW(\vec x, j, d_j) \ge \phi_\MNW(\vec x,i, d_i)$ with equality holding if and only if $\phi_\MNW(\vec x, i, d_i) = \phi_\MNW(\vec x, j, d_j)$.
\end{proof}

\subsection{Leximin}\label{sec:leximin}
Recall that a leximin allocation is one that maximizes the utility of the worst off agent, subject to that, maximizes the utility of the second worst off agent, and so on. 
\begin{restatable}{theorem}{thmleximin}\label{thm:leximin}
When $\Psi$ corresponds to the leximin fairness objective, Bivalued Yankee Swap run with the gain function $\phi_\lexmin(X, i, d) = -(c+1)v_i(X_i) + d$ computes a leximin allocation.
\end{restatable}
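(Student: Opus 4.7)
The plan is to deduce the theorem from Theorem \ref{thm:bivalued-yankee-swap} by verifying that the leximin criterion together with $\phi_\lexmin$ satisfies (C1) and (C2). Condition (C1) is essentially the definition of leximin: two utility vectors compare via their ascending sorts, so if $s(\vec x) \ge s(\vec y)$ componentwise then $\vec x \succeq_\lexmin \vec y$ with equality exactly when $s(\vec x) = s(\vec y)$. Conditions (G2) and (G3) are routine algebraic checks on the linear function $\phi_\lexmin(\vec x, i, d) = -(c+1)x_i + d$: it is strictly decreasing in $x_i$ (giving (G2)) and, for fixed $d$, depends only on $x_i$ in a way that strictly favors smaller-utility agents (giving (G3)).

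The substantive step is (G1). Given $\vec x \in \Z^n_{\ge 0}$, agents $i,j$, and $d_1,d_2 \in \{1,c\}$, let $\vec y$ and $\vec z$ arise by adding $d_1$ to $x_i$ and $d_2$ to $x_j$, respectively. I would first handle $i = j$ directly: the gain inequality reduces to $d_1 \ge d_2$, and then $\vec y$ componentwise dominates $\vec z$. For $i \ne j$, the two sorted vectors $s(\vec y)$ and $s(\vec z)$ differ only in how the pairs $\{x_i + d_1, x_j\}$ and $\{x_i, x_j + d_2\}$ merge with the shared multiset $\{x_h : h \ne i,j\}$. I would prove the auxiliary monotonicity statement that if one $2$-multiset leximin-dominates another, this dominance is preserved under merging with any common multiset; then it suffices to compare the two pairs themselves. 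The gain-function inequality rewrites as $(c+1)(x_j - x_i) \ge d_2 - d_1$, and I would split on the four values of $(d_1,d_2)$: the key case is $d_1 = c,\, d_2 = 1$, where the inequality together with the integrality of utilities (all values lie in $\Z_{\ge 0}$ since agents have \SUB{1}{c} valuations with integer $c$) forces $x_j \ge x_i$, from which a short case check shows $\min\{x_i+c, x_j\} \ge x_i = \min\{x_i, x_j + 1\}$, with strictness tracking the strictness of the gain inequality. The remaining cases ($d_1=d_2$, and $d_1=1,d_2=c$) are similar but easier.

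The main obstacle is the choice of the coefficient $c+1$ in $\phi_\lexmin$, and I would pause to explain why it is the right one: the coefficient must exceed the maximum possible marginal $d = c$, so that a unit difference in current utility always outweighs the gap $d_2 - d_1$ between the two possible gains. Any smaller coefficient would break the critical case above (one could have $x_i = x_j$ yet $\phi_\lexmin(\vec x, j, c) > \phi_\lexmin(\vec x, i, 1)$ with $\vec z$ not leximin-dominating $\vec y$), so the theorem really relies on this tight relationship between $\phi_\lexmin$ and the valuation range. Once (G1)--(G3) and (C1) are in hand, Theorem \ref{thm:bivalued-yankee-swap} immediately gives that Bivalued Yankee Swap with $\phi_\lexmin$ produces a leximin allocation, completing the proof.
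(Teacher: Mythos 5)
Your proposal follows the same overall strategy as the paper — verify (C1) and (G1)--(G3) for $\phi_\lexmin$ and then invoke Theorem~\ref{thm:bivalued-yankee-swap} — and arrives at a correct proof, but it organizes the check of (G1) rather differently. The paper splits on the comparison of $x_i$ and $x_j$: when $x_i = x_j$ the gain comparison reduces to $d_i \ge d_j$ and trivially agrees with leximin, and when $x_i < x_j$ both the leximin comparison and the $\phi$ comparison resolve strictly in favor of $i$ regardless of $d_i, d_j$, using integrality and the fact that $c+1 > c \ge d_j$ so a single unit of utility gap always dominates the $d$-gap. You instead case on the four $(d_1, d_2)$ pairs, and make precise, via an explicit auxiliary lemma, the reduction from comparing full sorted utility vectors to comparing the two $2$-multisets $\{x_i+d_1, x_j\}$ and $\{x_i, x_j+d_2\}$ (the paper simply says ``increasing the utility of the worse-off agent is always better''). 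That lemma — that leximin dominance of multisets is preserved under merging with a common multiset — is correct (it follows by induction from the fact that sorted-lex dominance is preserved when one adds a single common element to both sides) and is a worthwhile thing to spell out. One place you should be careful: in the $d_1 = c, d_2 = 1$ case, your check only compares the minima of the two pairs; you also need to compare the maxima in the subcase $x_i = x_j$ (there $\min$'s tie and you need $x_i + c \ge x_j + 1$, which holds). Finally, your digression on why the coefficient is $c+1$ is not quite right: your proposed counterexample for a smaller coefficient with $x_i = x_j$ does not actually fail (when $x_i = x_j$ the coefficient cancels entirely), and in fact one can check that any integer coefficient $K \ge c$ works here, since for $x_i < x_j$ and integer utilities one needs only $K(x_j - x_i) > d_2 - d_1$, with $d_2 - d_1 \le c - 1$, and also that $K(x_j - x_i) = \pm(c-1)$ has no nonzero integer solution, both of which hold for $K = c$ as well as $K = c+1$. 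This does not affect the validity of the theorem or of your proof for the stated $\phi_\lexmin$, but the explanatory remark as written would mislead a reader.
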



\subsection{$p$-mean Welfare}
Recall that the max $p$-mean welfare allocation $X$ first maximizes the number of agents who receive a positive utility $|P_X|$ and subject to that, maximizes $M_p(X) = \bigg (\frac1n \sum_{i \in P_X} v_i(X_i)^p \bigg )^{1/p}$ where $p \le 1$.
We have already shown how to compute this justice criterion for certain $p$ values: $M_0$ corresponds to Nash welfare (Section \ref{sec:mnw}) and $M_{-\infty}$ corresponds to leximin (Section \ref{sec:leximin}). 
We now show how to compute a $p$-mean welfare maximizing allocation for all the other $p$-values.

\begin{restatable}{theorem}{thmpmean}\label{thm:p-mean-welfare}
When $\Psi$ corresponds to the $p$-mean welfare objective with finite $p < 1$ and $p \ne 0$, Bivalued Yankee Swap run with the following gain function computes a $p$-mean welfare allocation.
\begin{align}
    \phi_\pWel(X, i, d) = 
    \begin{cases}
        (v_i(X_i) + d)^p - v_i(X_i)^p & p \in (0, 1)\text{ and } v_i(X_i) > 0 \\
        v_i(X_i)^p - (v_i(X_i) + d)^p & p < 0 \text{ and } v_i(X_i) > 0 \\
        Md & v_i(X_i) = 0
    \end{cases} \notag
\end{align}
where $M$ is a very large number.
\end{restatable}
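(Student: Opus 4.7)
The plan is to invoke Theorem~\ref{thm:bivalued-yankee-swap}, so it suffices to check that the $p$-mean welfare criterion (viewed as the lexicographic composition: first maximize $|P_X|$, then maximize $M_p$) satisfies (C1), and that $\phi_\pWel$ satisfies (G1)--(G3). The overall structure parallels the proof of Theorem~\ref{thm:max-nash-welfare}; the one new wrinkle is that the sign of $p$ flips the monotonicity of $x \mapsto x^p$, which is precisely why $\phi_\pWel$ is defined with two separate cases for $p \in (0,1)$ and $p < 0$. I would also assume $M$ is chosen large enough to dominate any value of the form $(x+d)^p - x^p$ or $x^p - (x+d)^p$ with $x \in \{1,\dots,cm\}$ and $d \in \{1,c\}$, which is well-defined since utilities are bounded by $cm$.

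For (C1), I would show that the $p$-mean restricted to positive entries is monotone increasing in each coordinate for both ranges of $p$: when $p \in (0,1)$, both $x \mapsto x^p$ and $t \mapsto t^{1/p}$ are increasing; when $p < 0$, both are decreasing, and their composition is again increasing. Combined with the fact that both $|P_X|$ and $M_p$ depend only on the multiset of utilities, this yields symmetric Pareto dominance. For (G2) and (G3), since $\phi_\pWel(X,i,d)$ depends only on $v_i(X_i)$ and $d$, it is enough to check that the single-variable map $x \mapsto \phi_\pWel(x,d)$ is strictly decreasing on the positive integers. A direct derivative check does this: for $p \in (0,1)$, $\frac{d}{dx}[(x+d)^p - x^p] = p[(x+d)^{p-1} - x^{p-1}] < 0$ since $p-1 < 0$; for $p < 0$, $\frac{d}{dx}[x^p - (x+d)^p] = p[x^{p-1} - (x+d)^{p-1}] < 0$ since $p < 0$ and $x^{p-1} > (x+d)^{p-1}$. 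The step from $x=0$ to $x \geq 1$ is covered by the choice of $M$.

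The crux is (G1), which I would prove by casework on whether $x_i$ and $x_j$ are zero, mirroring Theorem~\ref{thm:max-nash-welfare}. When both are positive, $P_{\vec x} = P_{\vec y} = P_{\vec z}$, so the lexicographic comparison reduces to comparing $p$-sums. The identity
\begin{align*}
\textstyle \sum_{h \in P_{\vec x}} y_h^p - \sum_{h \in P_{\vec x}} z_h^p = \bigl[(x_i+d_i)^p - x_i^p\bigr] - \bigl[(x_j+d_j)^p - x_j^p\bigr]
\end{align*}
together with the sign of $p$ (and the monotonicity of the outer $(\cdot)^{1/p}$) shows that $\phi_\pWel(\vec x,i,d_i) \geq \phi_\pWel(\vec x,j,d_j)$ is equivalent to $\vec y \succeq_\pWel \vec z$. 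When exactly one of $x_i,x_j$ is zero, the update at the zero coordinate strictly enlarges $|P|$, and the $Md$ branch of $\phi_\pWel$ is chosen precisely to dominate the bounded positive-utility branch. When both are zero, $|P|$ ties after either update, so the comparison reduces to $d_i^p$ versus $d_j^p$ (negated for $p<0$), which the $Md$ ordering tracks correctly. The main bookkeeping challenge, as in Theorem~\ref{thm:max-nash-welfare}, will be keeping the equality-iff clauses of (G1) synchronized across these subcases and verifying the $p<0$ sign flip is handled consistently; beyond that, no new technical ideas are required.
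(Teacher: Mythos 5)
Your proposal is correct and follows essentially the same route as the paper's proof: reduce to checking (C1), (G1)--(G3) for the two-tier $p$-mean criterion, with the only substantive case analysis in (G1) splitting on the sign of $p$ and whether $x_i, x_j$ are zero (the zero case reducing to the treatment in Theorem~\ref{thm:max-nash-welfare}). The paper simply asserts (C1), (G2), (G3) are ``easy to see'' whereas you supply the monotonicity/derivative check; the (G1) argument via the $p$-sum identity and the flipped monotonicity of $(\cdot)^{1/p}$ for $p<0$ is identical to the paper's.
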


Note that \Cref{thm:p-mean-welfare} does not include the case where $p = 1$. 
This is because the gain function $\phi(X, i, d) = d$ does not satisfy (G3). 
The case where $p = 1$ corresponds to the utilitarian social welfare (\USW) of an allocation. 
While we can construct a valid gain function to compute a \USW optimal allocation, we do not need to.
There exists an efficient algorithm for computing a \USW optimal allocation without using Bivalued Yankee Swap --- compute a clean utilitarian welfare maximizing allocation with respect to the binary submodular valuations $\{\beta_i\}_{i \in N}$ and allocate the remaining goods arbitrarily. 


\subsection{Speeding up Computation of $\phi$ to $O(1)$ time}\label{sec:phi-speedup}
The only queries we make to $\phi$ are with respect to the allocation $X = X^c \cup X^1$ maintained by Bivalued Yankee Swap. Using this simple observation, the gain function $\phi$ can be computed in $O(1)$ time for all the justice criterion discussed above if $v_i(X_i)$ can be computed in $O(1)$ time. 

This can be done since at every iteration of bivalued Yankee Swap, $v_i(X_i) = c|X^c_i| + |X^1_i|$ (Proposition \ref{prop:x-reduction}).  
Therefore, we can store the values of $|X^c_i|$ and $|X^1_i|$ for each $i \in N$ at zero cost to the asymptotic time complexity, and use them to compute $v_i(X_i)$ in $O(1)$ time. 

\subsection{When $c$ is not a natural number}
If $c$ is not a natural number (or equivalently, $a$ does not divide $b$), the complexity of the problem increases significantly and Bivalued Yankee Swap no longer works. 
This complexity is captured by \citet{akrami2022mnw}, who show that computing MNW allocations under {\em bivalued additive valuations} when $a \in \mathbb{N}_{\ge 1}$ and $b \in \mathbb{N}_{\ge 1}$ are coprime is NP-hard. More formally, they show that for every $a \ge 3$ and $b > a$, the problem of computing an \MNW allocation is NP-hard.

The problem of computing an \MNW allocation with an arbitrary real valued $c$ is a strictly harder problem and therefore, must be NP-hard as well. 
The exact same proof can also be used to show computing leximin allocations is NP-hard when $c$ is not a natural number.
That being said, it may be the case that for some specific non-integer values of $c$, we may still efficiently compute \MNW and leximin allocations. 
We leave this question to future work.

\section{Maximin Share Guarantees of MNW and Leximin Allocations}
We explore the maximin share guarantees of leximin and max Nash welfare allocations. 
The maximin share of an agent $i \in N$ (denoted by $\MMS_i$) is defined as the utility agent $i$ would receive if they divided the set of goods $G$ into $n$ bundles and picked the worst one (according to their preferences). More formally, $\MMS_i = \max_{X} \min_{j \in N} v_i(X_j)$. An allocation $X$ is defined as $\varepsilon$-$\MMS$ for some $\varepsilon > 0$ if for all agents $i \in N$, $v_i(X_i) \ge \varepsilon\MMS_i$ \citep{Budish2011EF1, procaccia2014fairenough}.

We prove the following two results. 

\begin{restatable}{theorem}{thmmnwmms}\label{thm:mnw-mms-submodular}
Let $c$ be an integer $\ge 2$. When agents have $\{1, c\}$-SUB valuations, then any max Nash welfare allocation $X$ is $\frac25$-\MMS.
\end{restatable}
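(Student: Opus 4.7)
The plan is to fix an arbitrary agent $i \in N$ and prove $v_i(X_i) \ge \frac{2}{5}\MMS_i$ for any MNW allocation $X$. Throughout I will use the decomposition $X = X^c \cup X^1$ from Lemma \ref{lem:decompose} and the binary-submodular structure of $\beta_i$ from Lemma \ref{lem:beta-mrf}. As a sanity anchor, note that $v_i(X_i) = c|X^c_i| + |X^1_i|$ by Observation \ref{obs:beta-v-relation}, so bounding $v_i(X_i)$ reduces to counting high- and low-valued goods.

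\textbf{Step 1 (symmetric reference instance).} I construct the ``identical-agents'' instance $\mathcal{I}_i$ by replacing every agent's valuation with $v_i$ and let $L$ be a leximin allocation in $\mathcal{I}_i$. Because $\mathcal{I}_i$ is symmetric and the MMS partition itself witnesses a feasible allocation guaranteeing each copy of $i$ at least $\MMS_i$, any leximin allocation in $\mathcal{I}_i$ must also guarantee $v_i(L_j) \ge \MMS_i$ for every $j$, and in particular $v_i(L_i) \ge \MMS_i$. Decomposing $L = L^c \cup L^1$, symmetry plus leximin forces the values $|L^c_j|$ (and separately $|L^1_j|$) to be nearly equal across $j$, so $L_i$ is a concrete target whose $(|L^c_i|, |L^1_i|)$ I can compare against.

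\textbf{Step 2 (choosing the right MNW allocation).} Since the distribution of high- versus low-valued goods can differ across MNW allocations and even across decompositions of a single MNW allocation (Example \ref{ex:docomposition}), I invoke the ``weak dominance'' notion previewed in the introduction to select a particular MNW allocation $X = X^c \cup X^1$ whose profile $(|X^c_1|,\ldots,|X^c_n|)$ is as balanced as possible—concretely, one that lex-minimizes the descending sort of $|X^c_\cdot|$ among all MNW-optimal decompositions. This choice locks in the key structural property: no path augmentation in $\mathcal{G}(X^c)$ can move a $c$-valued good to an agent with strictly fewer clean goods without changing Nash welfare, and no supplementary swap of a $1$-valued good between two agents can be Nash-improving.

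\textbf{Step 3 (swap arguments from MNW-optimality).} Suppose for contradiction $v_i(X_i) < \frac{2}{5}\MMS_i$. I compare the pair $(|X^c_i|,|X^1_i|)$ with $(|L^c_i|,|L^1_i|)$. If $|X^c_i| < |L^c_i|$, then Lemma \ref{lem:augmentation-sufficient} applied to $\beta_i$ supplies a path in $\mathcal{G}(X^c)$ from $F_i(X^c)$ to some agent $j$ with $|X^c_j| > |L^c_j|$. Path-augmenting and handing $i$ the frontier good transfers value $c$ from $j$ to $i$, so MNW optimality of $X$ implies
\begin{equation*}
\frac{v_i(X_i)+c}{v_i(X_i)} \cdot \frac{v_j(X_j)-c}{v_j(X_j)} \;\le\; 1.
\end{equation*}
If instead $|X^c_i| \ge |L^c_i|$, then necessarily $|X^1_i| < |L^1_i|$ (else $v_i(X_i) \ge v_i(L_i) \ge \MMS_i$), and a counting argument produces an agent $j$ whose supplementary bundle contains a good I can directly reassign to $i$, yielding the analogous MNW inequality with increment $1$.

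\textbf{Step 4 (extracting the constant $\frac{2}{5}$).} Combining each MNW swap inequality with the counting bounds on $(|L^c_i|,|L^1_i|)$ supplied by the symmetric leximin allocation of Step 1 constrains $v_j(X_j)$ in terms of $v_i(X_i)$, and in the tight case forces $v_j(X_j) \lesssim \frac{3}{2}v_i(X_i)$. Substituting back into $c|X^c_i|+|X^1_i| < \frac{2}{5}\MMS_i$ and using $\sum_j v_j(X_j) \ge \sum_j v_i(L_j) \ge n\cdot \MMS_i$ (via Pareto-dominance of the symmetric leximin by the aggregate utilities in $X$) yields a numerical contradiction, where $\frac{2}{5}$ is exactly the threshold at which this system of inequalities becomes infeasible.

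The main obstacle will be Step 3, and specifically the boundary case where $i$'s shortfall is split between both good types: neither the $c$-swap nor the $1$-swap alone suffices, and one must argue about a sequence of swaps whose cumulative Nash-ratio is still bounded by $1$ by MNW. The constant $\frac{2}{5}$ (rather than a weaker bound) emerges precisely from optimizing over this mixed case, and the main technical work is verifying that the weak-dominance choice in Step 2 rules out the configurations that would otherwise degrade the ratio.
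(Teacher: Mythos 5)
Your high-level architecture mirrors the paper's: fix an agent $i$, build a leximin allocation $Y$ for the identical-valuations instance as a certified $\MMS$-meeting reference, pick a ``balanced'' $\MNW$ allocation via a weak-dominance notion, and then compare counts of high/low-valued goods. So the skeleton is recognizable, and Steps~1--2 are directionally right (the paper's Lemma~\ref{lem:mnw-1-weak-fairness} shows $\MNW$ allocations can be replaced by $1$-weakly dominating ones with the same utility vector, which is close in spirit to your lex-minimization of the descending $|X^c_\cdot|$ profile, though it is stated instead as conditions (WD1)/(WD2) that directly feed the later pigeonhole counting). However, Steps~3--4 contain concrete errors that would break the proof.

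First, you apply Lemma~\ref{lem:augmentation-sufficient} to compare $X^c$ and $L^c$, but that lemma requires both allocations to be clean with respect to the \emph{same} family $\{\beta_h\}_{h\in N+0}$. Here $X$ is clean w.r.t.\ the actual $\{\beta_h\}$, whereas $L$ is clean w.r.t.\ the replicated $\beta_i$ for every agent; the exchange graphs are over different bipartite structures and the lemma does not apply. The paper avoids this entirely: Lemma~\ref{lem:x-c-i-lowerbound} lower-bounds $|X^c_i|$ by a pigeonhole argument that only reasons about how many goods in $G$ give $i$ marginal gain $c$ relative to $X^c_i$, and where they can sit under $k$-weak dominance---no cross-instance augmenting path is invoked.

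Second, your Step~4 uses $\sum_j v_j(X_j) \ge \sum_j v_i(L_j) \ge n\cdot\MMS_i$. This is false in general: $L$ lives in the identical-$v_i$ instance, while $X$ lives in the real instance where other agents may value every good at only $1$; the total utility of $X$ can be as low as $m$ while $n\cdot\MMS_i$ can be on the order of $cm$. Third, the $\MNW$ swap inequalities you write only give $v_j(X_j) \le v_i(X_i) + c$ (resp.\ $+1$), not $v_j(X_j) \lesssim \tfrac{3}{2}v_i(X_i)$, so your claimed numerical system is not the one that actually yields $\tfrac{2}{5}$. In the paper, $\tfrac{2}{5}$ comes out of explicit item-counting bounds: a lower bound on the number of ``blocking'' agents $|M|$ (Lemma~\ref{lem:m-lowerbound}), a lower bound on $|X^1_i|$ via (G4) (Lemma~\ref{lem:x-1-i-lowerbound}), and an upper bound on $|Y^1_n|$ (Lemma~\ref{lem:y-1-n-upperbound}), which combine in Lemma~\ref{lem:mnw-mms-relation} to give $\MMS_i \le \bigl(2 + \tfrac{1}{|X^c_i|+1}\bigr)c|X^c_i| + |X^1_i| + 1 - |X^c_i| \le 2.5\,v_i(X_i)$ once $|X^c_i|\ge 1$. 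Your proposal does not have a replacement for these counting lemmas, and the pieces you do propose do not supply the needed inequalities.
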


\begin{restatable}{theorem}{thmlexmms}\label{thm:leximin-mms-submodular}
Let $c$ be an integer $\ge 2$.
When agents have $\{1, c\}$-SUB valuations, then any leximin allocation $X$ is $\frac1{c+2}$-\MMS. 
\end{restatable}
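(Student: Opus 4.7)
The plan is to bound, for every agent $i$, the quantity $v_i(X_i)$ in a leximin allocation $X$ against $\MMS_i$, by comparing $X$ against a symmetric reference allocation $Y$ tailored to $i$. Fix $i$ and let $Y$ be a leximin allocation in the modified instance in which every agent has the valuation $v_i$. In this symmetric instance the MMS partition of $i$ is a feasible allocation whose bundles all have $v_i$-value at least $\MMS_i$, so the leximin property of $Y$ forces $v_i(Y_j) \ge \MMS_i$ for every $j$. In particular $v_i(Y_i) \ge \MMS_i$; this is the hook that brings $\MMS_i$ into the picture, and it remains to relate $v_i(X_i)$ to $v_i(Y_i)$.

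Next I would decompose both allocations using Lemma \ref{lem:decompose}. Write $X = X^c \cup X^1$ with each $X^c_j$ clean for the corresponding $\beta_j$, and $Y = Y^c \cup Y^1$ with each $Y^c_j$ clean for $\beta_i$ (all agents share valuation $v_i$ in the reference instance). Set $h_X = |X^c_i|$, $l_X = |X^1_i|$, $h_Y = |Y^c_i|$, $l_Y = |Y^1_i|$, so that $v_i(X_i) = c h_X + l_X$ and $v_i(Y_i) = c h_Y + l_Y \ge \MMS_i$. The task then reduces to the counting inequality $c h_X + l_X \ge (c h_Y + l_Y)/(c+2)$.

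To establish the counting inequality, I would argue by contradiction: suppose $v_i(X_i) < \MMS_i/(c+2)$ and exhibit a modification $X'$ of $X$ whose sorted utility vector lexicographically dominates $\vec s^X$, contradicting leximin optimality. The modification is driven by path augmentation on the exchange graph $\cal G(X^c)$. When $h_X < h_Y$, Lemma \ref{lem:augmentation-sufficient} (applied to the clean allocations $X^c$ and $Y^c$ with respect to $\beta_i$) yields a path from $F_i(X^c)$ to some agent $k$'s bundle; transferring a high-valued good along this path raises $v_i$ by $c$ while lowering $v_k$ by at most $c$, and because $v_i(X_i)$ is so small, $v_k(X_k) - c$ still comfortably exceeds $v_i(X_i)$, strictly improving the leximin ordering. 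The denominator $c+2$ is tuned to the worst-case trade between one clean path augmentation (costing an agent value $c$) and the adjustment of at most two supplementary items.

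The main obstacle I anticipate is the case where the clean counts essentially match ($h_X \ge h_Y$) but $l_X$ is much smaller than $l_Y$. Here path augmentation on the clean exchange graph no longer helps directly, and one must argue, using leximin, that a surplus of low-valued items held by strictly higher-utility agents can always be redistributed to $i$. The analogous step for MNW uses the notion of weak dominance to fix a maximizer whose high-valued goods are evenly spread; for leximin, the corresponding careful choice of decomposition, together with casework distinguishing agents $k$ with $v_k(X_k) - c \ge v_i(X_i)$ from those with $v_k(X_k) - 1 \ge v_i(X_i)$, is where I expect the bulk of the technical effort to go.
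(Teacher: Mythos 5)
Your opening move is the same as the paper's: fix $i$, build the reference instance where every agent shares $v_i$, take a leximin allocation $Y$ for it, and use $v_i(Y_j) \ge \MMS_i$ as the hook. (The paper strengthens this slightly by taking the \emph{dominating} leximin $Y$ and focusing on the worst bundle $Y_n$, which yields tighter counting.) The gap appears in the step that is supposed to do the real work. You propose applying \Cref{lem:augmentation-sufficient} to ``the clean allocations $X^c$ and $Y^c$ with respect to $\beta_i$,'' but that lemma requires \emph{both} allocations to be clean for the \emph{same} family $\{\beta_h\}_{h \in N+0}$. Here $Y^c$ is clean for the single function $\beta_i$ on every bundle (all agents in the reference instance have valuation $v_i$), whereas each $X^c_j$ is clean only for the original $\beta_j$; for $j \ne i$ the bundle $X^c_j$ need not be clean for $\beta_i$, so there is no shared exchange graph on which to run the lemma. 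The conclusion you draw — a transfer path from $F_i(X^c)$ to some $X^c_k$ with $|X^c_k| > |Y^c_k|$ — is therefore unsupported, and even if a path existed, your claim that ``$v_k(X_k) - c$ still comfortably exceeds $v_i(X_i)$'' is not guaranteed by $v_i(X_i)$ being small.

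The paper's actual proof sidesteps this entirely by making the key structural property intrinsic to $X$ rather than relational between $X$ and $Y$. It defines $k$-weak dominance (\Cref{def:k-weak-dominance}), which bounds how lopsidedly the goods that $i$ values at $c$ can be held by other agents in $X$, and proves (\Cref{lem:leximin-c-weak-fairness}) that every leximin allocation admits a $c$-weakly dominating representative with the same utility vector. It then compares $X$ to $Y$ purely via \emph{counting}: Observation~\ref{obs:beta-exchange} (both bundles clean for $v_i$, which \emph{is} satisfied when comparing $X^c_i$ to $Y^c_j$) plus pigeonhole gives $|X^c_i| \ge |Y^c_n|/(c+1)$ (\Cref{lem:x-c-i-lowerbound}), and \Cref{lem:leximin-x-1-i-lowerbound,lem:leximin-y-1-n-x-1-i-compare} bound the supplementary items. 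Assembling these yields the $(c+2)$ factor. Your denominator-tuning heuristic (``one clean augmentation plus two supplementary items'') also does not reflect where $c+2$ really comes from: it is the $(c+1)$ from $c$-weak dominance plus one extra $c|X^c_i|$ term from the supplementary bound, not a two-item adjustment. To repair your approach you would need to replace the path-augmentation comparison between $X^c$ and $Y^c$ with the weak-dominance/counting machinery, which is essentially rebuilding the paper's proof.
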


The proofs of these results require a careful counting-based argument to bound the number of high and low valued goods an agent receives. We complement these results with upper bounds on the \MMS guarantee of max Nash welfare and leximin allocations showing that our analysis is tight (Appendix \ref{sec:mms-upper-bounds}). 

The best known $\MMS$-guarantee for submodular valuations is $\frac13$-rd \citep{ghodsi2018fair}. Surprisingly, Theorem \ref{thm:mnw-mms-submodular} shows that the max Nash welfare allocation offers better \MMS guarantees, albeit for a restricted subclass of submodular valuations. 

\subsection{Proof of Theorem \ref{thm:mnw-mms-submodular}}

We present our main technique and proof for Theorem \ref{thm:mnw-mms-submodular} here. The proof for \Cref{thm:leximin-mms-submodular} is relegated to Appendix \ref{sec:leximin-mms}.

Throughout this section, unless otherwise mentioned, we assume each agent $i \in N$ has a bivalued submodular valuation where $c$ is an integer with value at least $2$. The proof is divided into several subsections; we start by establishing general guarantees for arbitrary justice criteria (Sections \ref{sec:leximin-identical}, \ref{sec:k-weak-dominance} and \ref{apdx:g4-def}) and then, we apply these guarantees to the Nash welfare and leximin objectives (Sections \ref{sec:mnw-mms} and \ref{sec:leximin-mms}).

\subsubsection{The Leximin Allocation with Identical Valuations}\label{sec:leximin-identical}
We start by studying leximin allocations where all agents in $N$ have the same valuation function $v_i$ for some $i \in N$. The main reason for this is that any bundle in this allocation gives any agent with valuation function $v_i$ at least their maximin share. In this section, we establish some basic bounds with respect to the dominating leximin allocation.

Formally, let $Y = Y^c \cup Y^1$ be the leximin allocation where all the agents in $N$ have the valuation function $v_i$ for some $i \in N$; if there are multiple such allocations, let $Y$ be the leximin allocation which dominates all other leximin allocations.

We know that, by the definition of a leximin allocation, $Y = Y^c \cup Y^1$, for all $j \in N$, $v_i(Y_j) \ge \MMS_i$. In particular, we focus on $Y_n$. We show that all the bundles in $Y^c$ have roughly the same size with $Y^c_n$ having the smallest size.

\begin{lemma}\label{lem:n-not-so-bad}
$|Y^c_n| \ge |Y^c_j| - 1$ for any $j \in [n-1]$.
\end{lemma}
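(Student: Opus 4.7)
The plan is to proceed by contradiction: suppose $|Y^c_j| \ge |Y^c_n|+2$ for some $j \in [n-1]$. The whole argument reduces to producing a clean allocation $Z^c$, obtained from $Y^c$ by a single path augmentation that shifts one high-valued good from agent $j$ to agent $n$, and then invoking Lemma \ref{lem:y-not-psi-maximizing} with $Z^1 = Y^1$ to contradict the choice of $Y$ as the \emph{dominating} leximin allocation.

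To exhibit the path, I would apply Lemma \ref{lem:augmentation-sufficient} to the pair $(Y^c, Y')$, where $Y'$ is obtained from $Y^c$ by swapping the bundles of agents $j$ and $n$. Because every agent has the same valuation $v_i$ (and hence the same binary submodular function $\beta_i$), $Y'$ is itself clean and $|Y'_n| = |Y^c_j| > |Y^c_n|$. Lemma \ref{lem:augmentation-sufficient} then guarantees a path in $\cal G(Y^c)$ from $F_n(Y^c)$ to some $Y^c_k$ with $|Y^c_k| > |Y'_k|$; inspecting $Y'$ shows that the only such $k$ is $j$. Feeding the shortest such path into Lemma \ref{lem:path-augmentation} produces a clean $Z^c$ with $|Z^c_n| = |Y^c_n|+1$, $|Z^c_j| = |Y^c_j|-1$, and $|Z^c_h| = |Y^c_h|$ for every other $h \in N+0$.

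Setting $Z^1 = Y^1$, I would then verify the hypotheses of Lemma \ref{lem:y-not-psi-maximizing}, with the roles of its losing index $i$ and gaining index $j$ played by our $j$ and $n$ respectively. Conditions (a)--(d) are immediate from the construction. For (e) we need $|Y^c_n| \le |Z^c_j| = |Y^c_j|-1$, which holds strictly since $|Y^c_j| \ge |Y^c_n|+2$; the tiebreaking sub-clause is therefore vacuous. Lemma \ref{lem:y-not-psi-maximizing} then supplies an allocation $\hat Z$ with either $\hat Z \succ_{\lexmin} Y$ or $\hat Z =_{\lexmin} Y$ and $\hat Z$ dominating $Y$. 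The first option contradicts that $Y$ is leximin, and the second contradicts that $Y$ is the \emph{dominating} leximin allocation.

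The step I expect to be most delicate is the interaction between the path augmentation and the supplementary allocation $Y^1$: naively reusing $Y^1$ can produce collisions if a good in some $Y^1_h$ gets promoted into $Z^c_{h'}$ for $h' \ne h$. Fortunately, exactly this bookkeeping is already handled inside the proof of Lemma \ref{lem:y-not-psi-maximizing} --- its case analysis on $|Y^1_n|$ versus $c$ either reshuffles low-value goods among agents or directly exhibits a symmetric-Pareto-improving alternative --- so the remaining task is only to check the five size conditions, which the path augmentation makes routine.
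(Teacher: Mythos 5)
Your structural plan --- bump $|Y^c_n|$ by one at the expense of $|Y^c_j|$ and invoke Lemma \ref{lem:y-not-psi-maximizing} with $Z^1 = Y^1$ --- is sound, and the role-mapping and verification of conditions (a)--(e) are correct. The gap is in your last paragraph: the claim that the collisions between the augmented $Z^c$ and the unchanged $Y^1$ are ``already handled inside the proof of Lemma \ref{lem:y-not-psi-maximizing}'' is false. That lemma's hypotheses implicitly require the pair $(Z^c, Z^1)$ to combine into a genuine allocation (each good in at most one bundle, with $Z^c_h \cap Z^1_h = \emptyset$); its internal estimates such as $v_h(Z_h) \ge c|Z^c_h| + |Z^1_h|$ depend on exactly this. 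If the shortest path from $F_n(Y^c)$ to $Y^c_j$ visits $Y^c_0$, augmentation can promote a good $g \in Y^1_h \subseteq Y^c_0$ into $Z^c_{h'}$ with $h' \ne h$, making $Z^c$ and $Y^1$ incompatible and the lemma inapplicable. The paper itself controls for precisely this in the proof of Lemma \ref{lem:xc-less-yc-for-agents-not-in-p}, where it explicitly case-splits on whether a path from $F_j(Y^c)$ to $Y_0$ exists; your argument needs, but omits, the analogous care.

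Fortunately, in this particular lemma no collision can actually arise: Observation \ref{obs:beta-exchange}, applied to the clean bundles $Y^c_n$ and $Y^c_j$ (clean under the \emph{common} valuation $v_i$, with $|Y^c_n| < |Y^c_j|$), already yields a single good $g^* \in Y^c_j \setminus Y^c_n$ with $\Delta_{v_i}(Y^c_n, g^*) = c$, so $g^* \in F_n(Y^c) \cap Y^c_j$ and the shortest path is the trivial one-node path $(g^*)$. Augmentation is then just ``move $g^*$ from $Y^c_j$ to $Y^c_n$'', which never touches $Y^c_0$. But once you observe that, the detour through $Y'$, Lemma \ref{lem:augmentation-sufficient}, and Lemma \ref{lem:path-augmentation} collapses: the paper performs this single-good move directly and then case-splits on $|Y^1_n| < c$ versus $|Y^1_n| \ge c$, effectively inlining the two relevant cases of Lemma \ref{lem:y-not-psi-maximizing}. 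So the proof can be repaired by adding the Observation \ref{obs:beta-exchange} step, at which point it essentially becomes the paper's proof.
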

\begin{proof}
Assume for contradiction that there exists a $j \in [n-1]$ such that $|Y^c_n| < |Y^c_j| - 1$. 

Since $Y^c_j$ and $Y^c_n$ are clean bundles with respect to the valuation function $v_i$, using Observation \ref{obs:beta-exchange}, we can infer that there exists a good $g \in Y^c_j \setminus Y^c_n$ such that $\Delta_{v_i}(Y^c_n, g) = c$.

\metaunderline{\textbf{Case 1: $|Y^1_n| < c$}.}
Create another allocation $\hat Y^c$ starting at $Y^c$ and moving the good $g$ from $Y^c_j$ to $Y^c_n$. 

Let $\hat Y = \hat{Y^c} \cup Y^1$. We have $v_i(Y_n) < v_i(\hat Y_n)$ and $v_i(Y_n) < c|Y^c_n| + c \le c|Y^c_j| - c \le v_i(\hat Y_j)$. This contradicts the fact that $Y$ is leximin since $s^{\hat Y}$ lexicographically dominates $s^{Y}$.

\metaunderline{\textbf{Case 2: $|Y^1_n| \ge c$.}}
In this case, we create $\hat Y^1$ starting at $Y^1$ and moving $c$ goods from $Y^1_n$ to $\hat Y^1_j$. 
Redefine $\hat Y$ as $\hat Y =\hat Y^c \cup \hat Y^1$.
Note that, comparing $Y$ and $\hat Y$, we add and remove a value of $c$ to both $j$ and $n$. This gives us $v_i(\hat Y_j) \ge v_i(Y_j)$ and $v_i(\hat Y_n) = v_i(Y_n)$. If the inequality is strict, we contradict the fact that $Y$ is a leximin allocation. Therefore, equality must hold and $\hat Y$ must be a leximin allocation as well. Furthermore $s^{\hat Y^c} \succ_{lex} s^{Y^c}$, which implies that $\hat Y$ dominates $Y$ --- again contradicting our choice of $Y$.
\end{proof}

\begin{lemma}\label{lem:n-worst-bundle}
For all $j \in [n-1]$, $|Y^c_n| \le |Y^c_j|$ and $v_i(Y_n) \le v_i(Y_j)$ for all $j \in [n-1]$.
\end{lemma}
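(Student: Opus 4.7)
The plan is to prove both inequalities by contradiction. Since all agents share the valuation $v_i$, any permutation of the bundles yields another leximin allocation, so the game is to leverage the three-tier tiebreak in the domination order (sorted clean utility vector, then unsorted clean utility vector, then unsorted total utility vector) to rule out the bad configurations.

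For $|Y^c_n| \le |Y^c_j|$, suppose for contradiction that $|Y^c_n| > |Y^c_j|$ for some $j < n$, and swap bundles $Y_j$ and $Y_n$ to form $\hat Y$. Since all agents have valuation $v_i$, both $\vec s^{\hat Y} = \vec s^Y$ and $\vec s^{\hat Y^c} = \vec s^{Y^c}$, so $\hat Y$ is leximin and clause (a) of domination gives equality. But the first coordinate at which $\vec u^{\hat Y^c}$ and $\vec u^{Y^c}$ differ is position $j$, where the value rises from $|Y^c_j|$ to $|Y^c_n|$; thus clause (b) makes $\hat Y$ dominate $Y$, contradicting the choice of $Y$.

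For $v_i(Y_n) \le v_i(Y_j)$, suppose $v_i(Y_n) > v_i(Y_j)$ for some $j < n$. The first part together with Lemma \ref{lem:n-not-so-bad} leaves only two cases: $|Y^c_n| = |Y^c_j|$ or $|Y^c_n| = |Y^c_j| - 1$. In the first case, the same swap preserves $\vec u^{\hat Y^c}$ pointwise (the swapped values are equal) and then clause (c) fires: $\vec u^{\hat Y}$ is strictly lex larger than $\vec u^Y$ at position $j$ since $v_i(Y_n) > v_i(Y_j)$, a contradiction.

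The delicate subcase is $|Y^c_n| = |Y^c_j| - 1$, and this is the main obstacle, since a single swap alone cannot break domination. Here the strict inequality $v_i(Y_n) > v_i(Y_j)$ forces $|Y^1_n| \ge c + |Y^1_j| + 1 \ge c$. Applying Observation \ref{obs:beta-exchange} to the clean bundles $Y^c_n, Y^c_j$, pick $g \in Y^c_j \setminus Y^c_n$ with $\Delta_{v_i}(Y^c_n, g) = c$, and form $\hat Y$ by moving $g$ from $Y^c_j$ to $Y^c_n$ and simultaneously moving any $c$ low-valued goods from $Y^1_n$ into $Y^1_j$. A calculation using submodularity of $\beta_i$ and Observation \ref{obs:beta-v-relation} yields $\beta_i(\hat Y_n) = |Y^c_n| + 1$ and $v_i(\hat Y_n) = v_i(Y_n)$ exactly, while $v_i(\hat Y_j) \ge v_i(Y_j)$ with equality iff $\beta_i(\hat Y_j) = |Y^c_j| - 1$. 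If the inequality is strict, $\hat Y$ Pareto-dominates $Y$ and strictly improves the leximin objective, contradicting leximin-optimality. Otherwise, $\hat Y$ is still leximin with $|\hat Y^c_n| = |Y^c_j|$ and $|\hat Y^c_j| = |Y^c_n|$; I then swap the bundles at positions $n$ and $j$ in $\hat Y$ once more to obtain $\hat{\hat Y}$, whose clean utility vector coincides \emph{pointwise} with $\vec u^{Y^c}$ but whose total utility vector is strictly lex larger than $\vec u^Y$ at position $j$. Clause (c) then yields the contradiction.
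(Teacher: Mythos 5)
Your proof is correct, and it follows the same swap-based strategy as the paper: exchange bundles (or goods) between positions $j$ and $n$, preserve the leximin objective, and extract a contradiction from the three-tier domination tiebreak. The paper dispatches the second inequality in one sentence ("swapping the two bundles will create a leximin allocation which dominates $Y$"), which is sufficient when $|Y^c_n| = |Y^c_j|$ (clause (c) fires), but does not immediately go through when $|Y^c_n| = |Y^c_j| - 1$: there, swapping $Y_j$ and $Y_n$ \emph{decreases} coordinate $j$ of $\vec u^{\hat Y^c}$, so clause (b) actually puts $Y$ above $\hat Y$, not the reverse. You correctly identify this as the delicate subcase and close it by combining a clean-good transfer (via Observation \ref{obs:beta-exchange}) with a bulk transfer of $c$ supplementary goods, deriving either a strict Pareto improvement (contradicting leximin) or a second swap that wins on clause (c). So your write-up is strictly more careful than the paper's. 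One remark: in that delicate subcase the strict inequality $v_i(Y_n) > v_i(Y_j)$ already gives $|Y^1_n| \ge |Y^1_j| + c + 1$, and simply moving $c$ supplementary goods from $Y_n$ to $Y_j$ (without touching $Y^c$) gives $v_i(\hat Y_n) = v_i(Y_n) - c > v_i(Y_j)$ and $v_i(\hat Y_j) \ge v_i(Y_j) + c$, which strictly improves the sorted utility vector and already contradicts leximin-optimality — so the $|Y^c_n| = |Y^c_j| - 1$ case cannot arise at all, and the domination tiebreak is not even needed there. Your longer route is nonetheless valid.
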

\begin{proof}
Assume for contradiction that there exists a $j \in [n-1]$ such that $|Y^c_n| > |Y^c_j|$. 

Create $\hat Y$ starting at $Y$ and swapping $Y_j$ and $Y_n$. This allocation is still leximin for the instance where all the agents have the valuation $v_i$ and dominates $Y$ (since $j < n$) --- a contradiction of our choice of $Y$.
Similarly, if $v_i(Y_n) > v_i(Y_j)$, swapping the two bundles will create a leximin allocation which dominates $Y$.
\end{proof}

From Lemmas \ref{lem:n-worst-bundle} and \ref{lem:n-not-so-bad}, we infer that for each agent $j \in N$, either $|Y^c_n| = |Y^c_j|$ or $|Y^c_n| + 1 = |Y^c_j|$. We define $|N'|$ as the set of agents $j \in N$ such that $|Y^c_n| = |Y^c_j| + 1$. We use $|N'|$ to upper bound the size of $Y^1_n$.

\begin{lemma}\label{lem:y-1-n-upperbound}
$|Y^1_n| \le c + \max \left \{\frac{m - n|Y^c_n| - |N'| - c(n - |N'|)}{n}, 0 \right\}$.
\end{lemma}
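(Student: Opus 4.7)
The plan is to combine two facts: (i) because $Y$ is leximin and agent $n$ realizes the minimum utility (by Lemma~\ref{lem:n-worst-bundle}), every other agent's bundle must be at least as valuable as $Y_n$, which constrains $|Y^1_j|$ from below in terms of $|Y^1_n|$; and (ii) because $Y^1$ is a supplementary allocation, its total size is bounded above by $|Y^c_0|=m-\sum_{j\in N}|Y^c_j|$.

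First, by Lemmas~\ref{lem:n-not-so-bad} and \ref{lem:n-worst-bundle}, each agent $j\in N$ satisfies either $|Y^c_j|=|Y^c_n|$ (equivalently $j\in N\setminus N'$, and note $n$ itself lies here) or $|Y^c_j|=|Y^c_n|+1$ (equivalently $j\in N'$). Writing $v_i(Y_j)=c|Y^c_j|+|Y^1_j|$ and invoking $v_i(Y_j)\ge v_i(Y_n)$ from Lemma~\ref{lem:n-worst-bundle}, I obtain $|Y^1_j|\ge |Y^1_n|$ for every $j\in N\setminus N'$, and $|Y^1_j|\ge \max\{|Y^1_n|-c,\,0\}$ for every $j\in N'$ (using $|Y^1_j|\ge 0$).

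Next I use the decomposition from Lemma~\ref{lem:decompose}: since $Y^1_j\subseteq Y_j\setminus Y^c_j\subseteq Y^c_0$ and the $Y^1_j$ are pairwise disjoint,
\[
\sum_{j\in N}|Y^1_j|\;\le\;|Y^c_0|\;=\;m-\sum_{j\in N}|Y^c_j|\;=\;m-n|Y^c_n|-|N'|.
\]
Summing the per-agent lower bounds yields
\[
(n-|N'|)\,|Y^1_n|\;+\;|N'|\,\max\{|Y^1_n|-c,\,0\}\;\le\;m-n|Y^c_n|-|N'|.
\]

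Finally I case-split on $|Y^1_n|$. If $|Y^1_n|\le c$, the claimed bound holds trivially since its right-hand side is at least $c$. Otherwise the max resolves to $|Y^1_n|-c$, and the inequality above becomes $n|Y^1_n|-c|N'|\le m-n|Y^c_n|-|N'|$; solving for $|Y^1_n|$ gives
\[
|Y^1_n|\;\le\;\tfrac{m-n|Y^c_n|-|N'|+c|N'|}{n}\;=\;c+\tfrac{m-n|Y^c_n|-|N'|-c(n-|N'|)}{n},
\]
as desired. I expect no real obstacle here: the leximin comparison with $Y_n$ is immediate from Lemma~\ref{lem:n-worst-bundle}, and the only subtle step is the inclusion $Y^1_j\subseteq Y^c_0$, which is a direct consequence of the construction in Lemma~\ref{lem:decompose}; the rest is arithmetic.
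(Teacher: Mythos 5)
Your proof is correct and follows essentially the same approach as the paper: both use Lemmas~\ref{lem:n-not-so-bad} and \ref{lem:n-worst-bundle} to lower-bound $|Y^1_j|$ in terms of $|Y^1_n|$ for agents in $N'$ and $N\setminus N'$, then count goods against $m$. The only cosmetic difference is that you argue directly and solve for $|Y^1_n|$, whereas the paper phrases the same arithmetic as a proof by contradiction.
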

\begin{proof}
Assume for contradiction that $|Y^1_n| > c + \max \left \{\frac{m - n|Y^c_n| - |N'| - c(n -|N'|)}{n}, 0 \right \}$. 

For each agent $j \in N \setminus N'$, we have from Lemma \ref{lem:n-worst-bundle}, $|Y^1_j| \ge |Y^1_n|$.
For all the other agents $j \in N'$, we have $|Y^c_j| = |Y^c_n| + 1$ (Lemma \ref{lem:n-not-so-bad}). Using Lemma \ref{lem:n-worst-bundle}, we get $|Y^1_j| \ge |Y^1_n| - c$.

This gives us
\begin{align*}
    m &\ge \sum_{j \in N} |Y^c_j| + |Y^1_j| \\
    &\ge \left (\sum_{j \in N \setminus N'} |Y^c_n| + |Y^1_n| \right ) + \left (\sum_{j \in N'} |Y^c_n| + 1 + |Y^1_n| - c \right ) \\
    &\ge n(|Y^c_n| + |Y^1_n|) + |N'| - c|N'| \\
    &> n|Y^c_n| + nc + \max \left \{m - n|Y^c_n| - |N'| - c(n - |N'|), 0 \right \} + |N'| - c|N'| \\
    &\ge n|Y^c_n| + nc + \left (m - n|Y^c_n| - |N'| - c(n - |N'|)\right ) + |N'| - c|N'| \\
    &= m,
\end{align*}
which is a contradiction. 
\end{proof}

We also prove a simple but useful lower bound on the sizes of each $Y_j$.
\begin{lemma}\label{lem:y-1-n-lowerbound}
If $|Y^1_n| > 0$,  then for all $j \in N$, $|Y_j| \ge |Y^c_n| + 1$. 
\end{lemma}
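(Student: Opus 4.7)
My plan is a short case analysis using the structural bounds already established in Lemmas \ref{lem:n-not-so-bad} and \ref{lem:n-worst-bundle}, which together force $|Y^c_j| \in \{|Y^c_n|,\, |Y^c_n|+1\}$ for every $j \in [n-1]$.

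First I would dispatch the case $j = n$ immediately: since $|Y^1_n| \ge 1$ by hypothesis, $|Y_n| = |Y^c_n| + |Y^1_n| \ge |Y^c_n| + 1$. Next, for $j \in [n-1]$ I split on the value of $|Y^c_j|$. If $|Y^c_j| = |Y^c_n| + 1$ (i.e.\ $j \in N'$), then $|Y_j| \ge |Y^c_j| = |Y^c_n| + 1$ and we are done regardless of $|Y^1_j|$.

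The remaining case is $|Y^c_j| = |Y^c_n|$ (i.e.\ $j \in N \setminus N'$). Here I would invoke Lemma \ref{lem:n-worst-bundle}, which gives $v_i(Y_n) \le v_i(Y_j)$. Using the value decomposition from Proposition \ref{prop:x-reduction} (i.e.\ $v_i(Y_j) = c|Y^c_j| + |Y^1_j|$), the inequality becomes $c|Y^c_n| + |Y^1_n| \le c|Y^c_j| + |Y^1_j| = c|Y^c_n| + |Y^1_j|$, so $|Y^1_j| \ge |Y^1_n| \ge 1$. Hence $|Y_j| = |Y^c_j| + |Y^1_j| \ge |Y^c_n| + 1$.

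I do not anticipate any real obstacle: the only subtlety is ensuring the case split on $|Y^c_j|$ is exhaustive, which is exactly what Lemmas \ref{lem:n-not-so-bad} and \ref{lem:n-worst-bundle} guarantee, and that we may use $v_i$ (not a different valuation) for $Y_j$ since every agent in the instance defining $Y$ has valuation $v_i$.
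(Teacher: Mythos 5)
Your proof is correct and follows essentially the same route as the paper: reduce to the case $|Y^c_j| = |Y^c_n|$ via Lemma \ref{lem:n-worst-bundle}, then use $v_i(Y_j) \ge v_i(Y_n)$ together with the value decomposition $v_i(Y_j) = c|Y^c_j| + |Y^1_j|$ to force $|Y^1_j| \ge |Y^1_n| > 0$. One small note: the decomposition identity is better attributed to Observation \ref{obs:beta-v-relation} combined with property (d) of Lemma \ref{lem:decompose} rather than Proposition \ref{prop:x-reduction}, which is stated specifically for the allocation maintained inside Bivalued Yankee Swap, not for an arbitrary decomposition of $Y$.
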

\begin{proof}
Fix any $j \in N$. If $|Y^c_j| > |Y^c_n|$, we are done. Otherwise, from Lemma \ref{lem:n-worst-bundle},  $|Y^c_j| = |Y^c_n|$. Using the fact that $v_i(Y_j) \ge v_i(Y_n)$ (Lemma \ref{lem:n-worst-bundle}), we can infer that $|Y^1_j| > 0$ which proves the lemma. 
\end{proof}

\subsubsection{$k$-Weak Dominance}\label{sec:k-weak-dominance}
Ideally, we would like to compare any $\Psi$ maximizing allocation with $Y$ to prove $\MMS$ guarantees for the $\Psi$ maximizing allocation.
However, it is challenging to bound the number of high valued goods an agent receives in any arbitrary $\Psi$ maximizing allocation. Therefore, we first develop a notion of {\em weak dominance} which allows us to make some simplifying assumptions.

\begin{definition}\label{def:k-weak-dominance}
    For a positive integer $k$, an allocation $X = X^c \cup X^1$ is {\em $k$-weakly dominating} if for all agents $i, j \in N$, we have:
    \begin{description}
        \item[(WD1)] If $|X^c_j| \ge k|X^c_i| + 2$, then there is no good $g \in X^c_j$ such that $\Delta_{v_i}(X^c_i, g) = c$.
        \item[(WD2)] If $|X_j| \ge k|X^c_i| + 2$, then there is no good $g \in X^1_j$ such that $\Delta_{v_i}(X^c_i, g) = c$.
    \end{description}
\end{definition}

When an allocation $X$ is $k$-weakly dominating and complete (no good is unallocated), then we obtain the following useful inequality. 

\begin{lemma}\label{lem:x-c-i-lowerbound}
Fix an agent $i \in N$ and construct $Y$ as a dominating leximin allocation where all the agents have the valuation function $v_i$. For any $k$-weakly dominating and complete allocation $X = X^c \cup X^1$, $|X^c_i| \ge \frac{|Y^c_n|}{k+1}$. Further, if $0 < |X^c_i| = \frac{|Y^c_n|}{k+1}$, then $|X^1_i| \ge |Y^1_n|$.
\end{lemma}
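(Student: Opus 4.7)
The plan is to prove both parts by contradiction using a matroid-exchange counting argument on the set $F_i(X^c)$ of goods $g \notin X^c_i$ satisfying $\Delta_{\beta_i}(X^c_i, g) = 1$ (equivalently $\Delta_{v_i}(X^c_i, g) = c$). The first step sandwiches $|F_i(X^c)|$ between a lower bound coming from the matroid $\beta_i$ and an upper bound coming from $k$-weak dominance.

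For the first part, set $T = |X^c_i|$ and suppose $T(k+1) < |Y^c_n|$, so $|Y^c_n| - T \geq kT + 1$. To lower-bound $|F_i(X^c)|$, compare $X^c_i$ against each clean bundle $Y^c_j$. Since $Y^c_j$ is $\beta_i$-clean and $|Y^c_j| \geq |Y^c_n|$, monotonicity gives $\beta_i(X^c_i \cup Y^c_j) \geq |Y^c_j|$, so the marginal gain of adding $Y^c_j$ to $X^c_i$ is at least $|Y^c_j| - T$. Binary marginality of $\beta_i$ (Lemma \ref{lem:beta-mrf}) plus submodularity then forces at least $|Y^c_j| - T$ individual elements of $Y^c_j$ to lie in $F_i(X^c)$, and since the $Y^c_j$ are disjoint, summing over $j$ yields $|F_i(X^c)| \geq n(|Y^c_n| - T) \geq n(kT+1)$.

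For the upper bound, first note $F_i(X^c) \cap X_i = \emptyset$: goods in $X^c_i$ have marginal $0$ by cleanness, and goods in $X^1_i$ have marginal $1$ by the decomposition property from Lemma \ref{lem:decompose} (otherwise $X^c_i$ could be enlarged). For each $j \neq i$, a three-case split bounds $|F_i(X^c) \cap X_j| \leq kT + 1$: if $|X^c_j| \geq kT+2$, then also $|X_j| \geq kT+2$, so both (WD1) and (WD2) kill the intersection; if $|X^c_j| \leq kT+1$ but $|X_j| \geq kT+2$, (WD2) eliminates the $X^1_j$ contribution and leaves at most $|X^c_j| \leq kT+1$; otherwise $|X_j| \leq kT+1$ bounds things trivially. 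Summing over $n-1$ agents gives $|F_i(X^c)| \leq (n-1)(kT+1)$, which combined with the lower bound forces $n \leq n-1$, a contradiction.

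For the second part, assume $T = |Y^c_n|/(k+1) \geq 1$ and, toward contradiction, $|X^1_i| < |Y^1_n|$. Then $|Y^1_n| \geq 1$, so Lemma \ref{lem:y-1-n-lowerbound} yields $|Y_j| \geq (k+1)T + 1$ for all $j$, hence $m \geq n((k+1)T+1)$. In this boundary regime the sandwich of the first part becomes tight: $|F_i(X^c)| \geq nkT$ and $|F_i(X^c)| \leq (n-1)(kT+1)$ together force at least $kT$ of the agents $j \neq i$ to saturate $|F_i(X^c) \cap X_j| = kT+1$, which by the case analysis means each such $j$ either has $|X^c_j| = kT+1$ with $|X_j| \geq kT+2$, or $|X_j| = kT+1$ with $X_j \subseteq F_i(X^c)$. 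Combining this rigidity with the pigeonhole identity $\sum_{j \neq i} |X_j| = m - T - |X^1_i| > m - T - |Y^1_n|$ and a refined lower bound $|F_i(X^c)| \geq nkT + |N'|$, where $|N'|$ counts $j$ with $|Y^c_j| = |Y^c_n|+1$, should force some $j \neq i$ to hold a bundle large enough to violate (WD1) or (WD2) while still saturating its intersection with $F_i(X^c)$, contradicting $k$-weak dominance.

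The main obstacle is the second part. The first-part sandwich only requires $kT \leq n-1$ at equality, so it does not immediately contradict $|X^1_i| < |Y^1_n|$; extracting a contradiction requires simultaneously tracking the refined lower bound driven by $|N'|$, the forced saturation pattern of the intersections, and the pigeonhole consequence of the low-value deficit on $\sum_{j \neq i} |X_j|$. The bookkeeping is delicate because the upper bound on $|Y^1_n|$ from Lemma \ref{lem:y-1-n-upperbound} is itself expressed in terms of $m$, $|Y^c_n|$, and $|N'|$, so the high-value and low-value ledgers cannot be closed independently and must be reconciled in a single inequality chain.
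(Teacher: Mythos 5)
Your Part~1 argument is correct and is essentially the paper's own proof, just written out as an explicit two-sided sandwich on $\lvert F_i(X^c)\rvert$: the matroid-exchange lower bound $\lvert F_i(X^c)\rvert \ge n(\lvert Y^c_n\rvert - T) \ge n(kT+1)$, the observation that $F_i(X^c)\cap X_i=\emptyset$, and the case-split upper bound $\lvert F_i(X^c)\cap X_j\rvert \le kT+1$ from (WD1)/(WD2) for each $j\neq i$, giving $n(kT+1)\le (n-1)(kT+1)$.

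Part~2 is where your proposal stops being a proof: you explicitly say the saturation-plus-refined-lower-bound bookkeeping ``should force'' a violation and that reconciling the ledgers is delicate, so the argument is acknowledged as incomplete. It is also aimed in a different direction than the paper's. The paper does not try to tighten the $\lvert F_i(X^c)\rvert$ count into a contradiction (as you note, $nkT+\lvert N'\rvert \le (n-1)(kT+1)$ only yields the constraint $kT+\lvert N'\rvert \le n-1$); instead it establishes two separate facts and combines them. First, by a pigeonhole on $F_i(X^c)$, every $j\in N-i$ holds at least $kT>0$ items of $F_i(X^c)$ in $X_j$. Second, since $\lvert X^1_i\rvert < \lvert Y^1_n\rvert$ forces $\lvert Y^1_n\rvert\ge 1$, Lemma~\ref{lem:y-1-n-lowerbound} gives $\lvert Y_j\rvert \ge \lvert Y^c_n\rvert+1$ for all $j$, hence $m \ge n\lvert Y^c_n\rvert + n + \lvert X^1_i\rvert$; combining this with completeness (Lemma~\ref{lem:weak-x-complete}) and $\lvert X_i\rvert = T + \lvert X^1_i\rvert$ pigeonholes $\sum_{j\neq i}\lvert X_j\rvert$ past $(n-1)(kT+1)$, so some $j\neq i$ has $\lvert X_j\rvert\ge kT+2$. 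That agent then holds an $F_i(X^c)$ good inside a bundle of size $\ge kT+2$, violating $k$-weak dominance. In short: you had the right ingredients (Lemma~\ref{lem:y-1-n-lowerbound}, the $\lvert N'\rvert$-refined count, the saturation structure) but you need to route the contradiction through a lower bound on $m$ and a pigeonhole on $\sum_{j\neq i}\lvert X_j\rvert$, not through the $\lvert F_i(X^c)\rvert$ ledger alone.
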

\begin{proof}
Assume for contradiction that $|X^c_i| < \frac{|Y^c_n|}{k+1}$. 

Using Lemma \ref{lem:n-worst-bundle}, we get that $|Y^c_n| \le |Y^c_j|$ for any $j \in [n-1]$. Therefore, $|X^c_i| < \frac{|Y^c_j|}{k+1}$ for any $j \in [n]$. By Observation \ref{obs:beta-exchange}, there are at least $k|X^c_i| + 1$ goods in each $|Y^c_j|$ for each $j \in [n]$ which have a marginal gain of $c$ when added to $X^c_i$. Using a pigeonhole argument, we get that at least one agent $k \in N - i$ has $k|X^c_i| + 2$ goods (in the allocation $X$) that have a marginal gain of $c$ when added to $X^c_i$. However, this contradicts the $k$-weak dominance of $X$. 

To prove the second part, assume $(k+1)|X^c_i| =|Y^c_n|$ and $|X^1_i| < |Y^1_n|$. Using a pigeonhole argument similar to the previous paragraph, we can show that each agent $j \in N - i$ has at least $k|X^c_i| > 0$ items in the allocation $X$ that give $i$ a marginal value of $c$. Using Lemma \ref{lem:y-1-n-lowerbound}, we get that 

\begin{align*}
    m = \sum_{h \in N} |Y^c_h| + |Y^1_h| \ge n|Y^c_n| + n + |X^1_i| \ge k|X^c_i| + |X^1_i| + (n-1)k|X^c_i| + n. 
\end{align*}
Using the pigeonhole principle, this implies that at least one agent $j \in N$ has $|X_j| \ge k|X^c_i| + 2$ items. Since we showed that all agents have at least one item (in $X$) that gives $i$ a marginal value of $c$, this violates $k$-weak dominance.
\end{proof}

\subsubsection{(Almost) Any $\Psi$-Maximizing Allocation}\label{apdx:g4-def}
We now turn to establishing some general inequalities that any $\Psi$-maximizing allocation is guaranteed to satisfy for any $\Psi$ that satisfies (C1) and admits a gain function $\phi$ that satisfies (G1)--(G3) (see Section \ref{sec:bi-ys-sufficient}). We impose one additional constraint on the gain function $\phi$ that significantly strengthens our results. 
\begin{description}
    \item[(G4)] For all $\vec x, \vec y \in \R^n$, if for any $i, j\in N$, $x_i > c y_j$, then $\phi(\vec x, i, c) < \phi(\vec y, j, 1)$.
\end{description}

For all of the following Lemmas, we assume $X = X^c \cup X^1$ is a $k$-weakly dominating, $\Psi$ maximizing allocation for some $k$ where $\Psi$ satisfies (C1) and admits a gain function $\phi$ that satisfies (G1)--(G4). We also fix an agent $i \in N$ and construct $Y = Y^c \cup Y^1$ as the dominating leximin allocation for the instance where all agents have the valuation function $v_i$.

Our first lemma shows that $X$ is complete; that is, $X$ allocates all the goods.

\begin{lemma}\label{lem:weak-x-complete}
$\sum_{j \in N} |X_j| = m$.
\end{lemma}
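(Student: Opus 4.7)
The plan is to prove this lemma by contradiction using symmetric Pareto dominance (C1); notably, the $k$-weak dominance hypothesis is not actually needed for this particular statement—only $\Psi$-maximality matters. Suppose for contradiction that $\sum_{j \in N} |X_j| < m$. Then there is some good $g \in G \setminus \bigcup_{j \in N} X_j$ that remains unallocated under $X$.

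Next I would exploit the fact that every marginal gain under a $(1,c)$-bivalued submodular valuation lies in $\{1,c\}$, hence is strictly positive. Pick any agent, say agent $1$, and define a new allocation $\hat X$ by setting $\hat X_1 = X_1 + g$ and $\hat X_j = X_j$ for all $j \ne 1$. Then $v_1(\hat X_1) = v_1(X_1) + \Delta_1(X_1, g) \ge v_1(X_1) + 1 > v_1(X_1)$, while $v_j(\hat X_j) = v_j(X_j)$ for $j \ne 1$. In particular, the utility vector $\vec u^{\hat X}$ coordinate-wise weakly dominates $\vec u^X$ with a strict inequality at coordinate $1$.

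Then I would appeal directly to (C1). Since the sorted utility vector $s(\vec u^{\hat X})$ coordinate-wise weakly dominates $s(\vec u^X)$ and the two sorted vectors differ (the multiset of utilities strictly increases in value at one coordinate), (C1) yields $\hat X \succ_\Psi X$. This contradicts the assumption that $X$ is $\Psi$-maximizing, completing the proof.

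The argument is essentially immediate; there is no real obstacle, since the only ingredients are (i) that marginal gains are bounded below by $1$, guaranteed by the definition of $\{1,c\}$-SUB valuations, and (ii) the symmetric Pareto dominance property (C1). The $k$-weak dominance condition does not enter—it will presumably be used in subsequent lemmas of the section rather than here.
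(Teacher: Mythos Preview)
Your proof is correct and follows essentially the same approach as the paper: assume an unallocated good exists, give it to some agent to strictly increase their utility (since marginal gains are at least $1$), and invoke (C1) to contradict $\Psi$-maximality. Your observation that $k$-weak dominance is unnecessary here is also accurate.
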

\begin{proof}
If there is some unallocated good, allocating it arbitrarily to any agent will improve that agent's utility by at least $1$ and Pareto dominate $X$ --- contradicting the fact that $X$ is $\Psi$ maximizing (using (C1)).
\end{proof}

We define $M \subseteq N - i$ as the set of the agents who have goods $g \in G$ under the allocation $X$ such that $\Delta_i(X^c_i, g) = c$. Recall that $|N'|$ consists of the set of agents $j \in [n-1]$ such that $|Y^c_j| = |Y^c_n| + 1$.

\begin{lemma}\label{lem:m-lowerbound}
$|M| \ge \frac{n(|Y^c_n| - |X^c_i|) + |N'|}{k|X^c_i| + 1}$. 
\end{lemma}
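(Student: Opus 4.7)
The plan is a double-counting argument on the set $F \coloneqq \{g \in G : \Delta_{v_i}(X^c_i, g) = c\}$ of goods ``good for $i$'' relative to $X^c_i$. I will lower-bound $|F|$ using the bundles of $Y$, and upper-bound $|F|$ by distributing its elements across the agents of $M$ under $X$ via the $k$-weak-dominance conditions.

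For the lower bound, the first step is a mild strengthening of Observation \ref{obs:beta-exchange}: if $S$ and $T$ are clean with respect to $v_i$ and $|T|>|S|$, then at least $|T|-|S|$ distinct elements $g \in T \setminus S$ satisfy $\Delta_{v_i}(S,g) = c$. This follows by iterating Observation \ref{obs:beta-exchange}: each successive augmentation keeps the current bundle clean, and submodularity of $v_i$ pegs every marginal back to $S$. Applied with $S = X^c_i$ and $T = Y^c_j$---which is clean since every agent in the $Y$-instance shares valuation $v_i$---each $Y^c_j$ contributes at least $|Y^c_j| - |X^c_i|$ elements to $F$, and these contributions are pairwise disjoint across $j \in [n]$ because $Y^c$ is an allocation. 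Substituting $|Y^c_j| \in \{|Y^c_n|, |Y^c_n|+1\}$ via Lemmas \ref{lem:n-not-so-bad} and \ref{lem:n-worst-bundle} (with the larger value occurring exactly for $j \in N'$), the total comes to $|F| \ge n(|Y^c_n|-|X^c_i|) + |N'|$. I may assume $|Y^c_n| \ge |X^c_i|$ here, since otherwise the right-hand side of the lemma is non-positive and the claim is vacuous.

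For the upper bound, I would observe that $F$ is entirely contained in $\bigcup_{j \in M} X_j$: completeness of $X$ (Lemma \ref{lem:weak-x-complete}) accounts for every good; goods in $X^c_i$ lie outside $F$ by the definition of $\Delta_{v_i}(X^c_i, \cdot)$; and goods in $X^1_i$ satisfy $\Delta_{v_i}(X^c_i, g) = 1 \ne c$ by Proposition \ref{prop:x-reduction}. For each $j \in M$, (WD1) forces $|X^c_j| \le k|X^c_i|+1$ whenever $F$ meets $X^c_j$, and (WD2) forces $|X_j| \le k|X^c_i|+1$ whenever $F$ meets $X^1_j$; either way $|F \cap X_j| \le k|X^c_i|+1$. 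Summing over $j \in M$ and combining with the lower bound yields $|M|(k|X^c_i|+1) \ge n(|Y^c_n|-|X^c_i|) + |N'|$, which rearranges to the lemma. The main obstacle is the iterated-exchange strengthening of Observation \ref{obs:beta-exchange} and making sure the elements produced genuinely land in $F$ rather than merely in a chain of augmented clean sets; everything else is careful but routine pigeonholing.
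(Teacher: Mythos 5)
Your proof is correct and mirrors the paper's argument: both double-count the set of goods $g$ with $\Delta_{v_i}(X^c_i,g)=c$, lower-bounding it to $n(|Y^c_n|-|X^c_i|)+|N'|$ via the bundles $Y^c_j$ and upper-bounding each agent's share of it by $k|X^c_i|+1$ via $k$-weak dominance, with your iterated-exchange lemma and (WD1)/(WD2) case split merely making explicit what the paper states tersely. One small slip: $\Delta_{v_i}(X^c_i,g)=1$ for $g\in X^1_i$ does not come from Proposition~\ref{prop:x-reduction} (which concerns the algorithm's maintained state), but from the decomposition condition $|X^c_i|=\beta_i(X_i)$ in Lemma~\ref{lem:decompose}: a good in $X^1_i$ of marginal value $c$ to $X^c_i$ would yield a larger clean subset of $X_i$, contradicting maximality of $\beta_i(X_i)$.
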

\begin{proof}
If $|Y^c_n| < |X^c_i|$, $\frac{n(|Y^c_n| - |X^c_i|) +  |N'|}{k|X^c_i| + 1}$ is negative and the inequality trivially holds. So we assume $|Y^c_n| \ge |X^c_i|$.

For each $j \in N \setminus N'$, $|Y^c_j| = |Y^c_n|$. Therefore, there are $|Y^c_j| - |X^c_i|$ items in $Y^c_j$ that give $i$ a marginal value of $c$. Similarly, for each $j \in N'$, $|Y^c_j| = |Y^c_n| + 1$. Therefore, there are $|Y^c_j| - |X^c_i| + 1$ items in $Y^c_j$ that give $i$ a marginal value of $c$.

Summing over all agents, there are a total of $n(|Y^c_n| - |X^c_i|) + |N'|$ items in $G$ that give $i$ a marginal value of $c$. From the $k$-weak dominance of $X$, the agents in $M$ who are allocated these items cannot have more than $k|X^c_i| + 1$ items. This gives us the inequality. 
\end{proof}

Our next Lemma uses (G4) to lower-bound $|X^1_i|$.

\begin{lemma}\label{lem:x-1-i-lowerbound}
$|X^1_i| > \frac{m- |M|(k|X^c_i| + 1) - |X^c_i| - (n- |M| -1)|X^c_i|c}{n} - 1$.
\end{lemma}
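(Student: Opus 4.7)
The proof is a counting argument by contradiction. Assuming the contrary, I rearrange to a lower bound on $\sum_{j\neq i} |X_j|$; then I produce matching upper bounds on each $|X_j|$ using $k$-weak dominance together with the $\Psi$-optimality of $X$; summing and comparing yields a contradiction. First, assume $|X^1_i|+1 \le \tfrac{m-|M|(k|X^c_i|+1)-|X^c_i|-(n-|M|-1)|X^c_i|c}{n}$. Since $X$ is complete (Lemma \ref{lem:weak-x-complete}), $m=|X^c_i|+|X^1_i|+\sum_{j\in M}|X_j|+\sum_{j\in N-i-M}|X_j|$, and the contrary assumption rearranges to
\[
\sum_{j\ne i}|X_j|\;\ge\;(n-1)|X^1_i|\;+\;n\;+\;|M|(k|X^c_i|+1)\;+\;(n-|M|-1)|X^c_i|c.
\]

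\textbf{Per-agent upper bounds.} For $j\in N-i-M$, every good in $X_j$ has $v_i$-marginal $1$ w.r.t.\ $X^c_i$. I consider a single-good transfer of $g\in X_j$ into $i$'s bundle. If $|X^1_j|>0$ I pick $g\in X^1_j$: the transfer changes $v_i$ by $+1$ and $v_j$ by $-1$, and applying (G1) at the intermediate utility vector with (G3) (forced by $\Psi$-optimality of $X$) gives $v_j(X_j)\le v_i(X_i)+1$, hence $|X_j|\le v_j(X_j)\le c|X^c_i|+|X^1_i|+1$. If $|X^1_j|=0$, then $X_j=X^c_j$ is clean w.r.t.\ $v_j$, and any transfer of $g\in X^c_j$ shifts $v_j$ by $-c$ and $v_i$ by $+1$; now (G1) together with the contrapositive of (G4) gives $v_j(X_j)\le c\,v_i(X_i)+c$, so $|X_j|=v_j(X_j)/c\le c|X^c_i|+|X^1_i|+1$.

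For $j\in M$, I split on whether the certifying good (the $g\in X_j$ with $\Delta_{v_i}(X^c_i,g)=c$) lies in $X^1_j$ or $X^c_j$. In the first case WD2 directly gives $|X_j|\le k|X^c_i|+1$. In the second, WD1 gives $|X^c_j|\le k|X^c_i|+1$, and it remains to bound $|X^1_j|$. I do so via an analogous transfer of a good $g'\in X^1_j$ (if nonempty) to agent $i$: whenever $|X_j|\ge k|X^c_i|+2$, WD2 forces $\Delta_{v_i}(X^c_i,g')=1$, so the swap changes $v_i$ by $+1$ and $v_j$ by $-1$, and (G1)+(G3) at the intermediate vector yield $v_j(X_j)\le v_i(X_i)+1$, which combined with $|X^c_j|\le k|X^c_i|+1$ gives a bound of the form $|X_j|\le k|X^c_i|+|X^1_i|+2$.

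\textbf{Combining.} Summing the per-agent bounds,
\[
\sum_{j\ne i}|X_j|\;\le\;|M|(k|X^c_i|+1)\;+\;(n-|M|-1)|X^c_i|c\;+\;(n-1)\;+\;(n-1)|X^1_i|,
\]
and since $n-1<n$ this strictly contradicts the previously displayed lower bound, proving the strict inequality claimed in the lemma. The ``$-1$'' in the conclusion arises from this $(n-1)$-vs-$n$ gap carried through division by $n$.

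\textbf{Main obstacle.} The delicate step is the second case of the $j\in M$ bound: WD1 alone says nothing about $|X^1_j|$, so the transfer argument from $X^1_j$ to $i$ must be invoked, which in turn relies on WD2 in its contrapositive form to guarantee that goods in $X^1_j$ give $i$ marginal $1$ (so that the swap is a clean $(+1,-1)$ change that makes (G3) applicable). Verifying that the intermediate utility vector corresponds to a genuine allocation (so that (G1) directly applies) and that (G4) is invoked in the right direction to cover the $|X^1_j|=0$ and ``$g\in X^c_j$'' subcases for $j\in N-i-M$ requires careful bookkeeping of marginals using the cleanness of $X^c$ (Observation \ref{obs:clean-subset}) and submodularity.
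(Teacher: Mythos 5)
Your overall strategy — assume the contrary, rearrange to a lower bound on $\sum_{j \ne i}|X_j|$, then combine per-agent upper bounds — is a genuine departure from the paper's argument. The paper instead deducts $\sum_{j \in M}|X_j|$, lower-bounds the goods held by $N \setminus (M+i)$, and pigeonholes to find a \emph{single} agent $u \in N\setminus(M+i)$ with $|X_u| \ge |X^1_i| + c|X^c_i| + 2$; the contradiction then comes from one transfer (a $(+1,-1)$ swap against (G3) if $|X^1_u|>0$, or a $(+1,-c)$ swap against (G4) if $|X^1_u|=0$). Your per-agent bound for $j \in N\setminus(M+i)$ essentially reproduces that transfer argument and is fine.

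However, your per-agent bound for $j \in M$ in the second subcase (certifying good in $X^c_j$) contains a gap. You have two facts: WD1 gives $|X^c_j| \le k|X^c_i|+1$, and your transfer argument gives $v_j(X_j) \le v_i(X_i)+1$, i.e.\ $c|X^c_j| + |X^1_j| \le c|X^c_i| + |X^1_i| + 1$. From these you conclude $|X_j| \le k|X^c_i| + |X^1_i| + 2$, which amounts to claiming $|X^1_j| \le |X^1_i|+1$. But the inequality $c|X^c_j| + |X^1_j| \le c|X^c_i| + |X^1_i| + 1$ only gives $|X^1_j| \le |X^1_i| + 1 + c(|X^c_i| - |X^c_j|)$; when $|X^c_j| < |X^c_i|$ the right side can be far larger than $|X^1_i|+1$, and WD1 provides no lower bound on $|X^c_j|$. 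Concretely, with $k=1$, $c=10$, $|X^c_i|=5$, $|X^1_i|=0$, $|X^c_j|=1$, $|X^1_j|=41$: WD1 is satisfied ($1 \le 6$), the transfer bound is satisfied ($51 \le 51$), yet $|X_j|=42 \gg 7 = k|X^c_i|+|X^1_i|+2$. Since your final summation uses the claimed bound to carry the $(n-1)(|X^1_i|+1)$ term, this gap propagates: with $k<c$ and enough $M$-agents in this subcase, the summed upper bound need no longer fall below the lower bound, and the contradiction evaporates. The paper's pigeonhole structure is more robust here — it only needs a \emph{coarse} total bound on $\sum_{j\in M}|X_j|$ (enough slack is already built into the displayed inequality via the $+|M||X^1_i|$ term), and the final contradiction is obtained from a single well-chosen agent in $N\setminus(M+i)$, where only the uniform $(+1,-1)$ / $(+1,-c)$ argument you wrote for non-$M$ agents is needed. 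If you want to repair your approach, you would need a tighter handle on $|X^1_j|$ for $j\in M$ (or, as the paper does, shift the burden onto a single pigeonholed agent outside $M$).
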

\begin{proof}
Assume for contradiction that this is not true. 

\begin{align*}
    & |X^1_i| \le  \frac{m- |M|(k|X^c_i| + 1) - |X^c_i| - (n- |M| -1)|X^c_i|c}{n} - 1 \\
    \implies& m - |X_i| \ge 1 + (n - |M| - 1)(|X^1_i| + c|X^c_i| + 1) + |M|(k|X^c_i| + 1) + |M||X^1_i|.
\end{align*}

Since all agents in $M$ can have at most $k|X^c_i| + 1$ items, this implies there are at least $1 + (n- |M| - 1)(|X^1_i| + c|X^c_i| + 1)$ goods allocated to agents in $N \setminus (M + i)$. Using the pigeonhole principle, this implies that at least one agent $u \in N \setminus M$ receives at least $|X^1_i| + c|X^c_i| + 2$ goods. 

If $|X^1_u| > 0$, we create an allocation $Z^1$ starting at $X^1$ and moving a good from $X^1_u$ to $Z^1_i$. 
Let $Z = X^c \cup Z^1$. Let us compare $X$ and $Z$. We have that $v_i(X_i) = |X^1_i| + c|X^c_i| < v_u(Z_u)$. This implies that $\phi(X, i, 1) > \phi(Z, u, 1)$ (G3) which implies $Z \succ_{\Psi} X$ (using Lemma \ref{lem:g5}) --- a contradiction.

If $|X^1_u| = 0$, we create $Z^c$ and $Z^1$ starting at $X^c$ and $X^1$ and moving a good from $X^c_u$ to $Z^1_i$. Let us again compare $X$ and $Z$. We have that $v_u(Z_u) = c[|X^1_i| + c|X^c_i| + 1] > c v_i(X_i)$. This implies that $\phi(X, i, 1) > \phi(Z, u, c)$ (G4) which implies $Z \succ_{\Psi} X$ (using Lemma \ref{lem:g5}) --- a contradiction.
\end{proof}

Our next Lemma examines the case where $|X^c_i| = 0$. 
\begin{lemma}\label{lem:x-c-i-zero}
Fix an agent $i \in N$ and let $Y$ be a dominating leximin allocation for the instance where all the agents in $N$ have the valuation function $v_i$. 
When $|X^c_i| = 0$, $|X^1_i| \ge \min\{c, |Y^1_n|\}$.
\end{lemma}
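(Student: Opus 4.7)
The plan is to argue by contradiction: assume $|X^c_i|=0$ and $|X^1_i|<L:=\min\{c,\,|Y^1_n|\}$, derive an upper bound on $m=|G|$ from $\Psi$-maximality of $X$, a lower bound from leximin-optimality of $Y$, and show the two are inconsistent. Lemma~\ref{lem:x-c-i-lowerbound} immediately forces $|Y^c_n|=0$, so $v_i(Y_n)=|Y^1_n|$ and $|Y^c_j|\in\{0,1\}$ for every $j$ by Lemma~\ref{lem:n-not-so-bad}.

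The key step is to extract two transfer-based constraints on $X$. For each $j\ne i$ with $|X_j|\ge 2$ and $X^1_j\ne\emptyset$, I would pick any $g\in X^1_j$: since $k|X^c_i|+2=2$, (WD2) combined with submodularity gives $\Delta_{v_i}(X_i,g)=1$, so sending $g$ to $i$ produces an allocation $Z$ with $v_i$ up by $1$ and $v_j$ down by $1$. Lemma~\ref{lem:g5} reduces $Z\succ_\Psi X$ to $\phi(X,i,1)>\phi(Z,j,1)$; chaining strict (G2) (using $x_i<z_i$) with (G3) (using $z_i\le z_j$ whenever $v_j(X_j)\ge v_i(X_i)+2$) furnishes this strictness, so $\Psi$-optimality forces $v_j(X_j)\le |X^1_i|+1$. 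Analogously, for each $j\ne i$ with $|X^c_j|\ge 2$, (WD1) makes any $g\in X^c_j$ give $i$ value~$1$; the resulting $c$-loss-$1$-gain transfer combined with (G4) in the direction $z_j>c\,x_i$ forces $|X^c_j|\le|X^1_i|+1$. Combining, any $j\ne i$ with $|X_j|\ge 2$ satisfies $|X_j|\le|X^1_i|+1$: if $X^1_j\ne\emptyset$ the first constraint forces $|X^c_j|=0$ (else $v_j\ge c+1>|X^1_i|+1$), so $|X_j|=|X^1_j|\le|X^1_i|+1$; otherwise $|X_j|=|X^c_j|\le|X^1_i|+1$.

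Next I would count items in $H:=\{g:\Delta_{v_i}(\emptyset,g)=c\}$. By (WD1), (WD2), and the transfer constraint above, every agent holding an $H$-item has bundle size~$1$, so $|S_1|\ge|H|$ where $S_1:=\{j\ne i:|X_j|\le 1\}$; moreover, each $Y^c_j\subseteq H$ (its elements are clean singletons of $v_i$) gives $|H|\ge\sum_j|Y^c_j|=|N'|$. Splitting $\sum_{j\ne i}|X_j|$ over $S_1$ and $S_2:=\{j\ne i:|X_j|\ge 2\}$ with $|S_2|\le n-1-|H|$ yields $m\le (L-1)(n-|H|)+(n-1)$. On the other side, summing $|Y_j|$ using the leximin property $v_i(Y_j)\ge|Y^1_n|$ together with $|Y^c_j|\in\{0,1\}$ gives the uniform lower bound $m\ge(n-|N'|)|Y^1_n|+|N'|$, sharpened to $m\ge n|Y^1_n|-|N'|(c-1)$ when $|Y^1_n|\ge c$.

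The contradiction then closes by a case-split on $L$. For $L=|Y^1_n|$ the two bounds collapse to $(|H|-|N'|)(|Y^1_n|-1)\le-1$, which is impossible since both factors are nonnegative. For $L=c$ (so $|Y^1_n|\ge c$) the sharper lower bound combined with the upper bound yields $n|Y^1_n|\le(c-1)(n-|H|+|N'|)+(n-1)\le nc-1$, forcing $|Y^1_n|<c$ and contradicting the case hypothesis. The main obstacle will be the careful orchestration of Lemma~\ref{lem:g5} in the second paragraph: the loser/gainer labeling must be chosen so that (G2), (G3), and (G4) chain into strict $\phi$-inequalities in precisely the right direction.
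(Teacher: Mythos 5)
Your proof is correct and takes essentially the same approach as the paper's: both deduce $|Y^c_n|=0$ from Lemma~\ref{lem:x-c-i-lowerbound}, use weak dominance together with one-step transfer arguments (Lemma~\ref{lem:g5} with (G2)--(G4)) to show that every agent holding a $c$-valued-for-$i$ singleton has a size-one bundle while all other non-$i$ agents are bounded by $|X^1_i|+1$, and then derive contradictory counts of $m$ from $X$ and from $Y$. Your $H,S_1,S_2$ packaging is a mild reorganization of the paper's $G',N',N''$ bookkeeping, but the underlying ideas, transfers, and case split on $|Y^1_n|\gtrless c$ are the same.
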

\begin{proof}
This proof requires a careful analysis of $X$ and $Y$. 
When $|X^c_i| = 0$, $|Y^c_n| = 0$ as well (Lemma \ref{lem:x-c-i-lowerbound} since $X$ is $k$-weakly dominating). Recall that $N'$ consists of the agents $j \in N$ where $|Y^c_j| = 1$.

This implies that there is a set $G'$ of $|N'|$ goods such that for every $g \in G'$, $v_i(X_i^c+g) = v_i(\emptyset+g) = v_i(g) = c$. 
In fact, it is easy to see that whenever $X^c_i = \emptyset$, adding any of the goods in $G'$ to $X_i$ results in a marginal gain of $c$. 
Under the allocation $X$, these $|N'|$ goods must be allocated such that:
\begin{obs}\label{obs:p1}
For all the agents $k \in N$ who receive at least one good in $G'$ under $X$, that good must be in $X^c_k$. 
\end{obs}
\begin{proof}
Assume for contradiction that this good $g \in G'$ is in $X^1_k$ for some $k \in N$. 
We have the following two cases.
\begin{enumerate}[label={\bfseries Case \arabic*:},itemindent=*,leftmargin=0cm]
\item $|X^1_i| = 0$. 
In this case, $v_i(X_i) = 0$. We create an allocation $Z = Z^c \cup Z^1$ starting at $X^c$ and $X^1$ and moving $g$ from $X^1_k$ to $Z^c_i$. 
It is easy to see that $Z$ symmetrically Pareto dominates $X$ which implies $Z \succ_{\Psi} X$ (C1).
\item $|X^1_i| > 0$.
We create an allocation $Z = Z^c \cup Z^1$ starting at $X^c$ and $X^1$ and 
\begin{inparaenum}
    \item moving $g$ from $X^1_k$ to $Z^c_i$, and
    \item moving any good from $X^1_i$ to $Z^1_k$.
\end{inparaenum}
It is easy to see that $Z$ symmetrically Pareto dominates $X$ again which implies $Z \succ_{\Psi} X$ (C1).
\end{enumerate}
\end{proof}
We also note that any agent who receives an item $g \in G'$ must have exactly one $c$-valued item.
\begin{obs}\label{obs:p2}
For all the agents $k$ who receive a good in $G'$ in $X$, then $|X^c_k| = 1$. 
\end{obs}
\begin{proof}
From Observation \ref{obs:p1}, we have that $|X^c_k| \ge 1$. If $|X^c_k| \ge 2$, we contradict the fact that $X$ is weakly dominating (WD1).
\end{proof}

Let the set of all the agents who receive a good from $G'$ under $X$ be $N''$.
Using Observations \ref{obs:p1} and \ref{obs:p2}, for all agents $j \in N'$, there is a corresponding agent $k \in N''$ such that $Y^c_j = X^c_k$. 
Note that $i \notin N''$.

Assume that $|X^1_i| < c$. The following two observations must hold.
\begin{obs}\label{obs:p3}
For all agents $k \in N$, if $|X^c_k| > 0$, then $|X^1_k| = 0$. 
\end{obs}
\begin{proof}
Assume for contradiction that $|X^1_k| > 0$.
Let $Z^1$ be the allocation starting at $X^1$ and moving a good from $X^1_k$ to $Z^1_i$. 
Let $Z = X^c \cup Z^1$.
Note that $v_k(Z_k) \ge c$ and $v_i(X_i) < c$. Therefore, $\phi(Z, k, 1) < \phi(X, i, 1)$ (G3) which implies $Z \succ_{\Psi} X$ (using Lemma \ref{lem:g5}), a contradiction.
\end{proof}

\begin{obs}\label{obs:p4}
For all $k \in N\setminus (N'' + i)$, $|X_k| \le |X^1_i| + 1$.
\end{obs}
\begin{proof}
This follows from very similar argumentation to Lemma \ref{lem:x-1-i-lowerbound}.
\end{proof}

We are now ready to go into the two cases of this proof. 
\begin{enumerate}[label={\bfseries Case \arabic*:},itemindent=*,leftmargin=0cm]
\item $|Y^1_n| \ge c$. 
Assume for contradiction that $|X^1_i| < c$. Let us count the number of goods in $X$ and $Y$. 

In $Y$, agents in $N'$ receive at least one good and agents in $N \setminus N'$ receive $c$ goods (Lemma \ref{lem:n-worst-bundle}). 
Since the allocation is complete (Lemma \ref{lem:x-complete}), we get
\begin{align}
    m \ge |N'| + c|N \setminus N'|. \label{eq:m-lowerbound-1}
\end{align}

Under $X$, the agents in $N''$ receive exactly one good (using Observations \ref{obs:p1}, \ref{obs:p2} and \ref{obs:p3}), agent $i$ receives at most $c-1$ goods and all other agents receive at most $c$ goods (using Observation \ref{obs:p4}). 
Since this allocation is complete as well (Lemma \ref{lem:weak-x-complete}), this gives us
\begin{align}
    m \le |N''| + c|N\setminus (N'' + i)| + (c-1). \label{eq:m-upperbound-1}
\end{align}

Since $|N'| = |N''|$, combining \eqref{eq:m-lowerbound-1} and \eqref{eq:m-upperbound-1}, we get the following 
\begin{align*}
    |N'| + c|N \setminus N'| \le |N''| + c|N\setminus (N'' + i)| + (c-1) 
    \implies 0 \le -1,
\end{align*}
a contradiction.
\item $|Y^1_n| < c$.
Let us again assume for contradiction that $|X^1_n| < |Y^1_n|$. Again, we count the goods in $X$ and $Y$.

In $Y$, the agents in $N'$ receive at least one good (Lemma \ref{lem:n-not-so-bad}). All other agents get at least $|Y^1_n|$ goods (from Lemma \ref{lem:n-worst-bundle}). This gives us 
\begin{align}
    m \ge |N'| + |Y^1_n||N\setminus N'|. \label{eq:m-lowerbound-2}
\end{align}

Coming to $X$, the agents in $N''$ get exactly one good (using Observations \ref{obs:p1}, \ref{obs:p2} and \ref{obs:p3}). The agents in $N\setminus (N'' + i)$ receive at most $|Y^1_n|$ goods (using Observation \ref{obs:p4}). Agent $i$ receives at most $|Y^1_n| - 1$ goods. This gives us 
\begin{align}
    m \le |N''| + |Y^1_n||N\setminus (N'' + i)| + |Y^1_n| - 1. \label{eq:m-upperbound-2}
\end{align}

Combining the two inequalities and using the fact that $|N'| = |N''|$, we get that $0 \le -1$, a contradiction.
\end{enumerate}
\end{proof}

\subsubsection{Max Nash Welfare Allocations}\label{sec:mnw-mms}
When $\Psi$ corresponds to the Nash welfare objective, it is easy to see that $\Psi$ satisfies (C1) and admits a gain function that satisfies (G1)---(G4) (Theorem \ref{thm:max-nash-welfare}). 
Therefore, to use the results from the previous section, we only need to show that max Nash welfare allocations are $k$-weakly dominating for some $k \in N$. In the following lemma, we show that max Nash welfare allocations are roughly $1$-weakly dominating.
Specifically, we show that for every max Nash welfare allocation, there exists a $1$-weakly dominating max Nash welfare allocation with the same utility vector.

\begin{lemma}\label{lem:mnw-1-weak-fairness}
For any max Nash welfare allocation $X = X^c \cup X^1$, there exists a $1$-weakly dominating max Nash welfare allocation $\hat X$ with the same utility vector. 
\end{lemma}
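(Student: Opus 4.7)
The plan is to pick $\hat X$ to be a max Nash welfare allocation with the same utility vector as $X$ that minimizes the integer-valued potential $\Phi(Y) = \sum_{k \in N} \beta_k(Y_k)^2 = \sum_{k \in N} |Y^c_k|^2$; this is well defined because the decomposition forces $|Y^c_k| = \beta_k(Y_k)$ regardless of the choice of decomposition (Lemma \ref{lem:decompose}), and because any allocation with the same utility vector as $X$ has the same Nash welfare product and so is MNW. Fixing any decomposition $\hat X = \hat X^c \cup \hat X^1$, I claim both (WD1) and (WD2) hold, and I prove each by contradiction. Throughout, one preliminary is used repeatedly: whenever $g \notin \hat X_i$ satisfies $\Delta_{v_i}(\hat X^c_i, g) = c$, every $h \in \hat X^1_i$ satisfies $\Delta_{v_i}(\hat X^c_i, h) = 1$ by the decomposition property, so by submodularity $\Delta_{v_i}(\hat X^c_i + g, h) = 1$ as well, giving $\beta_i(\hat X_i + g) = |\hat X^c_i| + 1$ and therefore $v_i(\hat X_i + g) = v_i(\hat X_i) + c$ via Observation \ref{obs:beta-v-relation}. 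Symmetrically, removing a good $g \in \hat X^1_j$ decreases $v_j$ by exactly $1$, since $\hat X^c_j$ remains a maximum clean subset of $\hat X_j - g$.

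For (WD2), suppose $g \in \hat X^1_j$ violates it with $|\hat X_j| \ge |\hat X^c_i| + 2$. When $|\hat X^1_i| < c$, the direct transfer of $g$ to $i$ raises $v_i$ by $c$ and drops $v_j$ by $1$; combining $|\hat X_j| \ge |\hat X^c_i| + 2$ with $|\hat X^1_i| < c$ yields $c\, v_j(\hat X_j) > v_i(\hat X_i) + c$, whence $(v_i + c)(v_j - 1) > v_i v_j$, contradicting MNW. When $|\hat X^1_i| \ge c$, choose $c$ distinct goods $g'_1, \dots, g'_c \in \hat X^1_i$ and define $\tilde X_i = \hat X_i + g - \{g'_1,\dots,g'_c\}$, $\tilde X_j = \hat X_j - g + \{g'_1,\dots,g'_c\}$, others unchanged. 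The preliminary gives $v_i(\tilde X_i) = v_i(\hat X_i)$, while $v_j(\tilde X_j) \ge v_j(\hat X_j) - 1 + c > v_j(\hat X_j)$ (each added good contributes marginal at least $1$ and $c \ge 2$), again a strict NW increase contradicting MNW of $\hat X$.

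For (WD1), suppose $g \in \hat X^c_j$ violates it with $|\hat X^c_j| \ge |\hat X^c_i| + 2$. If $|\hat X^1_i| < c$, then $v_j(\hat X_j) - v_i(\hat X_i) \ge 2c - |\hat X^1_i| > c$, so transferring $g$ (which drops $v_j$ by some $\beta \in \{1, c\}$) strictly raises Nash welfare for either $\beta$, contradicting MNW. If $|\hat X^1_i| \ge c$, use the same swap $\tilde X_i = \hat X_i + g - \{g'_1,\dots,g'_c\}$, $\tilde X_j = \hat X_j - g + \{g'_1,\dots,g'_c\}$. Again $v_i(\tilde X_i) = v_i(\hat X_i)$. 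The possible values for $v_j(\hat X_j - g)$ are $v_j(\hat X_j) - 1$ or $v_j(\hat X_j) - c$; the first possibility would force $v_j(\tilde X_j) \ge v_j(\hat X_j) + c - 1 > v_j(\hat X_j)$, breaking MNW, so the second must hold, whence $v_j(\tilde X_j) \ge v_j(\hat X_j)$ with equality forced by MNW. Then $\tilde X$ has the same utility vector as $\hat X$ and is still MNW, and a short computation via Observation \ref{obs:beta-v-relation} gives $|\tilde X^c_i| = |\hat X^c_i| + 1$, $|\tilde X^c_j| = |\hat X^c_j| - 1$, and $|\tilde X^c_k|$ unchanged for $k \ne i, j$. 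Consequently $\Phi(\tilde X) - \Phi(\hat X) = 2(|\hat X^c_i| - |\hat X^c_j|) + 2 \le -2$, contradicting minimality of $\hat X$.

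The hardest part is the coupling of MNW with $\Phi$-minimality in the (WD1) swap: MNW is what forces the two slack quantities (namely $v_j(\hat X_j - g)$ on one side and the marginals of the incoming $g'_k$ to $\hat X_j - g$ on the other) to collapse to exactly the values required for the utility vector to be preserved. Everything else reduces to careful bookkeeping with Observation \ref{obs:beta-v-relation} and submodularity.
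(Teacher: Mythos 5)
Your proof is correct and takes essentially the same approach as the paper's: direct Nash-welfare-improvement contradictions dispose of (WD2) and of the $|\hat X^1_i|<c$ branch of (WD1), and the remaining (WD1) branch is handled by a utility-preserving clean/supplementary swap coupled to the quadratic potential $\Phi = \sum_k |Y^c_k|^2$. You are actually more careful than the paper on that last step: the paper asserts the swap strictly \emph{increases} $\Phi$ (and bounds it above by $m^2$), but since $|\hat X^c_j| \ge |\hat X^c_i| + 2$ the change is $2(|\hat X^c_i| - |\hat X^c_j|) + 2 \le -2$, so $\Phi$ strictly \emph{decreases}; by choosing $\hat X$ as a $\Phi$-minimizer up front you avoid this slip (and the need to argue termination), and your explicit use of Observation~\ref{obs:beta-v-relation} to show that MNW forces $|\tilde X^c_j| = |\hat X^c_j| - 1$ fills a bookkeeping gap the paper leaves implicit. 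One cosmetic omission: in the $|\hat X^1_i| < c$ branches the inequality manipulations tacitly assume $v_i(\hat X_i) > 0$; when $v_i(\hat X_i) = 0$ the transfer enlarges $P_X$, so the MNW contradiction is immediate and the argument still goes through.
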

\begin{proof}
We first prove by contradiction that (WD1) holds. Assume there exist two agents $i, j \in N$ such that $|X_j^c| \ge |X_i^c| + 2$ and there exists a good $g \in X_j^c$ such that $\Delta_{v_i}(X^c_i, g) = c$. 
We construct a clean allocation $Y^c$ starting at $X^c$ and moving $g$ from $X^c_j$ to $Y^c_i$. 

\begin{enumerate}[label={\bfseries Case \arabic*:},itemindent=*,leftmargin=0cm]
\item $|X^1_i| < c$.
Let $Y = Y^c \cup X^1$. Let us compare the Nash welfare of $X$ and $Y$. If $v_i(X_i) = 0$, the Nash welfare of $Y$ is trivially higher than the Nash welfare of $X$ resulting in a contradiction. We can therefore assume $P_Y = P_X$. This gives us
\begin{align*}
    \frac{\prod_{h \in P_Y} v_h(Y_h)}{\prod_{h \in P_X} v_h(X_h)} = \frac{v_i(Y_i) v_j(Y_j)}{v_i(X_i) v_j(X_j)} \ge \frac{(v_i(X_i) + c) (v_j(X_j) - c)}{v_i(X_i) v_j(X_j)} = 1 - \frac{c^2}{v_i(X_i)v_j(X_j)} + \frac{c(v_j(X_j) - v_i(X_i))}{v_i(X_i)v_j(X_j)} > 1.
\end{align*}
The last inequality follows from the fact that $v_i(X_i) = c|X^c_i| + |X^1_i| < c|X^c_i| + c \le c|X^c_j| - c \le v_j(X_j) - c$.

Therefore, this case cannot occur.

\item $|X^1_i| \ge c$.
In this case, we construct $Y^1$ starting at $X^1$ and moving any $c$ goods from $X^1_i$ to $Y^1_j$. Let $Y = Y^c \cup Y^1$. Note that this is essentially a swap of value, we move a value of at least $c$ from $i$ to $j$ and a value of exactly $c$ from $j$ to $i$. Therefore, $v_i(Y_i) = v_i(X_i)$ and $v_j(Y_j) \ge v_j(X_j)$. This inequality cannot be strict since $X$ is a max Nash welfare allocation. Therefore, equality must hold and $Y$ must be a max Nash welfare allocation with the same utility vector as $X$. 

Note that every time there is a (WD1) violation, such a swap can be performed. However, this swap can be performed only a finite number of times since each time, the value of $\sum_{i \in N} |Y^c_i|^2$ strictly increases. This value is upper bounded by $m^2$. As we will show in this proof, this is also the only swap that we will need to perform.
\end{enumerate}
Next, we prove that (WD2) holds. Assume for contradiction that there are two agents $i, j \in N$ for which $|X_j| \ge |X^c_i| + 2$ and there is a good $g \in X^1_j$ such that $\Delta_{v_i}(X^c_i, g) = c$.

We construct allocations $Y^c$ and $Y^1$ by setting $Y_j^1 = X_j^1 - g$, $Y_i^c = X_i^c+g$ and keeping all other bundles the same. 
Let us now compare the max Nash welfare of $X$ and $Y = Y^c \cup Y^1$. 
\begin{enumerate}[label={\bfseries Case \arabic*:},itemindent=*,leftmargin=0cm]
\item $|X^1_i| > 0$.
In this case, we update $Y^c$ and $Y^1$ by moving any good from $Y^1_i$ to $Y^1_j$. So, we essentially replace the good in $X^1_j$ and strictly improve the utility of $i$. Therefore, $Y$ Pareto dominates $X$, contradicting the fact that $X$ is a max Nash welfare allocation.

\item $|X^1_i| = 0$.
In this case, we directly compare the Nash welfare of $Y$ and $X$. If $v_i(X_i) = 0$, the Nash welfare of $Y$ is greater than the Nash welfare of $X$. Otherwise,
\begin{align*}
    \frac{\prod_{h \in P_Y} v_h(Y_h)}{\prod_{h \in P_X} v_h(X_h)} = \frac{v_i(Y_i) v_j(Y_j)}{v_i(X_i) v_j(X_j)} \ge \frac{(v_i(X_i) + c) (v_j(X_j) - 1)}{v_i(X_i) v_j(X_j)} = 1 - \frac{c}{v_i(X_i)v_j(X_j)} + \frac{cv_j(X_j) - v_i(X_i)}{v_i(X_i)v_j(X_j)} > 1.
\end{align*}
The final inequality comes from the fact that $|X_j| \ge |X^c_i| + 2$. Therefore, $c v_j(X_j) \ge c|X_j| \ge c|X^c_i| + 2c = v_i(X_i) + 2c$. Again, we have a contradiction since $Y$ has a greater Nash welfare than $X$. This implies no (WD2) violation can occur. 
\end{enumerate}
\end{proof}

We can use the fact that max Nash welfare allocations are $1$-weakly dominating to show a relation between $X$ and the dominating leximin allocation $Y$. 

\begin{lemma}\label{lem:mnw-mms-relation}
Fix an agent $i \in N$ and let $Y$ be a dominating leximin allocation where all the agents have the valuation function $v_i$. Let $X$ be a $1$-weakly dominating max Nash welfare allocation. If $|Y^1_n| > c$ and $2|X^c_i| > |Y^c_n|$, then $c|Y^c_n| + |Y^1_n| \le \left (2 + \frac1{|X^c_i| + 1} \right )c|X^c_i| + |X^1_i| + 1 - |X^c_i|$. Additionally, if $|X^c_i| = 0$, then $|Y^1_n| \le |X^1_i| + c$.
\end{lemma}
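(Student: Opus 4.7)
The plan is to prove both inequalities by a two-sided count of the total number of goods $m$, combining a lower bound derived from $Y$'s leximin structure with an upper bound derived from $X$'s MNW optimality and $1$-weak dominance.

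For the assertion in the case $|X^c_i|=0$: Lemma \ref{lem:x-c-i-lowerbound} forces $|Y^c_n|=0$. I would first establish that $|X_j|\le|X^1_i|+1$ for every $j\ne i$. For $j\notin M+i$, no item in $X_j$ is $c$-marginal for $i$; moving a marginal-$1$-for-$j$ item from $X^1_j$ (if $|X^1_j|>0$) and applying the MNW-move argument yields $v_j(X_j)\le v_i(X_i)+1=|X^1_i|+1$, while if $X_j=X^c_j$ is clean for $j$ then moving any item yields $v_j(X_j)\le c(v_i(X_i)+1)$; in either case $|X_j|\le|X^1_i|+1$. For $j\in M$, (WD2) gives $|X_j|\le 1$ directly when the $c$-to-$i$ good lies in $X^1_j$, and (WD1) gives $|X^c_j|=1$ when it lies in $X^c_j$; in the remaining sub-case ($|X^1_j|>0$), the MNW move on a marginal-$1$-for-$j$ item of $X^1_j$ (whose $v_i$-value equals $1$ by (WD2)) bounds $v_j(X_j)\le|X^1_i|+1$. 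Summing, $m\le|X^1_i|+(n-1)(|X^1_i|+1)$. Using Lemmas \ref{lem:n-worst-bundle} and \ref{lem:n-not-so-bad}, $m\ge n|Y^1_n|-(c-1)|N'|$ with $|N'|\le n-1$. Combining these and dividing by $n$ yields $|Y^1_n|\le|X^1_i|+c-c/n$, which is stronger than the stated bound.

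For the assertion with $|Y^1_n|>c$ and $2|X^c_i|>|Y^c_n|$: The strategy is parallel but with tighter bookkeeping. For $j\in M$, (WD1)/(WD2) together with the MNW move on a marginal-$1$-for-$j$ item in $X^1_j$ (which, by (WD2), must have $v_i$-value $1$) establish $|X_j|\le|X^c_i|+1$. For $j\notin M+i$, the (G4)-based argument from Lemma \ref{lem:x-1-i-lowerbound} gives $|X_j|\le c|X^c_i|+|X^1_i|+1$. The resulting upper bound on $m$ is linear in $|M|$ with negative coefficient $-((c-1)|X^c_i|+|X^1_i|)$, so substituting the lower bound on $|M|$ from Lemma \ref{lem:m-lowerbound} tightens the estimate; combining this with the leximin lower bound $m\ge n(|Y^c_n|+|Y^1_n|)-(c-1)|N'|$ produces an inequality whose $n$- and $|N'|$-dependent terms cancel after multiplying through by $|X^c_i|+1$, leaving exactly the form stated.

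The main obstacle is the sub-case in the second assertion where $j\in M$ via a $c$-to-$i$ good in $X^c_j$ while $|X^1_j|>0$: here weak dominance alone bounds only $|X^c_j|$, not $|X_j|$, so one must combine (WD1), (WD2), and the MNW-improvement argument, using the fact (a consequence of Observation \ref{obs:beta-v-relation}) that any item in $X^1_j$ has marginal $1$ for $j$ in $X_j$. A secondary concern is verifying that the fractional term $\tfrac{c|X^c_i|}{|X^c_i|+1}$ in the claim is reproduced exactly by the Lemma \ref{lem:m-lowerbound} substitution and the subsequent division by $|X^c_i|+1$.
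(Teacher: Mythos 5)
The paper's proof of this lemma is a short, purely algebraic manipulation: substitute Lemma~\ref{lem:m-lowerbound}'s lower bound on $|M|$ into Lemma~\ref{lem:x-1-i-lowerbound}'s lower bound on $|X^1_i|$, combine with Lemma~\ref{lem:y-1-n-upperbound}'s upper bound on $|Y^1_n|$, and simplify; the second assertion follows by setting $|X^c_i|=0$ in the resulting inequality. Your proposal instead re-derives the bundle-size bounds underlying Lemmas~\ref{lem:x-1-i-lowerbound} and~\ref{lem:y-1-n-upperbound} inline from scratch, which is a substantially longer route to what the paper treats as already proved. That by itself would be fine, but there are two genuine gaps.

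First, for the case $|X^c_i|>0$ you claim the bound $|X_j|\le|X^c_i|+1$ for all $j\in M$, via (WD1), (WD2), and an MNW move. This does not follow in the sub-case you yourself identify as the main obstacle: when the $c$-to-$i$ good lies in $X^c_j$ and $|X^1_j|>0$, the move of a marginal-$1$ item from $X^1_j$ to $i$ yields only $v_j(X_j)\le v_i(X_i)+1=c|X^c_i|+|X^1_i|+1$. Since $|X_j|=v_j(X_j)-(c-1)|X^c_j|$ and (WD1) gives only $1\le|X^c_j|\le|X^c_i|+1$, the best you obtain is $|X_j|\le (c+1)|X^c_i|+|X^1_i|+2-c$, which is not $|X^c_i|+1$. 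You acknowledge the obstacle but do not actually close it; as stated, the pigeonhole sum over $j\in M$ does not produce the coefficient you need. (Note that the paper avoids this difficulty by invoking Lemma~\ref{lem:x-1-i-lowerbound} as a black box rather than re-establishing its internal per-agent bounds.)

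Second, you correctly identify that the final form must be recovered by substituting Lemma~\ref{lem:m-lowerbound} and cancelling after multiplying through by $|X^c_i|+1$, but you explicitly leave this verification as a ``secondary concern.'' In the paper, this algebra \emph{is} the entire proof, including the step where the hypothesis $2|X^c_i|>|Y^c_n|$ is used to bound $\frac{(2|X^c_i|+1-|Y^c_n|)c}{|X^c_i|+1}$ from below by $\frac{2c}{|X^c_i|+1}$, and the step where $|N'|\le n$ is used to drop the $n$-dependence. Asserting that it ``leaves exactly the form stated'' without performing the cancellation leaves the proof incomplete even setting aside the first issue.

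For the $|X^c_i|=0$ case your counting argument is essentially sound (the bound $|X_j|\le|X^1_i|+1$ for $j\ne i$ does hold there because $|X^c_j|\ge 1$ and $c\ge 2$ make $(c-1)|X^c_j|\ge 1$), and in fact yields a marginally stronger conclusion than the stated one, so that part is recoverable.
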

\begin{proof}
Combining Lemmas \ref{lem:m-lowerbound} and \ref{lem:x-1-i-lowerbound}, we get
\begin{align*}
|X^1_i| > \frac{m - n|Y^c_n| + n|X^c_i| - |N'| - |X^c_i| - (n -1)|X^c_i|c}{n} + \frac{\left (n|Y^c_n| - n|X^c_i| + |N'|\right ) c|X^c_i|}{n(|X^c_i| + 1)} - 1.    
\end{align*}

Combining this with Lemma \ref{lem:y-1-n-upperbound}, we get
\begin{align}
    |Y^1_n| - |X^1_i| &< \frac{c|N'| - n|X^c_i| + nc|X^c_i|}{n} - \frac{\left (n|Y^c_n| - n|X^c_i| + |N'|\right ) c|X^c_i|}{n(|X^c_i| + 1)} + 1 \notag \\
    &= \frac{(2|X^c_i| + 1 - |Y^c_n|)c|X^c_i|}{|X^c_i| + 1} + \frac{c|N'|}{n(|X^c_i| + 1)} + 1 - |X^c_i| \notag\\
    &\le 2c|X^c_i| - c|Y^c_n| + c + \frac{c}{|X^c_i| + 1} - \frac{(2|X^c_i| + 1 - |Y^c_n|)c}{|X^c_i| + 1} + 1 - |X^c_i|. \label{eq:mnw-mms-relation}
\end{align}
If we assume $2|X^c_i| > |Y^c_n|$. Then
\begin{align*}
    |Y^1_n| - |X^1_i| &< 2c|X^c_i| - c|Y^c_n| + c + \frac{c}{|X^c_i| + 1} - \frac{2c}{|X^c_i| + 1} +1 - |X^c_i|\\ 
    &= \left (2 + \frac{1}{|X^c_i| + 1}\right ) c|X^c_i| - c|Y^c_n| + 1 - |X^c_i|.
\end{align*}
Re-arranging this gives us the first required inequality.

On the other hand, if we apply $|X^c_i| = 0$ to \eqref{eq:mnw-mms-relation}, we get
\begin{align*}
    |Y^1_n| - |X^1_i| < c + 1.
\end{align*}
Since all of these values are integers, we can relax the strict inequality to a weak inequality and re-write this as
\begin{align*}
    |Y^1_n| - |X^1_i| \le c. 
\end{align*}
This gives us the second required inequality.
\end{proof}

We now need to put all of this together for the final result.
\thmmnwmms*
\begin{proof}
Let $\hat X = \hat X^c \cup \hat X^1$ be a $1$-weakly dominating max Nash welfare allocation with the same utility vector of $X$. We know that one exists due to Lemma \ref{lem:mnw-1-weak-fairness}. If we show that $\hat X$ is $\frac{2}{5}$-\MMS, then since $\hat X$ and $X$ have the same utility vector, we show that $X$ is $\frac{2}{5}$-\MMS as well.

Consider any agent $i \in N$. Construct $Y = Y^c \cup Y^1$ as a dominating leximin allocation where all agents have the valuation function $v_i$. We have $v_i(Y_n) \ge \MMS_i$ by the definition of leximin. We show that $v_i(\hat{X}_i) \ge \frac25 \MMS_i$.

We have two cases to consider,
\begin{enumerate}[label={\bfseries Case \arabic*:},itemindent=*,leftmargin=0cm]
\item $|\hat{X}^c_i| > 0$. 
We divide this case into three sub-cases using Lemma \ref{lem:x-c-i-lowerbound}.

\begin{enumerate}[label={\bfseries Sub-Case (\alph*):},itemindent=*,leftmargin=0cm]
\item $2|\hat{X}^c_i| = |Y^c_n|$. 
This case is almost trivial from Lemma \ref{lem:x-c-i-lowerbound}.
\begin{align*}
    \MMS_i \le v_i(Y_n) = c|Y^c_n| + |Y^1_n| \le 2c|\hat{X}^c_i| + |\hat{X}^1_i| \le 2v_i(\hat{X}_i).
\end{align*}

\item $2|\hat{X}^c_i| \ge |Y^c_n| + 1$ and $|Y^1_n| \le c$. 
This case follows from a similar sequence of inequalities. 
\begin{align*}
    \MMS_i \le v_i(Y_n) = c|Y^c_n| + |Y^1_n| \le 2c|\hat{X}^c_i| - c + c \le 2v_i(\hat{X}_i). 
\end{align*}

\item $2|\hat{X}^c_i| \ge |Y^c_n| + 1$ and $|Y^1_n| > c$. 
This case follows from Lemma \ref{lem:mnw-mms-relation}.
\begin{align*}
    \MMS_i \le v_i(Y_n) = c|Y^c_n| + |Y^1_n| \le \left (2 + \frac1{|\hat{X}^c_i| + 1} \right )c|\hat{X}^c_i| + |\hat{X}^1_i| + 1 - |X^c_i| \le 2.5v_i(\hat{X}_i).
\end{align*}
The final inequality comes from using $|X^c_i| > 0$.
\end{enumerate}

\item $|\hat{X}^c_i| = 0$.
From Lemma \ref{lem:x-c-i-zero}, we get that $|\hat{X}^1_i| \ge \min\{c, |Y^1_n|\}$ and $|Y^c_i| = 0$.  There are two further subcases here.
\begin{enumerate}[label={\bfseries Sub-Case (\alph*):},itemindent=*,leftmargin=0cm]
\item $|Y^1_n| < c$. 
We get that $|\hat{X}^1_i| \ge |Y^1_n|$. This gives us 
\begin{align*}
    \MMS_i \le v_i(Y_n) = |Y^1_n| \le |\hat{X}^1_i| = v_i(\hat{X}_i).
\end{align*}

\item $|Y^1_n| \ge c$. 
We get that $|\hat{X}^1_i| \ge c$. Plugging this into Lemma \ref{lem:mnw-mms-relation}, we get 
\begin{align*}
\MMS_i \le v_i(Y_n) = |Y^1_n| \le c + |\hat{X}^1_i| \le 2|\hat{X}^1_i| \le 2v_i(\hat{X}_i).
\end{align*}
\end{enumerate}
\end{enumerate}
\end{proof}

\section{Envy-Freeness of MNW and leximin allocations}
Our final technical section deals with the envy-freeness of max Nash welfare and leximin allocations. An allocation $X$ is {\em envy free up to one good (EF1)} if for all agents $i, j \in N$, there exists a good $g \in X_j$ such that $v_i(X_i) \ge v_i(X_j - g)$ \citep{Budish2011EF1, Lipton2004EF1}. An allocation is {\em envy free up to any good (EFX)} if for all agents $i, j \in N$, and for all goods $g \in X_j$, we have $v_i(X_i) \ge v_i(X_j - g)$ \citep{Caragiannis2016MNW}.

Under binary submodular valuations, both the leximin and MNW allocations are known to be EFX \citep{Babaioff2021Dichotomous}. Under bivalued additive valuations, MNW allocations are known to be EFX \citep{amantidis2021mnw}. However, we now show that neither \MNW nor leximin allocations are EF1 under bivalued submodular valuations. 
The leximin allocation is not guaranteed to be EF1 even under bivalued additive valuations.

\begin{prop}
For every integer $c \ge 2$, there exists an instance under bivalued submodular valuations where no max Nash welfare allocation is EF1.
\end{prop}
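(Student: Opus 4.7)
The plan is to exhibit, for each integer $c \ge 2$, a two-agent instance with $m = 2c^2$ goods whose unique \MNW utility vector corresponds to allocations that all violate EF1. Let agent $1$ have $v_1(S) = c|S|$, and let agent $2$ have $v_2(S) = c\min\{|S|, c\} + \max\{|S|-c, 0\}$. Both are $\{1,c\}$-bivalued submodular: agent $1$'s marginals are uniformly $c$, while agent $2$'s marginals equal $c$ on her first $c$ items and $1$ thereafter. Because neither valuation distinguishes between items, the utility vector of any allocation depends only on $s := |X_2|$.

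To locate the \MNW, I would study $\Pi(s) := v_1(X_1)\, v_2(X_2)$. On $s \in \{0,\dots,c\}$, $\Pi(s) = c^2 s(2c^2 - s)$ is strictly increasing (its unconstrained quadratic maximum sits at $s = c^2 > c$), so over this range it is maximized at $\Pi(c) = c^4(2c-1)$. On $s \in \{c,\dots,m\}$, $\Pi(s) = c(2c^2 - s)(s + c^2 - c)$ is a strictly concave quadratic whose integer maximum occurs at $s^{\star} := c(c+1)/2$ (always an integer), with $\Pi(s^{\star}) = c^3(3c-1)^2/4$. A direct computation gives $\Pi(s^{\star}) - \Pi(c) = c^3(c-1)^2/4 > 0$ for $c \ge 2$, and $\Pi(s^{\star}) - \Pi(s^{\star} \pm 1) = c > 0$; so $s^{\star}$ is the unique \MNW cardinality, and every \MNW allocation satisfies $|X_2| = c(c+1)/2$ and $|X_1| = (3c^2 - c)/2$.

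Finally I would verify that EF1 fails for agent $2$ toward agent $1$. In any \MNW allocation, $v_2(X_2) = c\cdot c + (s^{\star} - c) = (3c^2 - c)/2$. For any $g \in X_1$, since $|X_1| - 1 \ge c$ for $c \ge 2$, submodularity forces $\Delta_2(X_1 - g, g) = 1$, so $v_2(X_1 - g) = v_2(X_1) - 1 = (5c^2 - 3c - 2)/2$. The EF1 violation $v_2(X_1 - g) > v_2(X_2)$ then reduces to $c^2 - c - 1 > 0$, which holds for every $c \ge 2$. The main technical step is establishing the uniqueness of the \MNW cardinality (so that no other \MNW allocation could rescue EF1); once this is in hand, the EF1 violation follows by direct substitution.
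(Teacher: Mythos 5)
Your proof is correct and uses the same underlying construction as the paper: a two-agent instance pairing one agent whose marginals are all equal to $c$ with an agent whose marginals drop to $1$ after a cap, so that the MNW allocation starves the capped agent and leaves them envying the other bundle even after removing one good. The paper instantiates a minimal version of this idea --- six goods with the cap set at $2$ --- so the MNW cardinality follows from comparing a handful of products, whereas your instance ($2c^2$ goods, cap at $c$) requires a quadratic optimization in $|X_2|$ and a check that the vertex $c(c+1)/2$ is an integer; both are valid, but the paper's fixed-size example avoids the parabola analysis entirely.
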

\begin{proof}
We construct an instance with two agents and $6$ goods $\{g_1, g_2, \dots, g_6\}$. We define the valuations as
\begin{align*}
    & v_1(S) = c\min\{|S|, 2\} + \max\{|S|-2, 0\} \qquad  v_2(S) = c|S|
\end{align*}
In other words, Agent 1 values the first two items they receive at $c$, but any subsequent item at $1$. 
Any max Nash welfare allocation $X$ allocates two goods to agent $1$ and four goods to agent $2$. For any $g \in X_2$, we have $v_1(X_2 - g) = 2c + 1 > v_1(X_1)$. Therefore, no max Nash welfare allocation is EF1 in this instance.
\end{proof}

\begin{prop}
For any integer $c \ge 2$, there exists an instance under bivalued additive valuations where no leximin allocation is EF1.
\end{prop}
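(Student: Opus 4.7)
The plan is to construct an explicit 3-agent, $(3c+3)$-item bivalued additive instance for which every leximin allocation fails EF1. Let $N = \{1, 2, 3\}$ and $G = \{g_1, \ldots, g_{3c+3}\}$, with $v_1(g_1) = v_2(g_1) = c$, $v_1(g_j) = v_2(g_j) = 1$ for $j \geq 2$, and $v_3(g_j) = c$ for every $j$. All three valuations are additive with values in $\{1, c\}$, so the instance is bivalued additive.

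First, I would pin down the leximin utility vector as $(2c, 2c, 2c)$. Achievability is witnessed by $X_1 = \{g_1, g_2, \ldots, g_{c+1}\}$, $X_2 = \{g_{c+2}, \ldots, g_{3c+1}\}$, $X_3 = \{g_{3c+2}, g_{3c+3}\}$, under which each agent has utility exactly $2c$. For optimality, I would use a tight counting bound: in any allocation with every utility at least $2c$ and $g_1 \in X_1$, agent $1$ must pair $g_1$ with at least $c$ unit items (so $|X_1| \geq c+1$), agent $2$ has only unit items available (so $|X_2| \geq 2c$), and $|X_3| \geq 2$ since $v_3$ is uniformly $c$; these inequalities sum to $3c+3 = m$, forcing equality throughout and hence all utilities equal to $2c$. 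The case $g_1 \in X_2$ is symmetric, while $g_1 \in X_3$ would require $|X_1|, |X_2| \geq 2c$ plus $|X_3| \geq 1$, totaling at least $4c+1 > m$ for $c \geq 2$, and is therefore infeasible. The very same count characterizes every leximin allocation: $g_1$ is held by agent $1$ or agent $2$ (WLOG agent $1$), agent $2$ receives exactly $2c$ unit items, and agent $3$ receives exactly two unit goods.

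Finally, I would verify that EF1 fails for every such allocation. We have $v_3(X_2) = 2c \cdot c = 2c^2$, and for any $g \in X_2$, $v_3(X_2 - g) = 2c^2 - c$. EF1 would demand $v_3(X_3) = 2c \geq 2c^2 - c$, equivalently $c \leq 3/2$, which fails for every integer $c \geq 2$. Hence agent $3$ strictly envies agent $2$ even after removing any single good from $X_2$, and no leximin allocation is EF1. The only real obstacle is locking down the leximin vector and the structural characterization; the choice $m = 3c+3$ is engineered precisely so that the counting bound $(c+1) + 2c + 2 \leq m$ is tight, which simultaneously certifies attainability of $(2c,2c,2c)$ and pins down the rigid structure we exploit in the EF1 failure.
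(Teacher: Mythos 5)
Your construction is correct in spirit and your EF1 computation is sound, but you should be aware that the paper achieves the same result with a much simpler two-agent instance: Agent 1 has $v_1(S) = |S|$, Agent 2 has $v_2(S) = c|S|$, and $m = 2c+2$ goods. The unique leximin utility profile is $(2c, 2c)$, with Agent 2 receiving exactly two goods and Agent 1 receiving $2c$ goods, and Agent 2's envy of Agent 1 exceeds one good exactly as in your calculation ($2c^2 - c > 2c$ for $c \ge 2$). Your three-agent instance with two identical ``mixed'' agents and one ``all-$c$'' agent works on the same underlying principle --- the leximin objective refuses to give the uniformly-high-value agent more than a couple of items, producing a large bundle elsewhere that they envy beyond one good --- so there is no genuinely different idea here, just a heavier construction.

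One genuine arithmetic slip: in the case $g_1 \in X_3$ you bound $|X_3| \ge 1$ and claim $4c + 1 > m = 3c+3$ for $c \ge 2$, but at $c = 2$ this reads $9 > 9$, which is false. The fix is immediate: since $v_3(X_3) \ge 2c$ and every good is worth $c$ to agent $3$, you actually have $|X_3| \ge 2$, giving $4c + 2 > 3c + 3$ for all $c \ge 2$. With that correction the counting argument, the leximin characterization, and the EF1 violation all hold.
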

\begin{proof}
\Cref{ex:simple-leximin-maxnash} can be easily extended to show this claim holds. We have two agents and $2c+2$ goods. Agent 1's valuation is $v_1(S) = |S|$, and Agent 2's valuation is $v_2(S) = c|S|$. Any leximin allocation assigns two items to Agent $2$ and $2c$ items to Agent $1$. Since $c \ge2$, $2c \ge 4$, and Agent $2$ envies Agent 1 beyond one item.  
\end{proof}

\section{Conclusions and Future Work}
In this work, we study fair allocation under bivalued submodular valuations. 
Our insights about this class of valuation functions enable us to use path transfers to compute allocations which satisfy strong fairness and efficiency guarantees. 

As the first work to study bivalued submodular valuations, we believe our results are merely the tip of this iceberg. 
We believe that several other positive results can be shown, specifically with respect to \MMS guarantees. 
It is unknown whether \MMS allocations exist for this class of valuations. 
Prior results show that they indeed exist for the simpler classes of bivalued additive valuations \citep{ebadian2022bivaluedchores} and binary submodular valuations \citep{Barman2021MRFMaxmin}. 
Resolving this problem for bivalued submodular valuations is a natural next step. 

We also believe extending these results and the technique of path transfers beyond bivalued submodular valuations is a worthy pursuit. 
It is unlikely that we will be able to compute optimal \MNW or leximin allocations due to several known intractability results. 
However, we conjecture that it is possible to use a Yankee Swap based method to compute approximate max Nash welfare allocations for more general classes of submodular valuations. 
One specific class of interest is that of trivalued submodular valuations, e.g. the marginal gain of each item is either $0$, $1$ or $c>1$. 
We intend to explore this class in future work.
\bibliographystyle{plainnat}
\bibliography{abb,literature}

\appendix
\newpage

\section{Applying Bivalued Yankee Swap}\label{apdx:applying-bivalued-yankee-swap}

\thmleximin*
\begin{proof}
Formally, $X \succ_{\lexmin} Y$ if and only if $\vec s^X \succ_{\lex} \vec s^Y$ (where $\vec s^X$ denotes the sorted utility vector of the allocation $X$).

It is easy to see that $\lexmin$ satisfies (C1). It is also easy to see that $\phi_\lexmin$ satisfies (G2) and (G3). 
The only property left to show is (G1). 
For any vector $\vec x \in \Z^n_{\ge 0}$, consider two agents $i$ and $j$ and two values $d_i, d_j \in \{1, c\}$. 
Let $\vec y$ be the vector that results from adding $d_i$ to $x_i$ and $\vec z$ be the vector that results from adding $d_j$ to $x_j$. We need to show that $\phi_\lexmin(\vec x, i, d_i) \ge \phi_\lexmin(\vec x, j, d_j)$ if and only if $\vec y \succeq_{\lexmin} \vec z$ with equality holding if and only if $\vec y =_{\lexmin} \vec z$.
\begin{enumerate}[label={\bfseries Case \arabic*:},itemindent=*,leftmargin=0cm]
\item  $x_i = x_j$. 
In this case, $\vec y \succeq_{\lexmin} \vec z$ if and only if $d_i \ge d_j$. 
By our definition of $\phi_\lexmin$, we have $\phi_\lexmin(\vec x, i, d_i) \ge \phi_\lexmin(\vec x, j, d_j)$ if and only if $d_i \ge d_j$ as well. 
If $\vec y =_{\lexmin} \vec z$, it must be that $d_i = d_j$. This occurs if and only if $\phi_\lexmin(\vec x, i, d_i) = \phi_\lexmin(\vec x, j, d_j)$.
\item $x_i < x_j$. 
Increasing the utility of the worse-off agent is always better and we have $\vec y \succ_{\lexmin} \vec z$. 
Since all vectors are integer valued and $d_j < (c+1)$, we have $\phi_\lexmin(\vec x, i, d_i) > \phi_\lexmin(\vec x, j, d_j)$ as well.
\end{enumerate}
\end{proof}

\thmpmean*
\begin{proof}
Let $\pWel$ denote the $p$-mean welfare objective. Formally, $X \succeq_{\pWel} Y$  if and only if one of the following conditions hold:
\begin{enumerate}[(a)]
    \item $|P_X| > |P_Y|$, or
    \item $|P_X| = |P_Y|$ and $\left (\frac1n \sum_{i \in N} v_i(X_i)^p \right )^{1/p} \ge \left (\frac1n \sum_{i \in N} v_i(Y_i)^p \right)^{1/p}$.
\end{enumerate}
It is easy to see that $\pWel$ satisfies (C1). It is also easy to see that $\phi$ satisfies (G2) and (G3). Similar to the previous results, the only property left to show is (G1). For any vector $\vec x \in \Z^n_{\ge 0}$, consider two agents $i$ and $j$ and two values $d_i, d_j \in \{1, c\}$. Let $\vec y$ be the vector that results from adding $d_i$ to $x_i$ and $\vec z$ be the vector that results from adding $d_j$ to $x_j$. We need to show that $\phi_\pWel(\vec x, i, d_i) \ge \phi_\pWel(\vec x, j, d_j)$ if and only if $\vec y \succeq_{\pWel} \vec z$ and equality holds if and only if $\vec y =_{\pWel} \vec z$.
\begin{enumerate}[label={\bfseries Case \arabic*:},itemindent=*,leftmargin=0cm]
\item $x_i \ge x_j = 0$. 
The proof for this case is the same as those of Cases 1 and 2 in Theorem \ref{thm:max-nash-welfare}.
\item $p \in (0, 1)$ and $x_i, x_j > 0$. 
By our assumption, we have $P_{\vec y} = P_{\vec z} = P_{\vec x}$. This gives us:
\begin{align*}
    \vec y \succeq_{\pWel} \vec z
    \Leftrightarrow  \sum_{i \in P_{\vec x}} y_i^p  \ge \sum_{i \in P_{\vec x}} z_i^p 
    \Leftrightarrow \sum_{i \in P_{\vec x}} y_i^p - \sum_{i \in P_{\vec x}} x_i^p \ge \sum_{i \in P_{\vec x}} z_i^p - \sum_{i \in P_X} x_i^p \\
    \Leftrightarrow (x_i + d_i)^p - x_i^p \ge (x_j + d_j)^p - x_j^p \\
    \Leftrightarrow \phi_{\pWel}(\vec x, i, d_i) \ge \phi_{\pWel}(\vec x, j, d_j).
\end{align*}
We can replace the inequality with an equality and the analysis holds.

\item $p < 0$ and $x_i, x_j > 0$. We still have $P_{\vec y} = P_{\vec z} = P_{\vec x}$. Using a similar argument to Case 2, we get:
\begin{align*}
    \vec y \succeq_{\pWel} \vec z 
    \Leftrightarrow  \sum_{i \in P_{\vec x}} y_i^p  \le \sum_{i \in P_{\vec x}} z_i^p 
    \Leftrightarrow \sum_{i \in P_{\vec x}} y_i^p - \sum_{i \in P_{\vec x}} x_i^p \le \sum_{i \in P_{\vec x}} z_i^p - \sum_{i \in P_{\vec x}} x_i^p \\
    \Leftrightarrow (x_i + d_i)^p - x_i^p \le (x_j + d_j)^p - x_j^p \\
    \Leftrightarrow \phi_{\pWel}(\vec x, i, d_i) \ge \phi_{\pWel}(\vec x, j, d_j).
\end{align*}
Again, we can replace the inequality with an equality and the analysis will still hold.
\end{enumerate}
\end{proof}

\section{\MMS Guarantees for Leximin Allocations}\label{sec:leximin-mms}

The proofs here are very similar to those in Section \ref{sec:mnw-mms}. Theorem \ref{thm:leximin} shows that $\Psi$ (corresponding to the leximin objective) satisfies (C1) and admits a gain function that satisfies (G1)---(G3). It is easy to see that the same gain function $\phi$ satisfies (G4) as well. The following Lemma shows that leximin allocations are roughly $c$-weakly dominating.

\begin{lemma}\label{lem:leximin-c-weak-fairness}
For any leximin allocation $X = X^c \cup X^1$, there exists a $c$-weakly dominating allocation $\hat X$ with the same utility vector as $X$.
\end{lemma}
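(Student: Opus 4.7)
The plan is to mimic Lemma~\ref{lem:mnw-1-weak-fairness}: starting from the leximin $X$, I would iteratively resolve (WD1) violations by utility-preserving swaps while a potential $\Phi(X) = \sum_{k \in N} |X^c_k|^2$ strictly decreases, and then argue that any leximin allocation satisfying (WD1) also satisfies (WD2). Since $\Phi$ is integer-valued in $[0, m^2]$, the iteration terminates; the resulting $\hat X$ has the same utility vector as $X$ (hence is also leximin) and is $c$-weakly dominating.

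For a (WD1) violation $(i,j)$ in a leximin $X$ (so $|X^c_j| \ge c|X^c_i| + 2$ and some $g \in X^c_j$ has $\Delta_{v_i}(X^c_i, g) = c$), I split on the size of $|X^1_i|$. If $|X^1_i| < c$, I would simply transfer $g$ from $X^c_j$ to $X^c_i$: arithmetic using the violation inequality together with $|X^1_i| < c$ gives $v_j - v_i \ge c(c-1)|X^c_i| + c + 1 > c$, so after the swap both $v_i + c$ and $v_j - c$ strictly exceed the original $v_i$; a standard sorted-vector argument (the agent-$i$ coordinate at value $v_i$ moves strictly above $v_i$ while the agent-$j$ coordinate remains strictly above $v_i$) yields strict lex-dominance of the new sorted utility vector, contradicting leximin, so this subcase never arises while iterating. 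If $|X^1_i| \ge c$, I would additionally move any $c$-element set $S \subseteq X^1_i$ into $X^1_j$. Using submodularity and the maximality of $X^c_i$ as a clean subset of $X_i$, one checks $\beta_i(Y_i) = |X^c_i| + 1$, so $v_i(Y_i) = v_i(X_i)$ by Observation~\ref{obs:beta-v-relation}. For $j$, one has $\beta_j(Y_j) \ge |X^c_j| - 1$ (removing a single good decreases $\beta_j$ by at most one); equality yields $v_j(Y_j) = v_j(X_j)$, whereas strict inequality yields $v_j(Y_j) > v_j(X_j)$, in which case any choice of $S$ would already contradict leximin, so some choice of $S$ must preserve $v_j$. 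Selecting such an $S$ preserves the utility vector and changes $\Phi$ by $2(|X^c_i| - |X^c_j|) + 2 \le -2(c-1)|X^c_i| - 2 \le -2$.

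For a (WD2) violation $(i,j)$ in any leximin $\hat X$ (so $|\hat X_j| \ge c|\hat X^c_i| + 2$ and some $g \in \hat X^1_j$ has $\Delta_{v_i}(\hat X^c_i, g) = c$), I would derive an immediate contradiction. If $|\hat X^1_i| > 0$, move $g$ into $\hat X^c_i$ and any $h \in \hat X^1_i$ into $\hat X^1_j$; parallel $\beta$-calculations give $v_i$ strictly rising by $c - 1 \ge 1$ while $v_j$ weakly rises (since $\beta_j$ of the new $j$-bundle is either $|\hat X^c_j|$ or $|\hat X^c_j| + 1$), yielding strict sorted lex-dominance. If $|\hat X^1_i| = 0$, the simple transfer $g \colon j \to i$ gives $v_j - 1 \ge |\hat X_j| - 1 \ge c|\hat X^c_i| + 1 = v_i + 1 > v_i$, so the sorted vector again strictly improves. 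The hard part will be the careful $\beta$-bookkeeping used to prove $v_i(Y_i) = v_i(X_i)$ in the extended WD1 swap, together with the by-contradiction argument guaranteeing that some choice of $S$ always preserves $v_j$ while $X$ remains leximin; once these are in hand, all other steps mirror the MNW proof, with leximin's sensitivity to the worst-off coordinate replacing the Nash-welfare product comparison.
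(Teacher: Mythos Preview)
Your proposal is correct and follows essentially the same approach as the paper's own proof: split (WD1) violations into the two cases $|X^1_i|<c$ (impossible for a leximin $X$) and $|X^1_i|\ge c$ (perform the utility-preserving swap and iterate via a sum-of-squares potential), and then show directly that (WD2) violations never occur in a leximin allocation by the two subcases $|X^1_i|>0$ and $|X^1_i|=0$. The paper simply asserts $v_i(Y_i)=v_i(X_i)$ and $v_j(Y_j)\ge v_j(X_j)$ in the swap case, whereas you supply the $\beta$-bookkeeping that justifies them; also, you correctly observe that the potential $\sum_k|X^c_k|^2$ strictly \emph{decreases} under the swap (the paper's text says ``increases,'' but either direction suffices for termination since the potential is integer-valued and bounded).
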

\begin{proof}
This proof is very similar to that of Lemma \ref{lem:mnw-1-weak-fairness}. We divide this proof into two parts, one for each condition:

\metaunderline{\textbf{Part 1:}}
Assume for contradiction that (WD1) does not hold. There exists two agents $i, j \in N$ such that $|X_j^c| \ge c|X_i^c| + 2$ and there exists a good $g \in X_j^c$ such that $\Delta_{v_i}(X^c_i, g) = c$. We construct a clean allocation $Y^c$ starting at $X^c$ and moving $g$ from $X^c_j$ to $Y^c_i$. 

\metaunderline{\textbf{Case 1: $|X^1_i| < c$.}}
Let $Y = Y^c \cup X^1$. Let us compare $X$ and $Y$. We have that $v_i(X_i) < v_i(Y_i)$ and $v_i(X_i) = |X^1_i| + c|X^c_i| < c(|X^c_i| + 1) \le v_j(Y_j)$. This implies that $\vec s^{Y} \succ_{lex} \vec s^{X}$ --- contradicting our assumption that $X$ is leximin. 
Therefore, this case cannot occur.

\metaunderline{\textbf{Case 2: $|X^1_i| \ge c$.}}
In this case, we construct $Y^1$ starting at $X^1$ and moving any $c$ goods from $X^1_i$ to $Y^1_j$. Let $Y = Y^c \cup Y^1$. Note that this is essentially a swap of value, we move a value of at least $c$ from $i$ to $j$ and a value of exactly $c$ from $j$ to $i$. Therefore, $v_i(Y_i) = v_i(X_i)$ and $v_j(Y_j) \ge v_j(X_j)$. This inequality cannot be strict since $X$ is a leximin allocation. Therefore, equality must hold and $Y$ must be a leximin allocation with the same utility vector as $X$. 

Note that every time there is a (WD1) violation, such a swap can be performed. However, this swap can be performed only a finite number of times, since each time the value of $\sum_{i \in N} |Y^c_i|^2$ strictly increases, and this value is upper bounded by $m^2$. As we will show in the next part, (WD2) violations cannot happen, so this is the only swap that needs to be made.

\metaunderline{\textbf{Part 2:}} Assume for contradiction that (WD2) does not hold. That is, there are two agents $i, j \in N$ for which $|X_j| \ge c|X^c_i| + 2$ and there is a good $g \in X^1_j$ such that $\Delta_{v_i}(X^c_i, g) = c$.

Construct allocations $Y^c$ and $Y^1$ starting at $X^c$ and $X^1$ and moving $g$ from $Y^1_j$ to $Y^c_i$. Let us now compare $X$ and $Y = Y^c \cup Y^1$. 

\metaunderline{\textbf{Case 1: $|X^1_i| > 0$.}}
In this case, we update $Y^c$ and $Y^1$ by moving any good from $Y^1_i$ to $Y^1_j$. So, we essentially replace the good in $X^1_j$ and strictly improve the utility of $i$. Therefore, $\vec s^{Y} \succ_{lex} \vec s^{X}$ and $X$ cannot be a leximin allocation.

\metaunderline{\textbf{Case 2: $|X^1_i| = 0$.}}
In this case, we directly compare $Y$ and $X$. We have $v_i(X_i) < v_i(Y_i)$ and $v_i(X_i) = c|X^c_i| < |X_j| \le v_j(Y_j)$. Therefore, $\vec s^{Y} \succ_{lex} \vec s^X$ and $X$ cannot be a leximin allocation.
\end{proof}

Note that that fact that leximin allocations are $c$-weakly dominating prevents us from using Lemma \ref{lem:mnw-mms-relation} directly. We therefore prove alternate, slightly weaker, versions of Lemmas \ref{lem:x-1-i-lowerbound} and \ref{lem:mnw-mms-relation} for leximin allocations. 

Our notation follows from the previous section. We fix an agent $i \in N$ and we define $Y$ as a dominating leximin allocation where all agents have the valuation function $v_i$. We let $X$ be a $c$-weakly dominating leximin allocation.

\begin{lemma}\label{lem:leximin-x-1-i-lowerbound}
$|X^1_i| > \frac{m-|X^c_i| - (n-1)|X^c_i|c}{n} - 1$.
\end{lemma}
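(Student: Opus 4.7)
The plan mirrors the proof of Lemma \ref{lem:x-1-i-lowerbound} specialized to $\Psi = \lexmin$ with weak-dominance parameter $k = c$ (as granted by Lemma \ref{lem:leximin-c-weak-fairness}), while discarding the refined bookkeeping over the set $M$ of agents holding high-valued goods. I will argue by contradiction, assuming
\[
|X^1_i| \le \frac{m-|X^c_i|-(n-1)|X^c_i|c}{n}-1,
\]
rearranging to
\[
\sum_{u\ne i}|X_u| \;=\; m - |X_i| \;\ge\; (n-1)\bigl(|X^1_i| + c|X^c_i| + 1\bigr) + 1
\]
(completeness of $X$ follows from Lemma \ref{lem:weak-x-complete}, since $\lexmin$ satisfies (C1)), and then invoking the pigeonhole principle to obtain some $u \ne i$ with $|X_u| \ge |X^1_i| + c|X^c_i| + 2 = v_i(X_i) + 2$. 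This bound is at least $c|X^c_i| + 2$, which activates both (WD1) and (WD2) for any good in $X_u$.

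I then split on whether $u$ holds any low-valued goods. If $|X^1_u| > 0$, move an arbitrary $g \in X^1_u$ to $i$, forming the allocation $Z$. By (WD2) combined with submodularity, $g$ gives $i$ marginal value $1$, so $v_i(Z_i) = v_i(X_i)+1$, while $v_u(Z_u) = v_u(X_u) - 1 \ge v_i(X_i)+1$. Since $\phi_{\lexmin}$ depends only on the agent's own utility, (G3) yields $\phi_{\lexmin}(X,i,1) > \phi_{\lexmin}(Z,u,1)$, and Lemma \ref{lem:g5} then gives $Z \succ_{\lexmin} X$, contradicting leximin-optimality. If instead $|X^1_u| = 0$, then $X_u = X^c_u$ is clean for $u$, and moving any $g \in X^c_u$ to $i$ yields $v_u(Z_u) = c(|X^c_u|-1) \ge c(|X^1_i|+c|X^c_i|+1) > c\cdot v_i(X_i)$ by Observation \ref{obs:clean-subset}, while (WD1) plus submodularity still forces $v_i(Z_i) = v_i(X_i)+1$. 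The factor-$c$ gap triggers (G4), giving $\phi_{\lexmin}(X,i,1) > \phi_{\lexmin}(Z,u,c)$, and Lemma \ref{lem:g5} again produces $Z \succ_{\lexmin} X$.

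The main obstacle is the $|X^1_u| = 0$ subcase: because leximin is $c$-weakly dominating rather than $1$-weakly dominating, removing a good from $u$ subtracts the full value $c$, so a naive utility gap of only $1$ would not suffice. The key step is to verify that $\phi_{\lexmin}(\vec x, i, d) = -(c+1)x_i + d$ satisfies (G4) (which for integer utility vectors reduces to an elementary algebraic check using $c \ge 2$), and to push the pigeonhole-derived bound on $|X^c_u|$ hard enough that $v_u(Z_u)$ clears the factor-$c$ threshold $c\cdot v_i(X_i)$ required by (G4).
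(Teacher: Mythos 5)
Your proof is correct and follows essentially the same route as the paper's. The paper's proof of this lemma performs the contradiction-hypothesis rearrangement and the pigeonhole step over all of $N - i$, producing an agent $u$ with $|X_u| \ge |X^1_i| + c|X^c_i| + 2$, and then defers the rest with ``the rest of the proof follows very similarly to Lemma~\ref{lem:x-1-i-lowerbound}'' (i.e., the two-case transfer argument using (G3)/(G4) and Lemma~\ref{lem:g5}). You inline exactly that case analysis, and you usefully make explicit two things the paper leaves implicit in the ``follows similarly'' step: because the pigeonhole here runs over $N - i$ rather than $N \setminus (M+i)$ (the set $M$ is only used in the general Lemma~\ref{lem:x-1-i-lowerbound}), one must use (WD1)/(WD2) with $k=c$ to ensure the transferred good gives $i$ marginal value exactly $1$; and one must check that $\phi_\lexmin$ satisfies (G4), which the paper simply asserts in Appendix~\ref{sec:leximin-mms}. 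Your algebra for the factor-$c$ threshold in the $|X^1_u|=0$ case ($v_u(Z_u) \ge c(|X^1_i|+c|X^c_i|+1) > c\,v_i(X_i)$) is correct. One small imprecision: the sentence ``activates both (WD1) and (WD2)'' is slightly overstated --- the bound $|X_u| \ge c|X^c_i|+2$ activates (WD2) directly, but (WD1) concerns $|X^c_u|$ and is only guaranteed when $|X^1_u|=0$ so that $|X^c_u|=|X_u|$; your two cases do invoke the right condition in the right place, so the argument stands.
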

\begin{proof}
Assume for contradiction that this is not true. Then 
\begin{align*}
    |X^1_i| \le \frac{m-|X^c_i| - (n-1)|X^c_i|c}{n} - 1 \implies &(n-1)(|X^1_i| + 1) + |X^1_i| + 1 \le m - |X^c_i| - (n-1)c|X^c_i| \\
    \implies& m - |X_i| \ge 1 + (n-1)(|X^1_i| + c|X^c_i| + 1).
\end{align*}

This implies there are at least $1 + (n-1)(|X^1_i| + c|X^c_i| + 1)$ goods allocated to agents in $N - i$. Using the pigeonhole principle, this implies that at least one agent $u \in N$ receives at least $|X^1_i| + c|X^c_i| + 2$ goods. The rest of the proof follows very similarly to Lemma \ref{lem:x-1-i-lowerbound}.
\end{proof}

\begin{lemma}\label{lem:leximin-y-1-n-x-1-i-compare}
$|Y^1_n| - |X^1_i| \le c + c|X^c_i|$.
\end{lemma}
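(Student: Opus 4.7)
The plan is to combine the upper bound on $|Y_n^1|$ from Lemma~\ref{lem:y-1-n-upperbound} with the lower bound on $|X_i^1|$ from Lemma~\ref{lem:leximin-x-1-i-lowerbound} and simplify. Since $X$ is $c$-weakly dominating and leximin, both lemmas apply, so the rest is algebra together with the observation that $|N'|\le n-1$.

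First, I would split on the two cases in Lemma~\ref{lem:y-1-n-upperbound}. If $\frac{m - n|Y_n^c| - |N'| - c(n-|N'|)}{n} \le 0$, then $|Y_n^1| \le c$, and so
\[
|Y_n^1| - |X_i^1| \;\le\; c \;\le\; c + c|X_i^c|,
\]
and we are done. Otherwise, we may use $|Y_n^1| \le c + \frac{m - n|Y_n^c| - |N'| - c(n-|N'|)}{n}$, and subtracting the lower bound $|X_i^1| > \frac{m - |X_i^c| - (n-1)c|X_i^c|}{n} - 1$ cancels the $m/n$ terms. Collecting the $|Y_n^c|$, $|N'|$, and $|X_i^c|$ contributions yields
\[
|Y_n^1| - |X_i^1| \;<\; 1 - |Y_n^c| + \frac{(c-1)(|N'| - |X_i^c|)}{n} + c|X_i^c|.
\]

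To finish, I would bound the middle term. By definition $n \notin N'$ (since $|Y_n^c| \ne |Y_n^c|+1$), so $|N'| \le n-1$, and consequently $|N'| - |X_i^c| \le n-1$. Hence
\[
\frac{(c-1)(|N'| - |X_i^c|)}{n} \;\le\; \frac{(c-1)(n-1)}{n} \;<\; c-1,
\]
and this bound is trivial when $|N'| - |X_i^c|$ is negative. Combining with $|Y_n^c| \ge 0$ gives $|Y_n^1| - |X_i^1| < c + c|X_i^c|$, and since both sides are integers, the desired weak inequality $|Y_n^1| - |X_i^1| \le c + c|X_i^c|$ follows.

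The argument is essentially bookkeeping, and the only real risk is a sign slip while expanding the subtraction; no new structural idea is required beyond the already-proved Lemmas~\ref{lem:y-1-n-upperbound} and~\ref{lem:leximin-x-1-i-lowerbound}, together with the combinatorial facts $|N'| \le n-1$ and $|Y_n^c| \ge 0$.
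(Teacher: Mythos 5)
Your proof is correct and follows essentially the same route as the paper's: split on whether the $\max$ in Lemma~\ref{lem:y-1-n-upperbound} is zero (giving the trivial $|Y^1_n|\le c$ case), and otherwise subtract Lemma~\ref{lem:leximin-x-1-i-lowerbound} from Lemma~\ref{lem:y-1-n-upperbound} and simplify using $|N'|\le n-1$ and $|Y^c_n|\ge 0$. The only cosmetic difference is how the leftover fractional term is discarded — you bound $\frac{(c-1)(|N'|-|X^c_i|)}{n}<c-1$ directly, while the paper first drops $-\frac{(c-1)|X^c_i|}{n}$ and then absorbs the remaining nonnegative term — but this is the same algebra packaged differently.
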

\begin{proof}
We divide this proof into two simple cases.
\begin{enumerate}[label={\bfseries Case \arabic*:},itemindent=*,leftmargin=0cm]

\item $|Y^1_n| \le c$. We trivially have $|Y^1_n| - |X^1_i| \le c$.



\item Case 1 does not occur.
Directly plugging in Lemmas \ref{lem:y-1-n-upperbound} and \ref{lem:leximin-x-1-i-lowerbound}, we get 
\begin{align*}
    |Y^1_n| - |X^1_i| &\le c + \frac{m - n|Y^c_n| - |N'| - c(n-|N'|)}{n} - \frac{m-|X^c_i| - (n-1)|X^c_i|c}{n} + 1 \\
    &\le c + c|X^c_i| - \left (\frac{n|Y^c_n| + (c-1)(n-|N'|)}{n} \right )\le c + c|X^c_i|.
\end{align*}
\end{enumerate}
\end{proof}

We can now combine these results in a similar way to Theorem \ref{thm:mnw-mms-submodular} to prove an \MMS-guarantee for leximin allocations.
\thmlexmms*
\begin{proof}
Let $X$ be an arbitrary leximin allocation.
Let $\hat X = \hat X^c \cup \hat X^1$ be a $c$-weakly dominating leximin allocation with the same utility vector of $X$ (Lemma \ref{lem:leximin-c-weak-fairness}). If we show that $\hat X$ is $\frac1{c+2}$-\MMS, then since $\hat X$ and $X$ have the same utility vector, we show that $X$ is $\frac1{c+2}$-\MMS as well.

Consider any agent $i \in N$. Construct $Y = Y^c \cup Y^1$ as a dominating leximin allocation where all agents have the valuation function $v_i$. We have $v_i(Y_n) \ge \MMS_i$ by the definition of leximin. 

We have two cases to consider,
\begin{enumerate}[label={\bfseries Case \arabic*:},itemindent=*,leftmargin=0cm]
\item $|\hat{X}^c_i| > 0$. 
We divide this case into two sub-cases using Lemma \ref{lem:x-c-i-lowerbound}.

\begin{enumerate}[label={\bfseries Sub-Case (\alph*):},itemindent=*,leftmargin=0cm]
\item $(c+1)|\hat{X}^c_i| = |Y^c_n|$. 
This case follows from Lemma \ref{lem:x-c-i-lowerbound}.
\begin{align*}
    \MMS_i \le v_i(Y_n) = c|Y^c_n| + |Y^1_n| \le c(c+1)|\hat{X}^c_i| + |\hat{X}^1_i| \le (c+1)v_i(\hat{X}_i).
\end{align*}

\item $(c+1)|\hat{X}^c_i| \ge |Y^c_n| + 1$. 
This case follows from a similar sequence of inequalities. We upper bound $|Y^1_n|$ using Lemma \ref{lem:leximin-y-1-n-x-1-i-compare}.
\begin{align*}
    \MMS_i \le v_i(Y_n) = c|Y^c_n| + |Y^1_n| \le c(c+1)|\hat{X}^c_i| - c + c + c|\hat{X}^c_i| + |\hat{X}^1_i| \le (c+2)v_i(\hat{X}_i).
\end{align*}
\end{enumerate}

\item $|\hat{X}^c_i| = 0$.
This case follows exactly the same way as Theorem \ref{thm:mnw-mms-submodular}.
From Lemma \ref{lem:x-c-i-zero}, we get that $|\hat{X}^1_i| \ge \min\{c, |Y^1_n|\}$ and $|Y^c_i| = 0$.  There are two further subcases here.
\begin{enumerate}[label={\bfseries Sub-Case (\alph*):},itemindent=*,leftmargin=0cm]
\item $|Y^1_n| < c$. 
We get that $|\hat{X}^1_i| \ge |Y^1_n|$. This gives us 
\begin{align*}
    \MMS_i \le v_i(Y_n) = |Y^1_n| \le |\hat{X}^1_i| = v_i(\hat{X}_i).
\end{align*}

\item $|Y^1_n| \ge c$. 
We get that $|\hat{X}^1_i| \ge c$. Plugging this into Lemma \ref{lem:leximin-y-1-n-x-1-i-compare}, we get 
\begin{align*}
\MMS_i \le v_i(Y_n) = |Y^1_n| \le c + |\hat{X}^1_i| \le 2|\hat{X}^1_i| \le 2v_i(\hat{X}_i).
\end{align*}
\end{enumerate}
\end{enumerate}
\end{proof}

\section{Maximin Share Guarantee Upper Bounds}\label{sec:mms-upper-bounds}
In this section, we show examples proving that the analysis in the previous section is tight.

\begin{example}
We start with the max Nash welfare allocation. Consider an instance with $n$ agents (where $n$ is odd) and a set of $n + 0.5c(n-1)$ goods. The goods are divided into three sets:
\begin{inparaenum}[(a)]
    \item $G_1$ consisting of $n$ goods, 
    \item $G_2$ consisting of $n-1$ goods, and 
    \item $G_3$ consisting of the remaining $0.5c(n-1) - (n-1)$ goods.
\end{inparaenum}

We construct agent valuations as follows:
\begin{align*}
    v_1(S) = (c - 1)\max\{|S \cap G_1|, 1\} + (c - 1)\max\{|S \cap G_2|, 1\} + |S| \\
    v_j(S) = (c-1)|S \cap G_2| + |S| && \forall j \in \{2, 3, \dots, \frac{n+1}{2}\} \\
    v_j(S) = (c-1)|S \cap G_1| + (c-1)|S \cap G_3| + |S| && \forall j \in \{\frac{n+3}{2}, \dots, n\}
\end{align*}

Agent $1$'s optimal \MMS partition occurs when the $n$ items in $G_1$ are spread out evenly, the $n-1$ items in $G_2$ are spread out evenly with one agent $j$ missing out, and the items in $G_3$ are spread out as evenly as possible with the agent $j$ receiving $c$ extra items to make up for them not receiving any item from $G_2$. This gives us 
\begin{align*}
    \MMS_1 &\ge 2c + \left \lfloor \frac{0.5c(n-1) - c - (n-1)}{n} \right \rfloor \\
    &\ge 2.5c - \frac{1.5c}{n} + \frac{1}{n} - 2.
\end{align*}

In any max Nash welfare allocation, agent $1$ receives only one item from $G_1$, with all items in $G_2$ being split evenly among the agents $\{2, 3, \dots, \frac{n+1}{2}\}$ and the remaining items in $G_1$ and $G_3$ being split among the agents $\{\frac{n+3}{2}, \dots, n\}$. Thus, agent $1$ only receives a utility of $c$ in the max Nash welfare allocation. With a large $c$ and an even larger $n$, this corresponds to roughly $\frac25$-ths of agent $1$'s maximin share.
\end{example}

\begin{example}
We now move to the leximin allocation. We construct an instance with $n$ agents and $n + cn(n-1)$ goods where agent $n$ has the valuation function $v_n(S) = c|S|$ and all other agents $j \in [n-1]$ have the valuations $v_j(S) = |S|$.
In this instance, any leximin allocation allocates $n$ goods to agent $n$ and $cn$ goods to all other agents. 
The maximin share of agent $n$ is trivially $c[1 + c(n-1)]$. When both $c$ and $n$ are large, agent $n$ receives roughly $\frac1c$-th of their maximin share. 
\end{example}
\end{document}